\newtheorem{proposition}{Proposition}
\newtheorem{lemma}{Lemma} 
\newcommand\indep{\protect\mathpalette{\protect\independenT}{\perp}}
\def\independenT#1#2{\mathrel{\rlap{$#1#2$}\mkern2mu{#1#2}}}
\title[Matching methods for truncation by death problems]{Matching methods for truncation by death problems}
\author[Tamir Zehavi and Daniel Nevo]{Tamir Zehavi and Daniel Nevo}
\address{Tel Aviv University,
Tel Aviv, Israel.}
\email{danielnevo@gmail.com}
\begin{document}

\begin{abstract}
Even in a carefully designed randomized trial, outcomes for some study participants can be missing, or more precisely, ill-defined, because participants had died prior to date of outcome collection. This problem, known as truncation by death, means that the treated and untreated are no longer balanced with respect to covariates determining survival. Therefore, researchers often utilize principal stratification and focus on the Survivor Average Causal Effect (SACE). The SACE is the average causal effect among the subpopulation that will survive regardless of treatment status. In this paper, we present matching-based methods for SACE identification and estimation. We provide an identification result for the SACE that motivates the use of matching to restore the balance among the survivors. We discuss various practical issues, including the choice of distance measures, possibility of matching with replacement, and post-matching crude and model-based SACE estimators.  Simulation studies and data analysis demonstrate the flexibility of our approach. Because the cross-world assumptions needed for SACE identification can be too strong, we also present sensitivity analysis techniques and illustrate their use in real data analysis. Finally, we show how our approach can also be utilized to estimate conditional separable effects, a recently-proposed alternative for the SACE.
\end{abstract}
				
\keywords{principal stratification; average effect on the untreated, survivor average causal effect}

\section{Introduction}
\label{Sec:Intro}

Randomized controlled trials (RCTs) are considered by many the gold standard for  estimating causal effects of treatments or interventions on an outcome of interest. However, even in a carefully designed trial, outcomes for some study participants can be missing, or more precisely,  ill-defined, because participants had died prior to date of outcome collection. This problem is known as ``truncation by death'' \citep{zhang2003estimation,hayden2005estimator,lee2009training, ding2011identifiability,ding2017principal}.

Examples of truncation by death in practice are ubiquitous:  in clinical studies, when the outcome is Quality Of Life (QOL) index one year after initiating treatment for a terminal disease, but some study participants die earlier \citep{ding2017principal,stensrud2022conditional}; in labor economics, when studying the effect of a  training program  on wage, but some remain unemployed (in this example, unemployment is the analogue of death) \citep{lee2009training,zhang2009likelihood}; in education studies, different school programs are evaluated with respect to students' achievements, but some students may drop out  \citep{garcia2010impact}. 

In the presence of  truncation by death,  a naive comparison of the outcomes between the treated and untreated groups among the survivors does not correspond to a causal effect. Intuitively, those survived with the treatment might have died had they did not receive the treatment, hence the two treatment groups  are not comparable, and are unbalanced with respect to covariates determining survival. 
An alternative composite outcome approach combines the two outcome types into one, often by  setting the outcome to zero for those not surviving. This approach estimates a well-defined causal effect, but it entangles treatment effect  on survival with  treatment effect  on the outcome. 

To overcome the lack of causal interpretation for the comparison among the survivors, or the partial information gained by the composite outcome  approach, researchers have utilized principal stratification \citep{robins1986new,frangakis2002principal}  to focus on a subpopulation among which a causal effect is well-defined. The Survivor Average Causal Effect (SACE) is the effect within the subpopulation who would survive regardless of treatment assignment; this subpopulation is often termed the \textit{always-survivors}.    

Identification and estimation of the SACE possess a challenge even in a RCT, as the standard assumptions of randomization and stable unit treatment value assumption (SUTVA) do not guarantee identification from the observed data distribution.   As a result, there is a built-in tension between the strength of the assumptions researchers are willing to make and the degree to which causal effects can be identified from the observed data.  Ordered by the increasing strength of the assumptions underpinning them, three main approaches have been considered for the SACE: constructing bounds \citep{zhang2003estimation,lee2009training,nevo2021causal}, conducting sensitivity analyses \citep{hayden2005estimator,chiba2011simple,ding2011identifiability, chiba2012estimation, ding2017principal,nevo2021causal} and leveraging additional assumptions for full identification \citep{hayden2005estimator,zhang2009likelihood, chiba2011identification, ding2011identifiability,ding2017principal,feller2017principal,wang2017identification}.

The practical use of SACE has been criticized because the underlying subpopulation for which the SACE is relevant can be an irregular subset of the population and  because identification of the SACE relies on unfalsifiable assumptions \citep{robins1986new,stensrud2022conditional}.  The approach of \textit{conditional separable effects} (CSEs) \citep{stensrud2022separable}, which is applicable to a setting more general than truncation by death, can be used as an alternative to the SACE. In the truncation by death setting, CSEs consider a mechanistic separation of the treatment to the component affecting survival and the component affecting the outcome. Interestingly, these effects can be identified even in a trial that only includes the original treatment, without its mechanistic separation.

Nevertheless, in practice, while researchers across a variety of disciplines increasingly recognize the invalidity of the aforementioned naive approaches \citep{mcconnell2008truncation,colantuoni2018statistical},  the novel methods and theory developed by causal inference researchers for truncation by death are not yet fully implemented in practice. Therefore,  methods that are intuitive, simple to use, and with clear practical guidelines are of need.

The goal of this paper is to introduce matching procedures to analyses targeting the SACE or the CSEs.  Traditionally, matching has been used to adjust for pre-treatment differences in observed covariates between the treated and untreated groups in observational studies \citep{stuart2010matching,rosenbaum2011observational,dehejia1999causal,dehejia2002propensity,ho2007matching}.  Here, we propose to use matching to restore the balance originally achieved by randomization, and broken by the differential survival.

With this in mind, the main contributions of the paper are as follows. First, we review and clarify SACE identification assumptions previously considered in the literature and show that weaker assumptions can be used to obtain identification of the SACE. Second, we explain how the obtained observed data functional motivates the use of matching methods. Third, we discuss how matching-based estimation can be used in practice to target the SACE or the CSEs, and discuss for the former how to tailor sensitivity analysis techniques for the presented assumptions when matching is used. Fourth, we demonstrate key ideas for SACE estimation and inference via simulations studies and data analysis.

The rest of the paper is organised as follows. Section \ref{Sec:Prelim} presents notations and the SACE causal estimand. Section \ref{Sec:SACEident} considers assumptions and the SACE identification result motivating the use of matching as described in detail in Section \ref{Sec:Matching}. Section \ref{Sec:Sens} discusses how to tailor sensitivity analysis approaches for the assumptions, when matching is used, and in Section \ref{Sec:CSEs}, we show how matching methods can be adapted to the recently-proposed CSEs.  In Section \ref{Sec:Sims}, we present a simulation study, and in Section \ref{Sec:Data}, we  illustrate the utility of matching for SACE estimation in a real data analysis. Concluding remarks are offered in Section \ref{Sec:Discuss}. Code and data analysis are available from  \texttt{https://github.com/TamirZe/Matching-methods-for-truncation-by-death-problems}.

\section{Preliminaries}
\label{Sec:Prelim}

Using the potential outcomes framework, for each individual $i$ in the population, we let $S_i(a)$ and $Y_i(a)$ be the survival status and the outcome, respectively, under treatment value $a=0,1$, at a pre-specified time after treatment assignment (e.g., one year). The outcome $Y_i(a)$ takes values in $\mathbb{R}$ if $S_i(a)=1$. Truncation by death is introduced by setting $Y_i(a)=^*$ whenever $S_i(a)=0$.  Throughout the paper, we assume SUTVA, namely that there is no interference between units and no multiple versions of the treatment leading to different outcomes. 

With these notations, one can define the following four principal strata \citep{frangakis2002principal}: \textit{always-survivors} ($as$) $\{i: S_i(0)=1, S_i(1)=1\}$,   \textit{protected} ($pro$) $\{i: S_i(0)=0, S_i(1)=1\}$,  \textit{harmed} ($har$) $\{i: S_i(0)=1, S_i(1)=0\}$, and  \textit{never-survivors} ($ns$) $\{i: S_i(0)=0, S_i(1)=0\}$. Let also $G_i$ represent the stratum of unit $i$ and $\pi_g=\Pr(G=g), g \in \{as,pro,har,ns\}$, be the stratum proportions. Because the  causal contrast $Y_i(1)-Y_i(0)$  is well-defined only among the always-survivors, researchers often set the SACE,
$$
\text{SACE} = E[Y(1)-Y(0)|S(0)=S(1)=1],
$$
as the target causal parameter \citep{rubin2006causal}.
 
For a sample of size $n$ from the population, the observed data for each individual $i=1,...,n$ is $(\bX_{0i},A_i,\bX_{1i,}S_i,Y_i)$, where $\bX_{0i}$ is a vector of covariates that are either pre-treatment or unaffected by the treatment; $A_i$ is the treatment indicator ($A_i=1$ for a  treated individual, and $A_i=0$ for an untreated individual);  $\bX_{1i}$ is a vector of post-treatment covariates, that are affected by $A_i$; $S_i$ is the survival status; $Y_i$ is the outcome of interest. From SUTVA it follows that $S_i=A_iS_i(1) + (1-A_i)S_i(0)$, and $Y_i=A_iY_i(1) + (1-A_i)Y_i(0)$. In addition to SUTVA, we assume that treatment was randomly assigned. 
\begin{assumption}
	Randomization $A\indep \{S(a),Y(a),\bX_{0}\}$ for $a=0,1$.
\end{assumption}

SUTVA and randomization suffice to identify the causal effect of $A$ on survival $S$, but not the SACE or the CSEs (defined in Section \ref{Sec:CSEs}). The covariates $\bX_{0i}$ and $\bX_{1i}$ play a key role in the identification of the SACE or the CSEs. Assumptions on these sets of covariates are necessary, because even though treatment was  randomized, truncation by death  forces researchers to consider shared causes of the variables $S$ and $Y$, and their relation to the treatment $A$.

\begin{table}
\caption{\label{Tab:ObsDataStrat} Relationship between the observed data and the strata. Each cell gives the possible stratum membership of a unit observed with $A=a$ and $S=s$, with (right) and without (left) the monotonicity assumption.}
\centering
\fbox{%
\begin{tabular}{*{3}{c}}
& $S=0$ & $S=1$ \\
\hline
$A=0$ & \textit{ns}, \textit{pro} & \textit{as}, \textit{har}\\
$A=1$ & \textit{ns}, \textit{har} & \textit{as}, \textit{pro} \\
\end{tabular}} 
\hspace{1.5cm}
 \fbox{
 \begin{tabular}{*{3}{c}} 
 & $S=0$ & $S=1$\\
 \hline
 $A=0$ & \textit{ns}, \textit{pro} & \textit{as}\\
 $A=1$ & \textit{ns} & \textit{as}, \textit{pro} \\
 \end{tabular}}
 \end{table}

\section{SACE identification}
\label{Sec:SACEident}

Because the SACE  is not point-identified  from the observed data under SUTVA and randomization, additional assumptions have been considered.  Possibly the most common assumption, which seems plausible in certain studies, is monotonicity \citep{zhang2003estimation,chiba2011identification,ding2017principal,feller2017principal,wang2017identification}.
\begin{assumption}
	Monotonicity. $S_i(0) \le S_i(1)$ for all $i$.
\end{assumption}
Monotonicity means that treatment could not cause death, that is, the harmed stratum is empty. Under monotonicity, SUTVA, and randomization, the proportion of always-survivors is identifiable from the data by $\Pr[S(0)=1, S(1)=1]=\Pr(S=1|A=0)$. This proportion is crucial in practice, because it speaks to the proportion of the population for which the SACE is relevant. Table \ref{Tab:ObsDataStrat} presents the possible stratum membership for each unit according to its observed $A,S$ values, with and without the monotonicity assumption.

While the assertion that treatment cannot hurt survival may be reasonable in some studies, the monotonicity assumption is often criticized for being too strong \citep{ding2011identifiability,yang2018using}. To describe a weaker assumption, consider the \textit{principal score} $\pi_g(\bx_0)=\Pr(G=g|\bX_0=\bx_0)$, where $g\in \{as,ns,pro,har\}$ \citep{ding2017principal,feller2017principal}. For example, the probability of being an always-survivor given the covariates $\bX_0$ is $\pi_{as}(\bx_0)=\Pr[S(1)=S(0)=1|\bX_0=\bx_0]$. We now ready to introduce the constant principal score ratio (CPSR) assumption.
\begin{assumption} 
Constant principal score ratio (CPSR). For all possible  $\bx_0$ values,
 $\xi = \frac{\pi_{har}(\bx_0)}{\pi_{as}(\bx_0)}$ for some possibly-unknown $\xi$.
\label{Ass:CPSR} 
\end{assumption}
The CPSR assumption asserts that the ratio  between the proportions of harmed and always-survivors is constant at all levels of $\bX_0$. It is weaker than the monotonicity assumption, under which $\xi=0$. An assumption similar to CPSR was previously considered by \cite{ding2017principal} to construct a sensitivity analysis for monotonicity.

To illustrate the interpretation of the two assumptions, consider the following data generating mechanism (DGM). Given a vector of covariates $\bZ_i$ (that may include an intercept), assume that the potential survival status under treatment level $a=0$ follows the logistic regression
\begin{equation}
\label{Eq:SeqRegModelS0}
\Pr[S_i(0)=1|\bZ_i] = \frac{\exp(\bgamma_{S(0)}^T \bZ_i)} 
{1 + \exp(\bgamma_{S(0)}^T \bZ_i)}.
\end{equation}
Then, among those who would not have survived under $a=0$, the potential survival status under treatment level $a=1$ follows the logistic regression
\begin{equation}
\label{Eq:SeqRegModelS1_given_S0=0}
\Pr[S_i(1)=1|S_i(0)=0,\bZ_i] = \frac{\exp(\bgamma_{S(1)}^T \bZ_i)}{1 + \exp(\bgamma_{S(1)}^T \bZ_i)},
\end{equation}
and among those that would have survived under $a=0$, $\Pr[S_i(1)=1|S_i(0)=1, \bZ_i]=\Pr[S_i(1)=1|S_i(0)=1]$ equals to one under monotonicity and to $\frac{1}{1 + \xi}$ under CPSR. We henceforth call this DGM the ``sequential model''. The monotonicity and CPSR assumptions assert that any heterogeneity in survival under treatment, among those who would have survived under no treatment, is not captured by the covariates associated with survival status when untreated (i.e., $\bZ_i$). Monotonicity further constraints that all of this subset will survive, while CPSR ``allows'' for a proportion $\frac{\xi}{1 + \xi}$ of those to die. This would be the case, for example, in a trial studying treatment for lung cancer, with death under treatment being due to an adverse event, as a result of treatment damaging a different organ. If this adverse effect is only related to a genetic marker unrelated to lung cancer, it may be that this marker is independent of $\bZ_i$ and hence $\Pr[S_i(1)=1|S_i(0)=1, \bZ_i]=\Pr[S_i(1)=1|S_i(0)=1]$ is independent of $\bZ_i$.  An alternative similar DGM is a multinomial regression for the strata with three categories: protected, never-survivors, and always-survivors or harmed (Section B of the Web Appendix). Then, monotonicity or CPSR constraint the proportion of harmed within the always-survivors or harmed group. We focus mainly on the sequential model because we find its interpretation clearer with respect to the causal assumptions.

We turn to assumptions on the relationship between the potential survival and non-survival outcomes.  The first assumption is strongly related to the assumptions known as Principal Ignorability and General Principal Ignorability \citep{jo2009use,feller2017principal,ding2017principal}.  These assumptions, and their variants, state conditional independence  between the potential non-survival outcomes $Y(a)$ and the principal stratum membership, conditionally on covariates.  In this section, we consider an assumption we term partial principal  ignorability (PPI) stating that for study participants who would have survived under treatment, conditionally on $\bX_0$, the outcome under treatment  is not informative about the survival status had those participants were untreated.
\begin{assumption}
Partial Principal Ignorability (PPI). $Y(1)  \indep  S(0) | S(1)=1, \bX_0$.
\end{assumption}
The PPI assumption was considered by \cite{chiba2011identification}, who also showed that the SACE is identifiable under PPI and monotonicity. 
A slightly stronger assumption is Strong Partial Principal Ignorability (SPPI).
\begin{assumption}
Strong Partial Principal Ignorability (SPPI). $Y(a) \indep S(1-a)|S(a)=1,\bX_0$ for $a=0,1$.
\end{assumption}
\cite{hayden2005estimator} combined SPPI with a strong survival status independence assumption $S(0)\indep S(1)|\bX_0$ to obtain identification for the SACE.

To illustrate the meaning of PPI, SPPI and the difference between the two assumptions, consider for example the case $Y_i$ is QOL, and assume that whenever $S_i(a)=1$, the potential QOL is created by
\begin{equation*}
    Y_i(a) = \nu + \tau a + \bbeta_{ax}^T\bX_{0i} + \beta_{au}U_i + \epsilon_{i}(a), \quad a=0,1,
\end{equation*}
where $U_i$ and $\epsilon_i(a)$ are unobserved random variables, independent of all variables introduced so far. If 
$\{S_i(0),S_i(1)\} \indep U_i |\bX_{0i}$, 
then both PPI and SPPI will hold. That will be the case if $U_i$ is, e.g., the baseline functionality of a muscle or group of muscles affecting the success in one of the tasks used to measure QOL, but is not informative about survival (at least not conditionally on $\bX_{0i}$, if we relax $U_i \indep 
\bX_{0i}$). If  $\{S_i(0),S_i(1)\} \ \cancel{\indep} \ U_i |\ \bX_{0i}$ and $U_i$  affects survival and QOL under either treatment status, then PPI and SPPI do not necessarily hold. If, however, for the entire population, the treatment improves the muscle functionality such that the task can be completed successfully, then $\beta_{1u}=0$, $Y_i(1)\indep U_i$ and thus, conditionally on $\{S_i(1)=1,\bX_{0i}\}$,  $Y_i(1)$ holds no information on $S_i(0)$, which is exactly PPI. However, $Y_i(0)$ does contain information about $S_i(1)$, even conditionally on $\{S_i(0)=1,\bX_{0i}\}$. As an additional illustration, we describe in Section A of the Web Appendix a structural equation model under which both PPI and SPPI do not hold due to the post-treatment variables $\bX_1$ being common causes of $S$ and $Y$.

Considering each of the two sets of assumptions, it is clear that CPSR is weaker than monotonicity and PPI is weaker than SPPI. \cite{chiba2011identification} showed that the SACE is identifiable under PPI and monotonicity. Here we extend this result by providing two new insights. First, the same identification result as in \cite{chiba2011identification} holds when PPI is replaced with the stronger SPPI and monotonicity with the weaker CPSR. 
\begin{proposition}
\label{Prop:SACEident}
Under SUTVA, randomization, SPPI and CPSR (for any $\xi$), the SACE is identified from the observed data by 	
\begin{equation}
\label{Eq:SACEident}
\text{SACE}=E_{\bX_0|A=0,S=1}\big[E(Y|A=1, S=1, \bX_0) - E(Y|A=0, S=1, \bX_0)\big].
\end{equation}
\end{proposition}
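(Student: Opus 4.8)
The plan is to decompose the SACE into its two constituent mean potential outcomes, $E[Y(1)\mid S(0)=S(1)=1]$ and $E[Y(0)\mid S(0)=S(1)=1]$, and to identify each separately. The backbone throughout is the observation that monotonicity collapses the always-survivor event: since $S(0)\le S(1)$ we have $S(0)=1\Rightarrow S(1)=1$, so $\{S(0)=S(1)=1\}=\{S(0)=1\}$. I would also record once, for reuse in both terms, that the outer averaging measure over covariates admits an observed-data form. By randomization $(S(0),\bX_0)\indep A$, and under $A=0$ SUTVA gives $S=S(0)$, so the event $\{A=0,S=1\}$ coincides with $\{A=0,S(0)=1\}$ and the law of $\bX_0\mid S(0)=1$ equals the law of $\bX_0\mid A=0,S=1$. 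This is exactly the distribution $E_{\bX_0\mid A=0,S=1}$ appearing in \eqref{Eq:SACEident}.

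For the $Y(0)$ term, after iterating expectations over $\bX_0$ the inner quantity is $E(Y(0)\mid S(0)=1,\bX_0)$. Randomization, $A\indep\{Y(0),S(0),\bX_0\}$, lets me append $A=0$ to the conditioning set without changing the value, and then SUTVA rewrites $Y(0)$ and $S(0)$ as the observed $Y$ and $S$, yielding $E(Y\mid A=0,S=1,\bX_0)$. Combined with the outer distribution above, this recovers the second summand of the target formula.

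The $Y(1)$ term is the crux, and it is where PPI does the essential work. After iterating over $\bX_0$ the inner quantity is $E(Y(1)\mid S(0)=1,\bX_0)$, which conditions on the \emph{untreated} survival status while referencing the \emph{treated} potential outcome---a cross-world object not directly observable. Here I would first use monotonicity again to append the automatically satisfied event $S(1)=1$ to the conditioning set, obtaining $E(Y(1)\mid S(0)=1,S(1)=1,\bX_0)$, and then invoke PPI, $Y(1)\indep S(0)\mid S(1)=1,\bX_0$, to drop the dependence on $S(0)$ and reduce this to $E(Y(1)\mid S(1)=1,\bX_0)$. A final appeal to randomization (append $A=1$) and SUTVA ($Y=Y(1)$, $S=S(1)$ under $A=1$) gives $E(Y\mid A=1,S=1,\bX_0)$, the first summand.

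The main obstacle is precisely this last step: bridging the cross-world conditioning in $E(Y(1)\mid S(0)=1,\bX_0)$ to an observable expectation. Randomization and SUTVA alone cannot do it, because they never relate $Y(1)$ to $S(0)$; the switch from conditioning on $S(0)$ to conditioning on $S(1)$ is exactly the content of PPI, with monotonicity supplying the inclusion $\{S(0)=1\}\subseteq\{S(1)=1\}$ that makes PPI applicable in the first place. Once both terms are written in observed-data form under the common averaging measure $E_{\bX_0\mid A=0,S=1}$, subtracting and collecting them yields \eqref{Eq:SACEident}.
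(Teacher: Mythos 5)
Your proposal is correct and follows essentially the same route as the paper's own proof: the same decomposition into the two potential-outcome means, the same identification of the outer averaging measure (the paper's Lemma 2, proved exactly by your monotonicity--randomization--SUTVA chain), and the same inner-expectation arguments, with monotonicity, PPI, randomization, and SUTVA applied in the identical order for the $Y(1)$ term and monotonicity, randomization, and SUTVA for the $Y(0)$ term. No gaps; the phrasing via collapsing $\{S(0)=S(1)=1\}$ to $\{S(0)=1\}$ and re-expanding before invoking PPI is just a cosmetic reordering of the paper's steps.
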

The proof is given in Section A of the Web Appendix. In this paper, the SACE is studied under randomization. Nevertheless, in Section A of the Web Appendix we extend  Proposition \ref{Prop:SACEident} for scenarios randomization did not take place and conditional exchangeability is a more plausible assumption.
Second, at face value, that the identification formula is the same under PPI and monotonicity, and under SPPI and CPSR, hints a trade-off exists with respect to the different  cross-world assumptions.  However, this trade-off does not exist, at least not for SACE point-identification.   
\begin{proposition}
\label{Prop:AssumptionsImply}
PPI and monotonicity together imply SPPI.
\end{proposition}
The proof is given in Section A of the Web Appendix. By Proposition \ref{Prop:AssumptionsImply}, the combination of SPPI and CPSR is weaker than the combination of PPI and monotonicity. Proposition \ref{Prop:SACEident} means that the SACE is identifiable without assuming a value for $\xi$. 
Nevertheless, assuming a specific value for $\xi$ has two important implications. First, the principal scores $\pi_{g}(\bx_0)$ are only identifiable if $\xi$ is known, which means it cannot be used for estimation of the SACE unless a specific value of $\xi$ is assumed (see Section \ref{Sec:Matching}). Second,  under PPI and CPSR, the SACE is only identifiable if a specific $\xi$ value is assumed (as well as a value for an additional sensitivity parameter, Table A1). We use this to construct a sensitivity analysis for monotonicity under PPI and CPSR as a function of $\xi$ (Section \ref{Sec:Sens}).  Finally, while $\xi$ is not identifiable from the data, in Section A of the Web Appendix we show that whenever $p_0+p_1> 1$, $\xi$ is bounded within the range
$\big[\max(0, \: \frac{p_0 - p_1}{p_1}), \
\: \frac{(1 - p_1)}{p_0 - (1 - p_1)}\big]$, where $p_a=\Pr(S=1|A=a)$. When $p_0+p_1 \le 1$, only the lower bound is applicable.

Estimation of the SACE can be done by standardization \citep{chiba2012estimation}, or weighting using the principal scores \citep{ding2017principal,feller2017principal}. In the next section, we discuss the adaptations needed for matching, a widely-used and well-known approach by practitioners, to estimate the SACE. 

\section{Matching methods for SACE estimation}
\label{Sec:Matching}

Proposition \ref{Prop:SACEident} implies that a consistent SACE estimator can be  obtained by comparing the treated and untreated survivors at each level of $\bX_0$, and then average the estimated differences according to the distribution of $\bX_0$ in the $\{A=0,S=1\}$ group. In Section A of the Web Appendix, we show that under monotonicity or CPSR, ignoring finite-sample variability, the conditional distribution of $\bX_0|A=0,S=1$ is identical to the conditional distribution of $\bX_0|G=as$. Therefore, matching for SACE estimation seeks to create a matched sample with the following properties: 
\begin{enumerate}[Property (a)]
    \item \textit{Survivors only.} The matched sample includes only survivors $S=1$.
    \item \textit{Closeness.} Within each matched pair or group of the matched sample, the values of $\bX_0$ should be close to each other.
    \item \textit{Distribution preservation.} The matched sample should have the distribution of $\bX_0$ as in the $\{A=0,S=1\}$ group. 
\end{enumerate}
The closeness property underpins having the differences between the treated and untreated at each matched pair/group  being sensible estimators of  $E(Y|A=1, S=1, \bX_0) - E(Y|A=0, S=1, \bX_0)$. The distribution preservation property ensures the outer expectation  in  \eqref{Prop:SACEident}  could be correctly estimated nonparametrically.  

In observational studies, we seek to create balance between the treated and the untreated with respect to pre-treatment confounders associated with the treatment and the outcome.  For SACE estimation, we also seek to create balance between the treated and the untreated, but with respect to covariates that are associated with the non-mortality outcome and, due to selection bias caused by survival, are also associated with the treatment.
 
With the three properties in mind, we  propose the following general matching procedure and analysis for SACE estimation, which parallels the use of matching in observational studies.
\begin{enumerate}[Step (i)]
	\itemsep0em 
	\item Identify covariates $\bX_0$ for which SPPI (or PPI) is a reasonable assumption.
	\item Choose a distance measure $\mathcal{D}(\bx_{0i},\bx_{0j})$ determining proximity of two units $i$,$j$ with covariate vectors $\bX_{0i}=\bx_{0i},\bX_{0j}=\bx_{0j}$. 
	\item Create matched pairs or groups by matching  each unit from $\{A=0, S=1\}$ to a unit(s) from  the  $\{A=1,S=1\}$ group while minimizing the total distances within pairs/groups. 
	\item Analyze the matched sample.
\end{enumerate}
Step (i) pertains to SACE identification, discussed in the previous section. Regarding the distance measure (Step (ii)), for the standard use of matching for observational data, three common choices are exact distance (zero if covariate vectors are identical, infinity otherwise), Mahalanobis distance, 
or the difference in the (logit of) the propensity score.  The exact or Mahalanobis distance can also be used for achieving balance on $\bX_0$. However, the number of covariates needed for SPPI (or PPI) to hold can be non-small, making the exact and Mahalanobis distances less attractive as distance measures, especially when there are both discrete and continuous covariates on which one would like to match. 

In observational studies, matching on the propensity score is justified by its balance properties \citep{rosenbaum1983central}.  Recently, analogous results for balance on covariates with respect to different principal strata were derived for functions of the principal scores $\pi_g(\bx_0)$ \citep{ding2017principal,feller2017principal}. 
Let $\widetilde{\pi}^{1}_{as}(\bx_0) = \frac{\pi_{as}(\bx_0)}{\pi_{as}(\bx_0) + \pi_{pro}(\bx_0)}$, and $\widetilde{\pi}^{0}_{as}(\bx_0) = \frac{\pi_{as}(\bx_0)}{\pi_{as}(\bx_0) + \pi_{har}(\bx_0)}$.  
Matching on $\widetilde{\pi}^{1}_{as}(\bx_0)$ seems attractive because given $\widetilde{\pi}^{1}_{as}(\bx_0)$, always-survivors and protected in $\{A=1,S=1\}$ are balanced with respect to $\bX_0$ \citep{ding2017principal}.
Thus, due to randomization, conditionally on $\widetilde{\pi}^{1}_{as}(\bx_0)$, the untreated always-survivors and units in $\{A=1,S=1\}$ are balanced with respect to $\bX_0$. The principal score function $\pi_{g}(\bx_0)$ can be estimated, e.g., using an EM algorithm \citep{ding2017principal}. The details of the EM algorithms for the sequential model and for the multinomial regression model are given in Section B of the Web Appendix. 

Taking the estimated principal score as the sole distance measure, i.e., $\mathcal{D}(\bx_{0i},\bx_{0j})=|\widehat{\widetilde{\pi}}^{1}_{as}(\bx_{0i})-\widehat{\widetilde{\pi}}^{1}_{as}(\bx_{0j})|$, may suffer from disadvantages analogue to those of the propensity score in observational studies, namely, two units with very different covariate vectors can have similar $\widehat{\widetilde{\pi}}^{1}_{as}$. Furthermore, a potential disadvantage shared by both matching and weighting using $\widehat{\widetilde{\pi}}^{1}_{as}(\bx_0)$ is that the analysis relies on correct model specification for the principal scores. This challenge is amplified for principal score models, as they model a latent stratum which is not directly observable for all study participants. A third challenge is that $\widetilde{\pi}^{1}_{as}(\bx_0)$ is only identifiable  if $\xi$ is known. Assuming $\xi=0$ (i.e., monotonicity) or any specific value of $\xi$ enables identification and estimation of  $\widetilde{\pi}^{1}_{as}(\bx_0)$, but the price is that assumptions stronger than what is needed for SACE estimation are imposed (Table A1 in the Web Appendix). Assuming the wrong value for $\xi$ will then result in bias, which is not expected for methods not using the principal scores. We illustrate this point in Section \ref{Sec:Sims}.

A possible compromise is to combine the Mahalanobis distance with a \textit{caliper} \citep{rubin2000combining,stuart2010matching} on the principal score, namely
\begin{equation}
\label{Eq:CaliperDef}
 \mathcal{D}(\bx_{0i},\bx_{0j})=
 \begin{cases}
 (\widetilde{\bx}_{0i}-\widetilde{\bx}_{0j})^T\Sigma^{-1}_{\widetilde{\bX}_0}(\widetilde{\bx}_{0i}-\widetilde{\bx}_{0j}) & |\widehat{\widetilde{\pi}}^{1}_{as}(\bx_{0i})-\widehat{\widetilde{\pi}}^{1}_{as}(\bx_{0j})| \le c\\
    \hfil  \infty& |\widehat{\widetilde{\pi}}^{1}_{as}(\bx_{0i})-\widehat{\widetilde{\pi}}^{1}_{as}(\bx_{0j})| > c
  \end{cases}
\end{equation}
where $\widetilde{\bx}_{0i},\widetilde{\bx}_{0j}$ and $\widetilde{\bX}_0$ are subsets of key covariates from $\bx_{0i},\bx_{0j}$ and $\bX_0$, respectively,  $\Sigma_{\widetilde{\bX}_0}$ is the empirical covariance matrix of $\widetilde{\bX}_0$ and $c$ is a user-specified threshold. 

Turning to Step (iii) above, if all $\{A=0,S=1\}$ are matched, then if the closeness property was achieved, then distribution preservation holds in the matched sample.
However, the number of always-survivors in the $\{A=0, S=1\}$ and $\{A=1,S=1\}$ groups might differ (e.g., if randomization probabilities are unequal). This may harm both closeness if inadequate matches are made, and distribution preservation if not all $\{A=0,S=1\}$ group members are matched. A remedy to this problem is found by considering matching with replacement, so potential always-survivors from $\{A=1,S=1\}$ can be chosen as matches for more than one individual from $\{A=0,S=1\}$. 

\subsection{Analysis of the matched data}
\label{Sec:Analysis}

Upon achieving reasonable balance, researchers can turn their attention to estimation of and inference about the SACE (Step (iv)).  One key use of matching has been as a way to achieve balance so randomization-based inference and estimation can be carried out \citep{rosenbaum2011observational}. Consider the sharp null hypothesis among the always-survivors
\begin{equation}
\label{Eq:SharpNull}
H_0: Y_i(0) = Y_i(1), \quad \forall i \in \{j: S_j(0)=S_j(1)=1\}.
\end{equation}
 The proposed matching framework enables researchers to answer a basic question: is there evidence for a causal effect of the treatment on the non-mortality outcome for at least one person? Failure to reject the null \eqref{Eq:SharpNull} means the data do not support a positive answer to this question.

In a simple 1:1 matched set without replacement, the null  hypothesis \eqref{Eq:SharpNull} can be tested using permutation tests, such as Wilcoxon signed-rank test. 
For 1:1 matching with replacement, the null of no causal effect can be non-parametrically tested by first grouping each treated survivor with all of their untreated matches, and then carrying out the aligned-rank test \citep{hodges1962rank,heller2009matching}. 

We turn to estimation and inference based on the sampling distribution.  For standard observational studies, the average treatment effect is often estimated by the crude difference between the treated and the untreated among the matched sample, and (paired or unpaired) Student's $t$-test is used for inference. However, because the analogue of closeness for matching in observational studies typically cannot be achieved with non-small and/or continuous confounders, bias is expected. \cite{abadie2006large} studied the magnitude of this bias, gave conditions under which the crude difference estimator is consistent, and derived the asymptotic distribution of this estimator.

It is generally recommended to consider covariate-adjusted treatment effects beyond the crude differences in the matched sample \citep{stuart2010matching}. This may both reduce bias from inexact matching, and improve efficiency of estimators. In our case, one may consider the regression model 
\begin{equation}
\label{Eq:LinOutModel}
    E(Y|A,S=1,\bX_0) = \beta_0 + \beta_1A + \bbeta_2^T\bX_0,
\end{equation}
and non-linear terms, interactions and additional variables associated with $Y$ can also be included. Easy to see from \eqref{Eq:SACEident} that under model \eqref{Eq:LinOutModel}, the SACE equals to $\beta_1$. More generally, given a  model for $E_{\btheta}(Y|A,S=1,\bX_0)$, with parameters $\btheta$, we can write 
\begin{equation}
\label{Eq:gtheta}
g(\bX_0,\btheta) = E_{\btheta}(Y|A=1,S=1,\bX_0) - E_{\btheta}(Y|A=0,S=1,\bX_0)
\end{equation}
for some $g$ induced by the model. Note that this model can be either non-parametric, semi-parametric or fully parametric. The analysis of the matched dataset starts with obtaining estimates $\hat{\btheta}$, possibly using weighed estimation when matching is either  1:$k$ or with replacement. Then, the SACE can be estimated by $\widehat{SACE}_{\btheta}=\sum_{i=1}^{n}r_ig(\bx_{0i},\hat{\btheta})$, where $r_i=(1-A_i)S_i/\sum_{j=1}^{n}(1-A_j)S_j$. Consistency of $\widehat{SACE}_{\btheta}$ depends on correctly specifying the model $g(\bX_0,\btheta)$. Asymptotic normality and convergence rates of the estimators depend on the specific model and estimator for $g(\bX_0,\btheta)$. When the model is misspecified, it has been argued that the matching step may reduce the bias due to misspecification \citep{ho2007matching}. 

A general variance estimator for $\widehat{SACE}_{\btheta}$ is unavailable, as variance estimation for post-matching estimators is an active field of research.  Even for the simple linear regression case, only recently \cite{abadie2022robust} have shown that the variance of the estimated coefficients in post-matching regression can be consistently  estimated by clustered standard errors (SEs). Their results are valid only for 1:$k$ matching without replacement. Furthermore,  \cite{abadie2008failure} showed that when matching is implemented with replacement, the standard bootstrap cannot be used.
\cite{abadie2006large} 
derived large sample properties of matching estimators (when matching is with replacement), and proposed consistent variance estimators. They showed that matching estimators are not $\sqrt{n}$-consistent in general, and described the required conditions for the matching estimators to be $\sqrt{n}$-consistent.

As an alternative to the above-described estimators, we also consider a bias-corrected (BC) approach \citep{abadie2011bias}. The BC estimator combines any consistent regression model fitted in the original dataset, with a  mean difference estimator. The idea is that the BC estimator adjusts for the bias   resulting from inexact matching by adding a bias term to the difference of the means. \cite{abadie2011bias} showed that the BC approach yields $\sqrt{n}$-consistent and asymptotically normal estimators for the average treatment effect and the average treatment effect in the treated, even when the outcome model is fitted using nonparametric series regression, as long as the number of parameters increases slowly enough. Their results can be naturally extended to our case, by recognizing that our proposed approach essentially estimates the average effect on the untreated among the survivors. This is reflected by the outer expectation in \eqref{Eq:SACEident} being over $\bX_0|A=0,S=1$. 

The BC estimator for the SACE under 1:1 matching is constructed as follows. Prior to creating the matched sample, a regression estimator $\hat{\mu}_1(\bx_0)$ for $E[Y|A=1,S=1,\bX_0=\bx_0]$ is calculated. Let $j=1,...,M$ index the matched pair, and let $Y_{ja}$ and $\bx_{0,ja}$ be the observed outcome and covariates for the unit with $A=a$ in matched pair $j$. Then, following \cite{abadie2011bias}, let $\tilde{Y}_{j1} = Y_{j1} + [\hat{\mu}_1(\bx_{0,j0}) - \hat{\mu}_1(\bx_{0,j1})]$ be the imputed outcome for the treated unit in matched pair $j$. The BC SACE estimator is
\begin{equation}
\label{Eq:BCdef}
\widehat{SACE}_{BC} = \frac{1}{M} \sum_{j=1}^{M} (\tilde{Y}_{j1} - Y_{j0}).
\end{equation}
As with $\widehat{SACE}_{\btheta}$,  $\widehat{SACE}_{BC}$  will be $\sqrt{n}$-consistent only if the model is correctly specified.

\section{Sensitivity analyses} 
\label{Sec:Sens}

Causal inference identification relies on assumptions that are untestable from the data. Therefore,  methods are often coupled with sensitivity analyses that relax one or more of the identifying assumptions to obtain an identification result as a function of a meaningful, but unknown, sensitivity parameter(s). Then, researchers can vary the value of this parameter to receive a curve of possible estimates as a function of this parameter. Sensitivity analyses of this type are often used for the SACE \citep{hayden2005estimator,chiba2012estimation,ding2017principal}. Additionally, if these parameters can be bounded, bounds for the causal effect of interest can be estimated.

Here, we describe how the sensitivity analyses proposed by \cite{ding2017principal} for their weighting-based methods can be adapted for matching-based estimators. Like \cite{ding2017principal}, our sensitivity parameters are based on the ratio of principal stratum proportions ($\xi$) and the ratio between the mean potential outcomes in different strata (defined below as $\alpha_0$ and $\alpha_1$). Because \cite{ding2017principal} developed a sensitivity analysis for a general principal stratum setup, our choice of sensitivity parameters slightly differs. For example, \cite{ding2017principal} uses the ratio $\frac{\pi_{har}(\bx_0)}{\pi_{pro}(\bx_0)}$ for $\xi$, while we took $\xi$ be the ratio between the harmed and always-survivors (at each level of $\bX_0)$. 

Table A1 in the Web Appendix reviews identifiability of the SACE and of the principal scores, and summarizes the sensitivity parameters needed for identification under each combination of assumptions. Here we focus on sensitivity analysis for PPI/SPPI under monotonicity and for monotonicity under PPI and CPSR. 

The first step of both sensitivity analyses carries out matching on $\bX_0$ among the survivors. The considerations in how to choose the distance measure for this step are the same as described in Section \ref{Sec:Matching}.


\subsection{Sensitivity analysis for PPI/SPPI}
\label{SubSec:SensPPI}

Define  $\mu_{a,g}(\bx_0)=E[Y(a)|G=g, \bX_0=\bx_0]$ 
to be the mean potential outcome under $A=a$ for units within the principal strata $g$ with covariate values $\bx_0$. Consider the sensitivity parameter $\alpha_1=\frac{\mu_{1,pro}(\bx_0)}{\mu_{1,as}(\bx_0)}$.
In words, $\alpha_1$ is the ratio between the mean outcomes under treatment of the protected and of the always-survivors at any value of $\bX_0$.	Because $\alpha_1=1$ under PPI, values further away from one reflect a more substantial deviation from PPI. Whether values larger or smaller than one (or both) are taken for $\alpha_1$ could be based on subject-matter expertise. For example, when higher outcome values reflect better health, the always-survivors might be expected to be healthier than the protected (at each level of $\bX_0$) and hence the sensitivity analysis would focus on $\alpha_1$ values smaller than one. In Section A of the Web Appendix we show that $\alpha_1$ can be bounded by $\frac{\min_{\bx_0} E(Y|A=1,S=1,\bx_0)}{\max_{\bx_0} E(Y|A=1,S=1,\bx_0)}\le \alpha_1 \le \frac{\max_{\bx_0} E(Y|A=1,S=1,\bx_0)}{\min_{\bx_0} E(Y|A=1,S=1,\bx_0)}$. To reduce variance, model-based bounds can be used by fitting a regression model for $E(Y|A=1,S=1,\bx_0)$, and then replacing the minimum and maximum expectations with the minimal and maximal model prediction in the data.
	
The following proposition provides the basis for the proposed sensitivity analysis.
\begin{proposition} \label{Prop:SAppi}
Under SUTVA, randomization, and monotonicity, the SACE is identified from the data as a function of $\alpha_1$ by
\begin{equation*}
E_{\bX_0|A=0,S=1}\left\{\frac{E(Y|A = 1, S=1, \bX_0)}{\widetilde{\pi}^1_{as}(\bX_0) + \alpha_1[1 - \widetilde{\pi}^1_{as}(\bX_0)]} - E(Y|A=0, S = 1, \bX_0)\right\}.
\end{equation*}
\end{proposition}
The proof is given in Section A of the Web Appendix.
Under PPI ($\alpha_1 = 1$) we obtain the same identification formula as in Proposition \ref{Prop:SACEident}. Following Proposition \ref{Prop:SAppi},  matching-based SACE estimation for each  $\alpha_1$ value is as follows.
\begin{enumerate}
\itemsep 0.2cm
\item Estimate the principal scores and subsequently
$\widetilde{\pi}^1_{as}(\bx_0)$ as described in Section \ref{Sec:Matching}.
\item Implement a matching procedure with the chosen distance measure, possibly, but not necessarily, using the estimated $\widehat{\widetilde{\pi}}^1_{as}(\bx_0)$.
\item Estimate $E(Y|A=a,S=1,\bX_0=\bx_0)$ for $a=0,1$. 
\item Replace \eqref{Eq:gtheta} with
\begin{equation*}
g^{\alpha_1}(\bX_0,\btheta) = \frac{E_{\btheta}(Y|A = 1, S=1, \bX_0)}{\widehat{\widetilde{\pi}}^1_{as}(\bx_0) + \alpha_1[1 - \widehat{\widetilde{\pi}}^1_{as}(\bX_0)]} - E_{\btheta}(Y|A=0,S=1,\bX_0),
\end{equation*}
and estimate $\widehat{SACE}^{\alpha_1}_{\btheta}=\sum_{i=1}^{n}r_ig^{\alpha_1}(\bx_{0i},\hat{\btheta}).$
\end{enumerate}
We illustrate the use of this approach in Section \ref{Sec:Data}. Although our proposed sensitivity parameter is identical to the one proposed by \cite{ding2017principal}, and although it requires estimation of  $\widetilde{\pi}^1_{as}(\bx_0)$, it does offer the flexibility of not using it in the matching process, and thus it is expected to reduce, at least to some extent, the dependence of the final results on correct specification of the principal score model.


\subsection{Sensitivity analysis for monotonicity} \label{SubSec:Sensmono}

To develop a sensitivity analysis for monotonicity under PPI, without imposing the stronger SPPI,  we present here an identification formula under PPI and CPSR, depending on two sensitivity parameters. The first is the previously-defined ratio $\xi$.  Larger values of $\xi$ reflect larger divergence from monotonicity. The second is $\alpha_0 = \frac{\mu_{0,har}(\bx_0)}{\mu_{0,as}(\bx_0)}$. 
With a similar logic to the interpretation of $\alpha_1$, the sensitivity parameter $\alpha_0$ represents the relative frailty of those at the harmed stratum compared to the always-survivors, as reflected by the ratio between mean outcome when untreated. Similarly to $\alpha_1$, model-based bounds for $\alpha_0$ can be obtained; see Section A of the Web Appendix.

Unlike the results of \cite{ding2017principal}, our proposed sensitivity analysis does not formally require estimation of the principal scores. However, if one wishes to use $\widehat{\widetilde{\pi}}^1_{as}(\bx_0)$ in the matching process, the principal scores need to be estimated. The following proposition provides the basis for the proposed sensitivity analysis.
\begin{proposition} 
\label{Prop:SAmono}
Under SUTVA, randomization, PPI, and CPSR, the SACE is identified from the data as a function of $\xi$ and $\alpha_0$ by 
\begin{equation*}
 E_{\bX_0|A=0,S=1}\left[E(Y|A = 1, S=1, \bX_0) - \frac{(1 + \xi)}{(1 + \xi\alpha_0)} E(Y|A=0, S = 1, \bX_0)\right]. 
 \end{equation*}
 \end{proposition}
The proof is given in Section A of the Web Appendix.

To utilize Proposition \ref{Prop:SAmono} for a sensitivity analysis, we can repeat the following analysis for different combinations of $(\xi, \alpha_0)$. 
\begin{enumerate}
\itemsep 0.2cm
\item Implement a matching procedure as previously described, with or without estimating the principal scores.
\item Estimate $E(Y|A=a, S=1, \bX_0=\bx_0)$ for $a=0,1$. 
\item Replace \eqref{Eq:gtheta} with
\begin{equation*}
g^{\alpha_0,\xi}(\bX_0,\btheta) = E_{\btheta}(Y|A = 1, S=1, \bX_0) - \frac{(1 + \xi)}{(1 + \xi\alpha_0)}E_{\btheta}(Y|A=0, S=1,\bX_0),
\end{equation*}
and estimate
$$
\widehat{SACE}^{\alpha_0,\xi}_{\btheta}=\sum_{i=1}^{n}r_ig^{\alpha_0,\xi}(\bx_{0i},\hat{\btheta}).
$$
\end{enumerate}
If one chooses to use the principal scores in the matching procedure (e.g., with a caliper on $\widehat{\widetilde{\pi}}^1_{as}(\bx_0)$), the EM algorithm for estimating the principal scores should be revised, and implemented separately for each value of $\xi$. The details are given in Section B of the Web Appendix.

\section{Matching for CSEs and for SACE without PPI}
\label{Sec:CSEs}

CSEs have been recently proposed as an alternative to SACE. For complete motivation, definitions and theory, we refer the reader to \cite{stensrud2022conditional}. Here we provide a summary of key points, before showing how the matching framework can be used for estimating CSEs.

Two key limitations of the SACE are often raised. First, the always-survivors stratum can be a non-trivial subset of the population, and it cannot be known, both for the observed data and for future individuals, who is an always-survivor and who is not. Second, identification of the SACE relies on cross-world assumptions, namely assumptions on the unidentifiable joint distribution of potential outcomes under different intervention values. Such assumptions cannot be tested from the data nor they can be guaranteed to hold by experimental design. Here, both monotonicity and PPI are cross-world assumptions. 

Motivated by these limitations, \cite{stensrud2022conditional} proposed the CSEs approach. A key prerequisite for this approach is having an alternative to the single treatment $A$, represented by two treatments $A_S,A_Y$, with potential outcomes $S(A_S=a_S,A_Y=a_Y)$ and $Y(A_S=a_S,A_Y=a_Y)$, such that joint interventions on $A_S$ and $A_Y$ lead to the same results as interventions on $A$. The latter is formalized by the \textit{modified treatment assumption}, which implies that setting $A=a$ or $A_S=a,A_Y=a$ lead to the same potential outcomes of $S$ and $Y$.

A second critical assumption is $A_Y$ \textit{partial isolation}, that states there are no causal paths between $A_Y$ and $S$. Under this assumption, $$
S(A_S=a_S,A_Y=0)=S(A_S=a_S,A_Y=1):=S(A_S=a_S).
$$ 
Under $A_Y$ partial isolation, the CSEs for $a_S=0,1$ are
\begin{equation}
\label{Eq:CSEdef}
CSE(a_S)=E[Y(a_S, 1) - Y(a_S, 0) | S(a_S)=1]. 
\end{equation} 
In a study of training program effect on employment ($S$) and earnings ($Y$), $A_Y$ can include, for example, modules targeting directly salary negotiation skills and hence are not expected to affect $S$, while $A_S$ represents the rest of the program contents (that affect $S$ and may or may not affect $Y$). A detailed example involving cancer treatment ($A$) and QOL ($Y$) is given by \cite{stensrud2022conditional}.

\cite{stensrud2022conditional} make the point that unlike the members of the principal stratum $\{S(0),S(1)\}$, the members of the conditioning set $S(a_S)=1$ can be observed from the data (under certain assumptions). Furthermore, there are scenarios under which the SACE is not identifiable and the CSEs are. Note that the interpretation of \eqref{Eq:CSEdef} as a direct effect of $A$ on $Y$ is non-trivial, because,  under the modified treatment assumption and $A_Y$ partial isolation, there can still be causal paths between $A_S$ and $Y$ which do not involve $S$. Nevertheless, under the stronger full isolation assumption there are no such paths, and the CSEs retain an interpretation as a direct effect on $Y$. Further discussion of these estimands and their utility can be found in \cite{stensrud2022conditional}. 

The following Proposition provides an identification formula for $CSE(a_S)$, which resembles our identification formula for the SACE, and even coincides with it under certain conditions. It additionally relies on a positivity assumption and dismissible component conditions, all described in Section 7 of \cite{stensrud2022conditional}.
\begin{proposition}
\label{Prop:CSEident}
Under randomization, $A_Y$ partial isolation, the modified treatment assumption, positivity, and the dismissible component conditions
$$
CSE(a_S)=E_{\bX_0,\bX_1|S=1, A=a_S}\bigg[E(Y| A=1, S=1,\bX_0, \bX_1) - E(Y| A=0, S=1,\bX_0, \bX_1)\bigg]
$$
\end{proposition}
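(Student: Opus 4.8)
The plan is to identify the two conditional means that constitute $CSE(a_S)$ one at a time and then subtract. Writing $CSE(a_S)=E[Y(a_S,1)\mid S(a_S)=1]-E[Y(a_S,0)\mid S(a_S)=1]$, it suffices to establish, for each $a_Y\in\{0,1\}$, the single-term identity
$$
E[Y(a_S,a_Y)\mid S(a_S)=1]=E_{\bX_0,\bX_1\mid S=1,A=a_S}\big[E(Y\mid A=a_Y,S=1,\bX_0,\bX_1)\big].
$$
The two instances ($a_Y=1$ and $a_Y=0$) share the same outer averaging distribution, namely $(\bX_0,\bX_1)\mid S=1,A=a_S$, so subtracting them reproduces exactly the claimed formula. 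The crucial and perhaps counterintuitive feature to keep track of throughout is that in each single-term identity the covariate-and-survival distribution is read off the arm $A=a_S$, whereas the inner outcome regression is read off the arm $A=a_Y$, with $a_Y$ possibly unequal to $a_S$.

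First I would condition on the post-treatment history. Using $A_Y$ partial isolation to write $S(a_S)=S(a_S,a_Y)$, the conditioning event is free of $a_Y$, and the covariates entering the survival mechanism are, by the assumed separable structure, functions of $A_S$ alone, which I denote $\bX_1(a_S)$. The law of iterated expectations then gives
$$
E[Y(a_S,a_Y)\mid S(a_S)=1]=E_{\bX_0,\bX_1(a_S)\mid S(a_S)=1}\Big\{E\big[Y(a_S,a_Y)\mid \bX_0,\bX_1(a_S),S(a_S)=1\big]\Big\}.
$$

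Each of the remaining two steps invokes one dismissible component condition of \cite{stensrud2020conditional}. For the inner conditional mean I would use the outcome dismissibility condition, which states that conditionally on $(\bX_0,\bX_1,S)$ the law of $Y$ depends on the joint treatment only through $A_Y$; combined with the modified treatment assumption (so that $A=a_Y$ realizes $A_S=a_Y,A_Y=a_Y$) and consistency, this replaces the inner counterfactual mean by the observable regression $E(Y\mid A=a_Y,S=1,\bX_0,\bX_1)$, thereby legitimizing the cross-arm matching of the outcome mechanism to a survival-and-covariate configuration generated in a different arm. For the outer averaging distribution I would use the survival dismissibility condition together with $A_Y$ partial isolation and randomization: because $S$ and $\bX_1$ respond to treatment only through $A_S$, setting $A=a_S$ induces the same joint law of $(\bX_0,\bX_1,S)$ as the intervention $A_S=a_S$, so the law of $(\bX_0,\bX_1(a_S))$ given $S(a_S)=1$ coincides with the observed law of $(\bX_0,\bX_1)$ given $S=1,A=a_S$. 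Positivity ensures every conditional expectation being equated is well defined on the relevant support. A useful sanity check is the diagonal case $a_Y=a_S$, where both sides collapse to $E(Y\mid A=a_S,S=1)$, and the specialization in which $\bX_1$ is absent, where the formula reduces to the SACE identification of Proposition~\ref{Prop:SACEident}.

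I expect the main obstacle to be the careful bookkeeping of the two dismissible component conditions, and in particular the rigorous justification of the cross-arm substitution in which the outcome regression is taken from arm $a_Y$ while the covariate-and-survival distribution is taken from arm $a_S$. This is precisely the step that departs from the single-treatment, randomization-only argument of Proposition~\ref{Prop:SACEident}, and it is the one that genuinely relies on the extended separable-treatment model rather than on randomization alone; I would discharge it by appealing to the derivations in Section 7 of \cite{stensrud2020conditional} rather than re-deriving the dismissible-component machinery from first principles.
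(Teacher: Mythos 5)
Your proof is correct, but it reaches the result by a genuinely different route than the paper. The paper's proof (Section A.5) takes Theorem 1 of \cite{stensrud2020conditional} as a black box in its weighted-integral form,
$CSE(a_S)=\int_{\bx_0,\bx_1}E(Y|A=a_Y,S=1,\bx_0,\bx_1)\,f(S=1,\bx_1|A=a_S,\bx_0)f(\bx_0)/\Pr(S=1|A=a_S)\,d\bx_0\,d\bx_1$,
and then does only density algebra: by Bayes' rule and randomization (which gives $f(A=a_S,\bx_0)=\Pr(A=a_S)f(\bx_0)$), the weighting factor is recognized as exactly the conditional density $f(\bx_0,\bx_1|A=a_S,S=1)$, and the proposition follows immediately. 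You never pass through that weighted representation: you decompose $CSE(a_S)$ into the two counterfactual means $E[Y(a_S,a_Y)|S(a_S)=1]$, apply iterated expectations over the post-treatment history, and invoke the two dismissible component conditions to identify the inner regression from arm $a_Y$ and the outer covariate-and-survival law from arm $a_S$ --- in effect re-deriving the content of Theorem 1 directly in the conditional-expectation form, while deferring the rigorous dismissibility machinery to Section 7 of \cite{stensrud2020conditional}. Both arguments therefore lean on the same source, but differ in what is taken on faith: the paper treats the final identification formula as given and its own contribution is a three-line change of measure, whereas you treat the dismissible-component lemmas as given and reconstruct the identification around them. The paper's route is shorter and involves no counterfactual bookkeeping; yours makes explicit where each assumption enters --- in particular, that randomization is what equates the law of $(\bX_0,\bX_1(a_S))$ given $S(a_S)=1$ with the observed law of $(\bX_0,\bX_1)$ given $A=a_S,S=1$, a step the paper compresses into the single line $f(A=a_S,\bx_0)=\Pr(A=a_S)f(\bx_0)$ --- and it cleanly isolates the cross-arm structure (inner regression from arm $a_Y$, outer law from arm $a_S$) that makes the result usable for matching. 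One caveat: your claim that $\bX_1$ is a function of $A_S$ alone is an informal consequence of partial isolation (any component of $\bX_1$ causing $S$ cannot be affected by $A_Y$); this is handled rigorously in the source you cite, so it is an acceptable deferral, comparable to the paper's own reliance on Theorem 1.
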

The proof follows from Theorem 1 of \cite{stensrud2022conditional}, and a few more lines given in Section A of the Web Appendix. Proposition \ref{Prop:CSEident} resembles the identification result for the SACE presented in \eqref{Prop:SACEident}. Note that for estimation of $CSE(a_S=0)$, the process is identical to SACE estimation by matching, with the only change is being that balance should be achieved with respect to both $\bX_0$ and $\bX_1$. For the matching-based approach for estimation of $CSE(a_S=1)$, the distribution preservation property is revised to have the distribution of $(\bX_0,\bX_1)$ in the matched set to be the same as in $\{A=1,S=1\}$.  


\section{Simulation studies}
\label{Sec:Sims}

We conducted simulation studies to assess the performance of the matching-based approach and to compare the proposed estimators to naive approaches and to the weighting-based method \citep{ding2017principal}.  The number of simulation repetitions was 1,000 for each simulation scenario.  The sample size of each simulated dataset was 2,000. The \textbf{R} package \texttt{Matching} \citep{sekhon2011Matching} was used for the matching process. Technical details, simulation parameters, and additional results are given in Section C of the Web Appendix.

\subsection{Data generating mechanism}
\label{SubSec:DGM}

For each unit $i$, a covariate vector of length $k$ was simulated from the  multivariate normal distribution, $\bX_{0i} \sim N_k({\boldsymbol{0.5}}_k,  \bI_k)$,
where ${\boldsymbol{0.5}}_k$ is a vector of length $k$ with entries 0.5, and $\bI_k$ is the identity matrix of dimension $k$. The stratum $G_i$ was generated according to the sequential logistic regression model \eqref{Eq:SeqRegModelS0}--\eqref{Eq:SeqRegModelS1_given_S0=0} with $\bZ_i=(1, \bX_{0i})$. We set $\Pr[S_i(1)=1|S_i(0)=1,\bx_{0i}] = \frac{1}{1+\xi}$ for a given $\xi$ under CPSR and set $\xi=0$ under monotonicity. In an additional simulation study, we considered the multinomial regression model for $G_i$ (Section B and Tables C4--C5 of the Web Appendix).

For the always-survivors, the potential outcomes $\{Y(0),Y(1)\}$ were generated according to linear regression models with means $E[Y_i(a)|S_i(a)=1, \bX_{0i}=\bx_{0i}]=\beta_{0,a} + \bbeta^T_{a}\bx_{0i}$, $a=0,1$, and  additive correlated normal errors $\{\epsilon_i(0),\epsilon_i(1)\}$ with zero mean, unit variance, and correlation $0.4$. For the protected or harmed, only $Y(1)$ or $Y(0)$, respectively, were generated, and for the never-survivors, both potential outcomes were truncated by death. For the linear regression models, we considered scenarios including all $A$-$\bX_0$ interactions and scenarios without including any interaction. The treatment assignment $A_i$ was randomized with probability $0.5$. Finally, the observed survival status and outcomes were determined by  $S_i = A_iS_i(1) + (1 - A_i)S_i(0)$, and $Y_i = A_i Y_i(1) + (1 - A_i)Y_i(0)$.


\subsubsection{Scenarios}
\label{SubSubSec:Scenarios}

To have a fair assessment of the finite-sample performance of the different methods, we considered a variety of scenarios under different number of covariates, stratum models and outcome models. In Scenario A, the always-survivors stratum comprised 50\% of the population. In Scenario B, the always-survivors stratum  comprised 75\% of the population.
In each of these scenarios, we considered the option of relatively high versus relatively low proportion of protected. These scenarios were created by choosing different values for the sequential logistic regression model coefficients $\bgamma_{S(a)}, a=0,1$ (Tables C2 and C3). The coefficient values  were chosen to obtain the desired stratum proportions.
We also considered different values for $\xi \in \{0,0.05,0.1,0.2\}$.

In reality, the functional forms of the principal score and the outcome models are unknown to the researchers. Therefore, we conducted simulations under misspecification of the principal score model and/or the outcome model.
When misspecified, the true outcome model included squared and exponential terms for two of the covariates, and the true principal score model included such terms and an interaction term of two covariates.


\subsection{Analyses}
\label{SubSec:SimsAnalyses}

For each of the simulated datasets, we calculated the two naive estimators -- the mean difference in the survivors, and the mean difference in the composite outcomes. 
For the matching approach, we followed the procedure described in Section \ref{Sec:Matching}, by matching every untreated survivor to a treated survivor. For the matching-based estimators,
we compared matching on $\widehat{\widetilde{\pi}}^1_{as}(\bx_0)$; matching on the Mahalanobis distance;  and matching on Mahalanobis distance with a caliper on $\widehat{\widetilde{\pi}}^1_{as}(\bx_0)$, taking $c$ (Equation \eqref{Eq:CaliperDef}) to be 0.25 standard deviations (SDs) of the estimated $\widehat{\widetilde{\pi}}^1_{as}(\bx_0)$. An EM algorithm was used to estimate
$\widehat{\widetilde{\pi}}^1_{as}(\bx_0)$ under the sequential logistic regression model.
For each of the above options, we considered matching with and without replacement of the treated survivors. Of note is that in practice, as we also illustrate in Section \ref{Sec:Data},
one chooses the distance measure that achieves the best balance, so our assessment of the matching framework approach is likely to be pessimistic. 

In the matched datasets, we considered the following estimators:  crude difference estimators, linear regression estimators without and with all $A-\bX_0$ interactions, and the BC matching estimator $\widehat{SACE}_{BC}$ \eqref{Eq:BCdef} (with linear regression for $\hat{\mu}_1(\bx_0)$). The latter is implemented only for matching with replacement \citep{sekhon2011Matching}. Ordinary least squares (OLS) was used when matching was without replacement and weighted least squares (WLS) when matching was with replacement.
 
For matching without replacement, we used the simple SE estimator for the crude differences and clustered SEs for the OLS \citep{abadie2022robust}.
For matching with replacement, we used SE estimators accounting for weights for the crude and BC estimators  \citep{abadie2006large,abadie2011bias,sekhon2011Matching} and clustered weighted SEs for the WLS estimator.


We also considered the model-assisted weighting-based estimator of \cite{ding2017principal} (DL), which uses the estimated principal scores. This estimator was found by the authors to be more efficient than the simple weighting estimator.

As indicated in Section \ref{SubSubSec:Scenarios}, we considered scenarios with misspecification of the functional form of the principal score model and/or the outcome model. Additionally, for analyses using the principal scores, we repeated the analyses under different (possibly incorrect) specified values for $\xi$, denoted by $\xi_{assm} \in \{0,0.05,0.1,0.2\}$.



\subsection{Results}

We first focus on the case the true outcome model included all $A-\bX_0$ interactions. Figure \ref{Fig:biasS1withInterDGMseq} presents the bias of selected matching estimators and of the DL estimator, under monotonicity.
Table \ref{Tab:SimRes} presents more estimators and more detailed results on their performance for Scenarios A and B with $k=5$ covariates and low $\pi_{pro}$. 

When both the principal score and the outcome models were correctly specified, all matching-based estimators and the DL estimator had low or negligible bias in comparison to the naive estimators (Figure \ref{Fig:biasS1withInterDGMseq} and Table \ref{Tab:SimRes}). The SEs of the regression-based post-matching estimators were generally well estimated (Table \ref{Tab:SimRes}).


Interestingly, under the sequential regression model, misspecification of the principal score model, but not the outcome model, resulted in minimal bias, for both matching and weighting using $\widehat{\widetilde{\pi}}^1_{as}(\bx_0)$. Under a misspecified multinomial regression model for the principal strata, more substantial bias was observed for the DL estimator than for matching using Mahalanobis distance with a caliper on $\widehat{\widetilde{\pi}}^1_{as}(\bx_0)$ (Figure C6 and Table C13).
When only the outcome model was misspecified,
methods that use $\widehat{\widetilde{\pi}}^1_{as}(\bx_0)$ were unbiased, while the model-based estimators after  matching on Mahalanobis distance (with or without a caliper) were biased (Figure \ref{Fig:biasS1withInterDGMseq}). 
Under misspecification of both the principal score and the outcome models, all methods showed bias under certain scenarios, with larger bias observed as the number of coefficients grew. A relatively robust estimator was the OLS estimator which followed matching without replacements on Mahalanobis distance with a caliper (Figure \ref{Fig:biasS1withInterDGMseq} and Table \ref{Tab:SimRes}). 

The empirical SDs of all estimators have increased as $k$ increased, and when the functional form of the outcome model was misspecified. Under correctly-specified outcome model, the SD of the post-matching model-based estimators were comparable to the SD of  the DL estimator.
Under a misspecified outcome model, the SD of the DL estimator was larger than those of Mahalanobis post-matching estimators.

We turn to discuss simulations under $\xi>0$, namely when monotonicity does not hold. When the chosen value of $\xi_{assm}$ was correct ($\xi_{assm}=\xi$), the performance of nearly all estimators was not affected by the true $\xi$ value (Figure C1).

The post-matching model-based estimators were robust to wrongfully assumed $\xi$ values, even when the principal score model was misspecified, as long as the functional form of the outcome model was correctly specified. In this case, the DL estimator was more sensitive to wrong $\xi$ values, with some bias observed for larger $\xi$ values (Figure C2). 

Under a misspecified outcome model and correctly-specified principal score model (but wrong $\xi_{assm}$), the weighting and matching based on $\widehat{\widetilde{\pi}}^1_{as}(\bx_0)$ were generally less biased than estimators that followed matching on Mahalanobis (Figure C2). When both models were misspecified, results were qualitatively similar to the case where $\xi_{assum}=\xi$ (Figure C1 and Figure C2). That taking $\xi_{assm} \ne \xi$ impact only methods that use $\widehat{\widetilde{\pi}}^1_{as}(\bx_0)$ is not surprising, as methods that do not use the principal scores at all do not specify $\xi$. 

\begin{figure}
\centering
\caption{\footnotesize{Bias of different estimators, when monotonicity holds, under correctly specified models (top left), misspecified principal score model (top right), misspecified outcome model (bottom  left), and  both models misspecified (bottom right). 
Matching was on the Mahalanobis distance without (Mahal) or with a caliper (Mahal caliper), or on $\widehat{\widetilde{\pi}}^1_{as}(\bx_0)$ (PS).
WLS: weighted least squares;
OLS: ordinary least squares; 
Crude: Crude mean difference.
Weighting (DL): model-based weighting estimator of \cite{ding2017principal}.
True SACE parameter ranged between $2.49-3.40$ for $k=3$, $5.00-5.77$ for $k=5$, and $7.92-8.93$ for $k=10$ under the first outcome model (with the original covariates), and between $-0.01-1.6$ for $k=3$, $10.91-12$ for $k=5$, and $13.72-15.08$ for $k=10$ under the second outcome model (with the transformed covariates).
\label{Fig:biasS1withInterDGMseq}}} \begin{minipage}{.45\textwidth}
\includegraphics[scale=0.3]{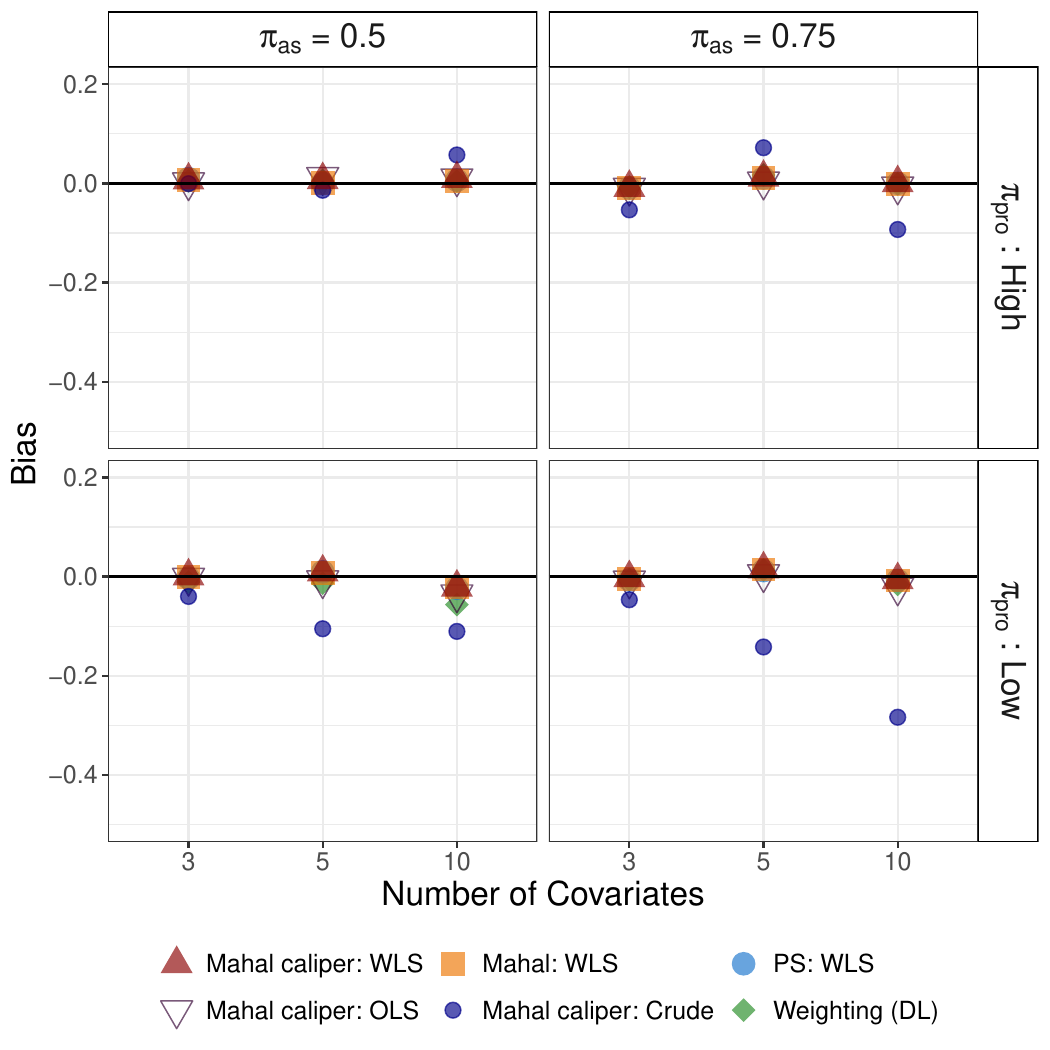}
\end{minipage}%
\begin{minipage}{.45\textwidth}
\centering
\includegraphics[scale=0.3]{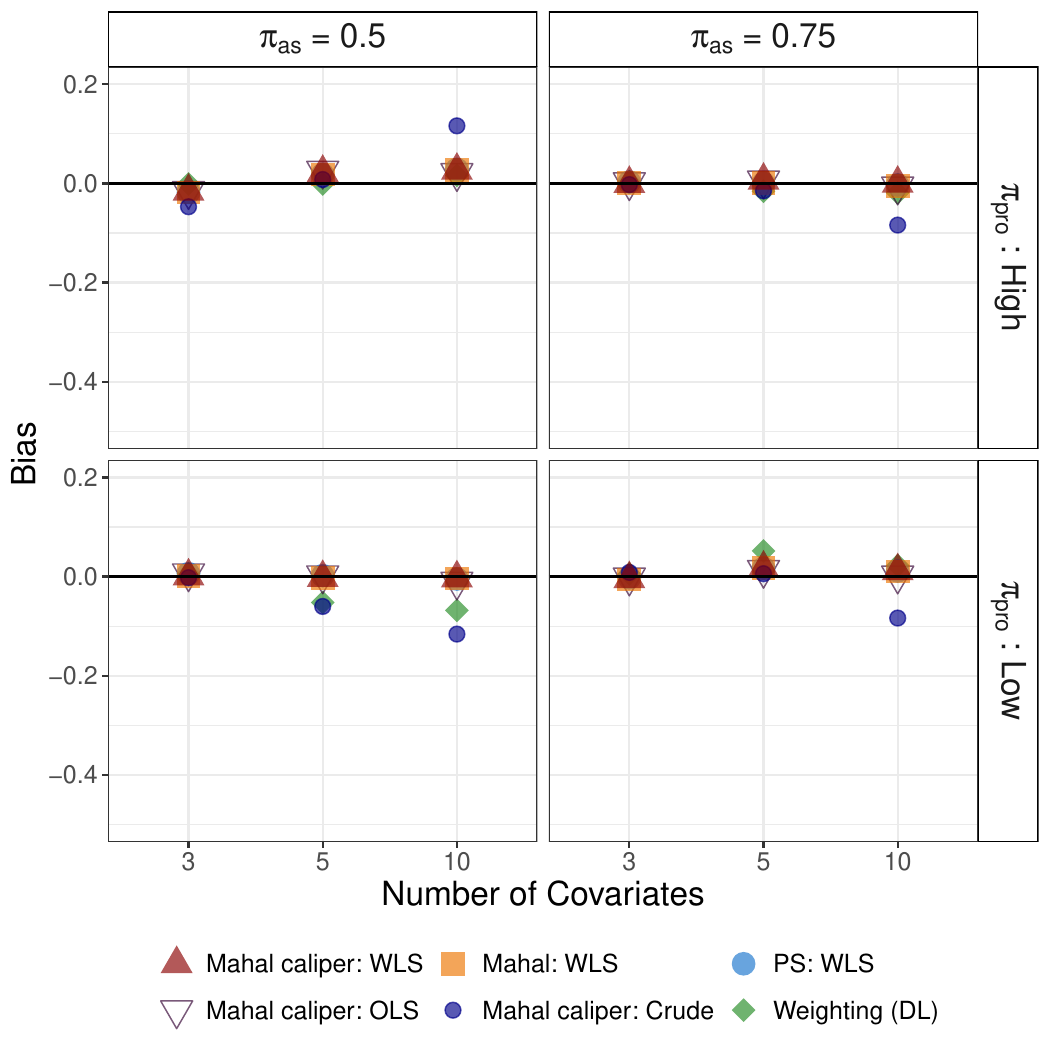}
\end{minipage}

\begin{minipage}{.45\textwidth}
\includegraphics[scale=0.3]{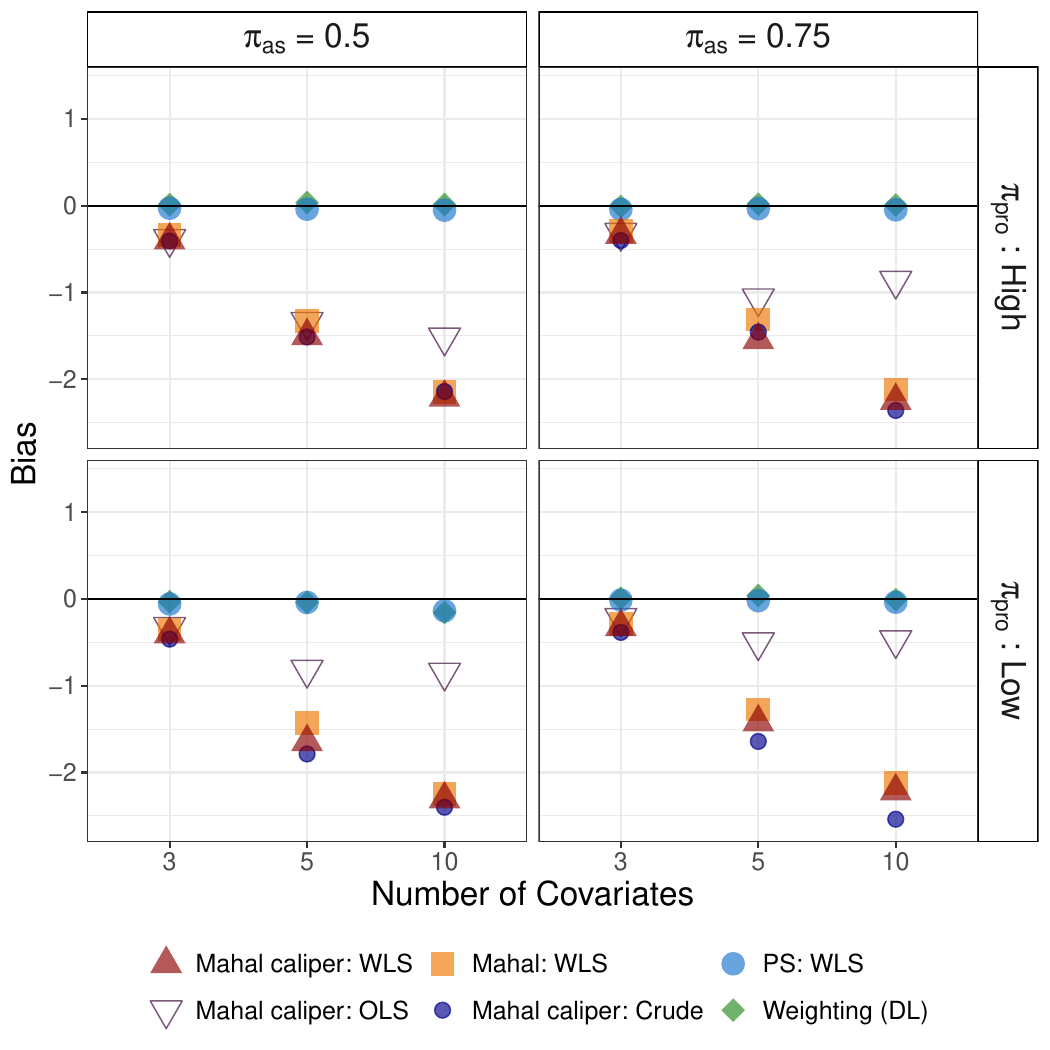}
\end{minipage}%
\begin{minipage}{.45\textwidth}
\centering
\includegraphics[scale=0.3]{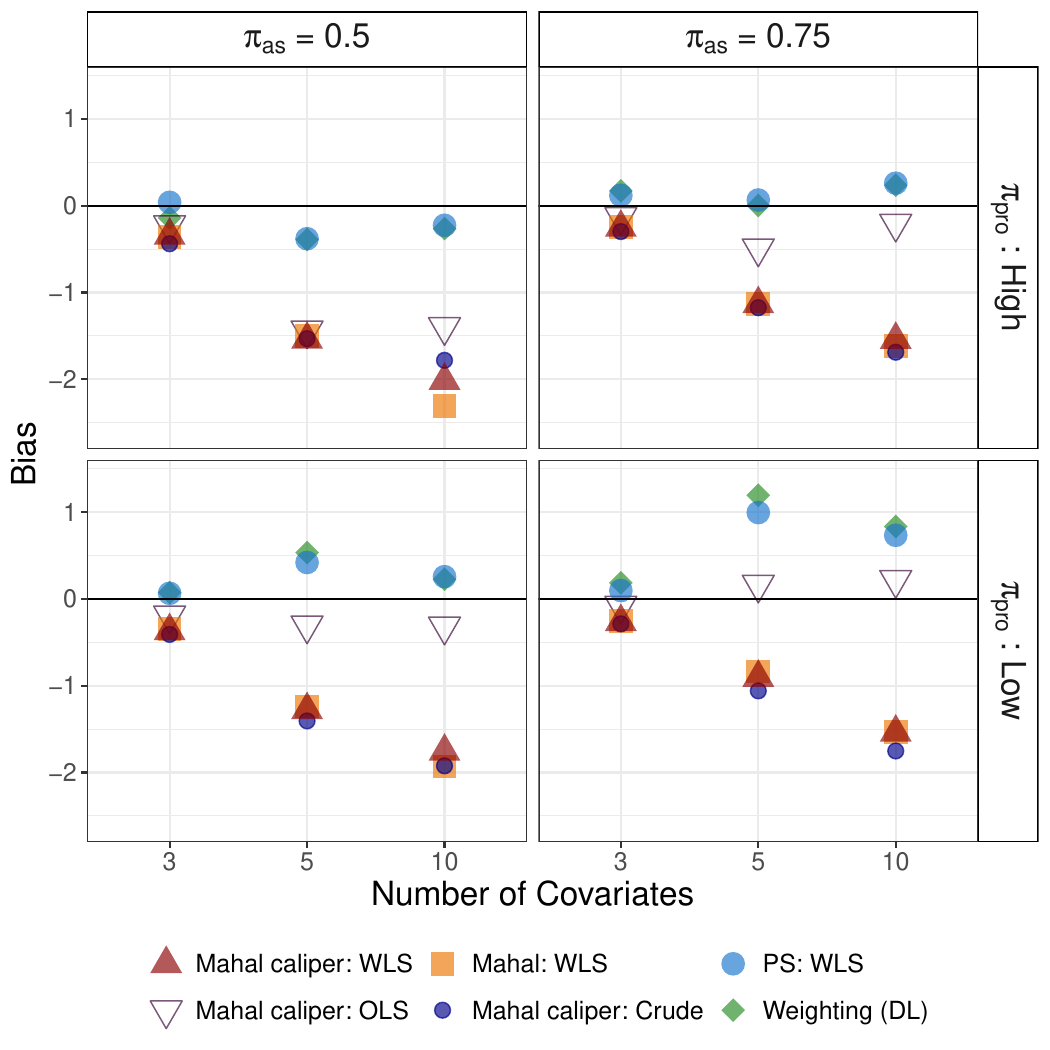}
\end{minipage}
\end{figure}

\begin{table}
\caption{\label{Tab:SimRes} \footnotesize Selected simulation results under monotonicity, low $\pi_{pro}$ and $k=5$ covariates.
Results presented for matching on $\widehat{\widetilde{\pi}}^1_{as}(\bx_0)$ (PS) or using Mahalanobis distance with a caliper (cal).
OLS: ordinary least squares; WLS: weighted least squares;
BC: bias-corrected estimator;
DL: model-based weighting estimator; Emp.SD: empirical standard deviation. Est.SE: estimated standard error; MSE: mean square error; CP95: empirical coverage proportion of 95\% confidence interval.}
\centering
\fbox{
\scriptsize
\begin{tabular}{lccccccccccc}
& \multicolumn{5}{c}{Correctly specified principal score and outcome models} & \multicolumn{5}{c}{Misspecified principal score and outcome models} \\
\em Method & \em Estimator &  \em Mean & \em Emp.SD & \em Est.SE & \em MSE & \em CP95 & Mean & Emp.SD & Est.SE & MSE & CP95 \\ \hline
& & & & & & & & & \\
& \multicolumn{5}{c}{\textbf{Scenario A, SACE = 5}} & \multicolumn{5}{c}{\textbf{Scenario A, SACE = 10.91}} \\
\hline
Matching 
& Crude:cal & 4.92 & 0.23 & 0.20 & 0.06 & 0.90 & 10.16 & 0.54 & 0.44 & 0.85 & 0.54 \\ 
& Crude:PS & 4.90 & 0.30 & 0.28 & 0.10 & 0.90 & 10.74 & 0.67 & 0.69 & 0.48 & 0.95 \\ 
& OLS:cal & 4.99 & 0.17 & 0.18 & 0.03 & 0.97 & 10.61 & 0.57 & 0.42 & 0.41 & 0.76 \\
& OLS:PS & 5.00 & 0.17 & 0.18 & 0.03 & 0.96 & 11.36 & 0.56 & 0.53 & 0.51 & 0.88 \\ 
Matching 
& Crude:cal & 4.89 & 0.19 & 0.18 & 0.05 & 0.90 & 9.51 & 0.38 & 0.34 & 2.12 & 0.04 \\ 
with & Crude:PS & 5.03 & 0.33 & 0.27 & 0.11 & 0.90 & 10.71 & 0.93 & 0.68 & 0.90 & 0.83 \\ 
replacement & WLS:cal & 5.00 & 0.18 & 0.18 & 0.03 & 0.96 & 9.63 & 0.35 & 0.44 & 1.76 & 0.15 \\ 
& WLS:PS & 5.01 & 0.18 & 0.18 & 0.03 & 0.95 & 11.33 & 0.71 & 0.68 & 0.68 & 0.94 \\ 
& BC:cal & 5.00 & 0.18 & 0.22 & 0.03 & 0.98 & 9.63 & 0.35 & 0.41 & 1.76 & 0.11 \\ 
\multicolumn{2}{c}{Composite}  & 4.93 & 0.64 & 0.64 & 0.41 & 0.95 & 10.76 & 0.88 & 0.88 & 0.80 & 0.94 \\ 
\multicolumn{2}{c}{Naive}  
& 3.94 & 0.42 & 0.41 & 1.28 & 0.27 & 10.98 & 0.67 & 0.69 & 0.45 & 0.96 \\ 
\multicolumn{2}{c}{DL}  & 4.98 & 0.15 &  & 0.02 &  & 11.45 & 0.56 &  & 0.60 &  \\ 
& \multicolumn{5}{c}{\textbf{Scenario B, SACE = 5.77}} & \multicolumn{5}{c}{\textbf{Scenario B, SACE = 11.38}} \\
 \hline
& Crude:cal & 5.70 & 0.22 & 0.16 & 0.05 & 0.83 & 11.37 & 0.57 & 0.39 & 0.33 & 0.81 \\ 
& Crude:PS & 5.64 & 0.32 & 0.21 & 0.12 & 0.80 & 12.44 & 0.61 & 0.56 & 1.49 & 0.52 \\  
& OLS:cal & 5.78 & 0.14 & 0.14 & 0.02 & 0.96 & 11.55 & 0.52 & 0.36 & 0.29 & 0.82 \\
& OLS:PS & 5.79 & 0.14 & 0.14 & 0.02 & 0.96 & 12.41 & 0.46 & 0.45 & 1.27 & 0.36 \\ 
Matching 
& Crude:cal & 5.63 & 0.16 & 0.14 & 0.04 & 0.82 & 10.32 & 0.33 & 0.30 & 1.23 & 0.10 \\
with & Crude:PS & 5.78 & 0.25 & 0.20 & 0.06 & 0.90 & 12.51 & 0.79 & 0.55 & 1.90 & 0.46 \\ 
replacement & WLS:cal & 5.79 & 0.15 & 0.15 & 0.02 & 0.95 & 10.47 & 0.32 & 0.37 & 0.92 & 0.31 \\
& WLS:PS & 5.78 & 0.14 & 0.15 & 0.02 & 0.95 & 12.38 & 0.56 & 0.56 & 1.31 & 0.59 \\ 
& BC:cal & 5.79 & 0.15 & 0.17 & 0.02 & 0.98 & 10.47 & 0.32 & 0.36 & 0.92 & 0.28 \\
\multicolumn{2}{c}{Composite}
& 6.56 & 0.56 & 0.57 & 0.93 & 0.73 & 15.96 & 0.78 & 0.76 & 21.55 & 0.00 \\ 
\multicolumn{2}{c}{Naive}  
& 4.50 & 0.33 & 0.33 & 1.73 & 0.02 & 11.80 & 0.59 & 0.57 & 0.52 & 0.88 \\
\multicolumn{2}{c}{DL}  
& 5.79 & 0.12 &  & 0.01 &  & 12.57 & 0.44 &  & 1.62 &  \\
\end{tabular}}
\end{table}

The empirical coverage rates of the $95\%$ confidence intervals of the regression post-matching estimators were close the desirable level, and were robust to wrong $\xi$ values, as long as the outcome model was correctly specified.
When the outcome model was misspecified, the coverage rates of the $95\%$ confidence intervals for estimators that followed matching  on Mahalanobis distance were dramatically lower when $k=5,10$.
In these situations, coverage rates of the regression estimators after matching on $\widehat{\widetilde{\pi}}^1_{as}(\bx_0)$ were noticeably closer to the desirable level than other matching estimators.

When the true outcome model did not include $A-\bX_0$ interactions (Figures C3--C5), the results were qualitatively similar, with the absolute bias being typically lower.
\section{Illustrative data example}
\label{Sec:Data}

The National Supported Work (NSW) Demonstration was an employment training program, carried out in the United States during the early 1970s. The NSW aimed to provide work experience for disadvantaged workers and to help them acquire capabilities required at the labor market. Eligible candidates were randomized to participate in the program or to not receive assistance from the program. Further details about the program can be found elsewhere \citep{lalonde1986evaluating,dehejia1999causal}. Descriptive data, and additional details and results are given in Section D of the Web Appendix.
The dataset consists of $722$ participants in total, of which $297$ (41\%) participated in the program and $425$ (59\%) did not. 

The outcome of interest was earnings (wage) in 1978. However, a challenge in considering the effect of the training program on earnings is that only $526$ participants ($73\%$) were employed during 1978, $230$ ($77\%$) in the treated  and $296$ ($70\%$) in the untreated. Thus, unemployment at 1978 represents ``truncation by death''. Table D14  provides the number and proportion of participants within the four values of $\{A=a,S=s\}$. The causal effect of the program on the overall earnings, setting $Y=0$ for the unemployed, quantifies the effect on earnings and employment status combined, and not on the earnings alone \citep{lee2009training}.  

A number of pre-randomization variables that are possibly shared causes (or good proxies of shared causes) of employment status and earnings in 1978 are available \citep{dehejia1999causal}. These include age, years of education, not having a high school degree, race (white/black/hispanic), marital status (married or not married), and employment status and real earnings in 1975. A description of these covariates is given in Table D15. One may argue that for SPPI to hold, measured covariates should represent pre-trial skills and education, as these are common causes of both future employment status and earnings. While years of education is available, it does not capture the totality of pre-trial education and  skills. For this reason, we consider the employment status and  earnings in 1975 to be key proxies for these covariates.  Nevertheless, it is possible that SPPI or PPI do not hold with the measured covariates, and we therefore also apply to the data the proposed sensitivity analysis.

The covariates were generally balanced due to randomization in the original sample, with the exception of years of education, having a high school degree and employment status in 1975 (Table \ref{Tab:balance}).  Looking at those employed in 1978 ($S=1$), the imbalance got more substantial. Among the employed, compared to the untreated, the treated were older, less likely to be hispanic, and more likely to be married, to own a high school degree, and to be employed in 1975. At each treatment arm, earnings and employment rates in 1975 were higher among the employed, suggesting that participants who managed to acquire a position in 1978 were possibly more skilled before entering the program.

\begin{table}
\caption{\label{Tab:balance} Covariates balance in the full, employed and matched datasets. Means (SDs) are presented for continuous covariates and frequencies (proportions) for discrete covariates, along with standardized mean differences (SMDs). Matching was carried out on Mahalanobis distance with a caliper on $\widehat{\widetilde{\pi}}^1_{as}(\bx_0)$ ($c=0.4SD$).} 
\centering
\fbox{
\scriptsize
\begin{tabular}{l|ccccccccc}
 & \multicolumn{3}{c}{Full} & \multicolumn{3}{c}{Employed} 
 & \multicolumn{3}{c}{Matched} \\
   & Untreated & Treated & SMD & Untreated & Treated & SMD & Untreated & Treated & SMD \\ 
 \hline
 \textbf{Continuous} & & & & & & & & & \\ 
 Age & 24.4 (6.6) & 24.6 (6.7) & 0.03 & 24.1 (6.6) & 24.6 (6.7) & 0.09 & 24.1 (6.6) & 23.9 (6.2) & -0.02 \\ 
 Education & 10.2 (1.6) & 10.4 (1.8) & 0.12 & 10.2 (1.6) & 10.4 (1.9) & 0.13 & 10.2 (1.6) & 10.2 (1.5) & 0.00 \\ 
 Earnings75 & 3.0 (5.2) & 3.1 (4.9) & 0.01 & 3.4 (5.7) & 3.3 (5.1) & -0.02 & 3.4 (5.7) & 3.1 (5.2) & -0.05 \\ 
 \hline
  \textbf{Discrete} & & & & & & & & & \\ 
 Black & 340 (80\%) & 238 (80\%) & 0.00 & 224 (76\%) & 176 (77\%) & 0.02 & 224 (76\%) & 221 (75\%) & -0.02 \\ 
 Hispanic & 48 (11\%) & 28 (9\%) & -0.06 & 41 (14\%) & 25 (11\%) & -0.09 & 41 (14\%) & 36 (12\%) & -0.05 \\  
 Married & 67 (16\%) & 50 (17\%) & 0.03 & 45 (15\%) & 44 (19\%) & 0.11 &
 45 (15\%) & 54 (18\%) & 0.08 \\  
Nodegree & 346 (81\%) & 217 (73\%) & -0.21 & 238 (80\%) & 167 (73\%) & -0.20 & 238 (80\%) & 241 (81\%) & 0.03 \\ 
Employed75 &  247 (58\%) & 186 (63\%) & 0.09 & 182 (61\%) & 150 (65\%) & 0.08 & 182 (61\%) & 182 (61\%) & 0.00 \\ 
\end{tabular}}
\end{table}

\subsection{Results}

The composite outcome approach yielded an estimated difference of 886  US dollars (CI95\%: -71, 1843) between those participated in the program and those who did not. However, because the program increases employment by approximately 8\% ($p = 0.01$, one-sided test), the above estimates cannot be interpreted as a causal effect on the earnings, but as an effect on employment and earnings combined.  The naive difference in the survivors was 409 (CI95\%: -690, 1508).

The always-survivors (here always-employed) stratum proportion  is identifiable under monotonicity and was estimated to be $\hat{\pi}_{as} = 0.70$, which is quite high in comparison to $\hat{\pi}_{pro}=0.08$ and $\hat{\pi}_{ns}=0.22$. Assuming the principal stratum proportions are relatively stable over time, this means the causal effect among the always-survivors stratum is of a high interest in this example, as the always-survivors comprise more than two thirds of the study population.

Turning to the matching, because we had several discrete covariates and a number of continuous covariates, we used the Mahalanobis distance with a caliper on $\hat{\widetilde{\pi}}^1_{as}(\bX_0)$. 
In our main analysis, we included the continuous covariates age, education and earnings in 1975 in the Mahalanobis distance. We used an EM algorithm to estimate the principal score model  with the continuous covariates age and earnings in 1975, and the discrete covariates black, hispanic, marital status, and employment status in 1975. The estimated sequential regression  coefficients  $\widehat{\bgamma}_{S(a)}$ for $a=0,1$
are given in Table D16. 

As previously noted, a key advantage of matching is that balance can be optimized without looking at the outcome data, and hence without jeopardizing the validity of the analysis. Table D17 presents balance results under several caliper values. The best balance was obtained with $c=0.4SD[\hat{\widetilde{\pi}}^1_{as}(\bX_0)]$   (Equation \eqref{Eq:CaliperDef}). In the matched sample, the covariates were generally balanced, and compared to the employed (survivors) sample, notable improvements were achieved in the balance of the covariates age, education, high-school degree and employment status in 1975 (Table \ref{Tab:balance}).

Because there were less treated than untreated employed, we used matching with replacement, resulting in matching of all untreated employed participants. The alternative  of matching  without replacement  resulted in only $223$ matches out of the $296$ ($75\%$) untreated (Table D18).

Turning to the analysis of the matched sample, beyond the crude difference, we also fitted a linear regression model (using WLS) for the outcome with the covariates age, education, black, hispanic, marital status, and earnings in 1975, and compared models with and without all treatment-covariates interactions. 
For the model with interactions, we estimated the SACE by plugging in the fitted model in \eqref{Eq:gtheta}, and then averaged the obtained quantities across the matched untreated employed. We also calculated the BC estimator, taking the same set of covariates for its regression model. Wald-type 95\% confidence intervals were calculated with  SEs estimated as described in Section \ref{Sec:Sims}.

The crude difference  in the matched dataset was 59 (CI95\%: -977, 1094). The weighted linear regression estimators were 114 (CI95\%: -1226, 1453) without interactions and 55 (CI95\%: -1269, 1379) with interactions. The BC estimator  \eqref{Eq:BCdef} was 68 (CI95\%: -1376, 1512).  The non-parametric aligned-rank test revealed no evidence for rejecting the sharp null hypothesis of no individual causal effect ($p=0.45$).
The model-assisted weighting approach using the principal scores \citep{ding2017principal} estimated the SACE to be 398 (CI95\%: -711, 1280,  using the bootstrap with 500 samples). 

We turn to explore how relaxing the assumptions by using the sensitivity analyses described in Section \ref{Sec:Sens} may affect the conclusions. We used the same matching procedure as in the main analysis. We present here results for the WLS estimator without interactions after matching on the Mahalanobis distance with a $\hat{\widetilde{\pi}}^1_{as}(\bX_0)$ caliper. 

We start with sensitivity analysis for PPI/SPPI under monotonicity. Using a regression model as described in Section \ref{SubSec:SensPPI}, we obtained $0.38 \le \alpha_1 \le 2.64$. Values smaller than one (larger than one) might imply the always-survivors are believed to be more (less) skilled than the protected and hence are expected to have higher (lower) earnings had they participated in the program. 

As can be seen from the left panel of Figure \ref{Fig:SA}, as $\alpha_1$ increases, the SACE decreases. For $\alpha_1 \ge 1.5$, the SACE was estimated to be negative, meaning that the program decreases the mean earnings among the always-survivors. For all $\alpha_1$ values, the estimates were insignificant ($5\%$ significance level). Taking the estimated most extreme values, bounds for the SACE under monotonicity are $(-366, 455)$.
\begin{figure}
\centering
\small\caption{Sensitivity analyses in the NSW data. The left panel presents the SACE estimate as a function of $\alpha_1$, without assuming PPI. The right panel presents SACE estimate as a function of $\xi$ and $\alpha_0$ without assuming  monotonicity.
\label{Fig:SA}
}
\begin{minipage}{.5\textwidth}
\includegraphics[scale=0.32]{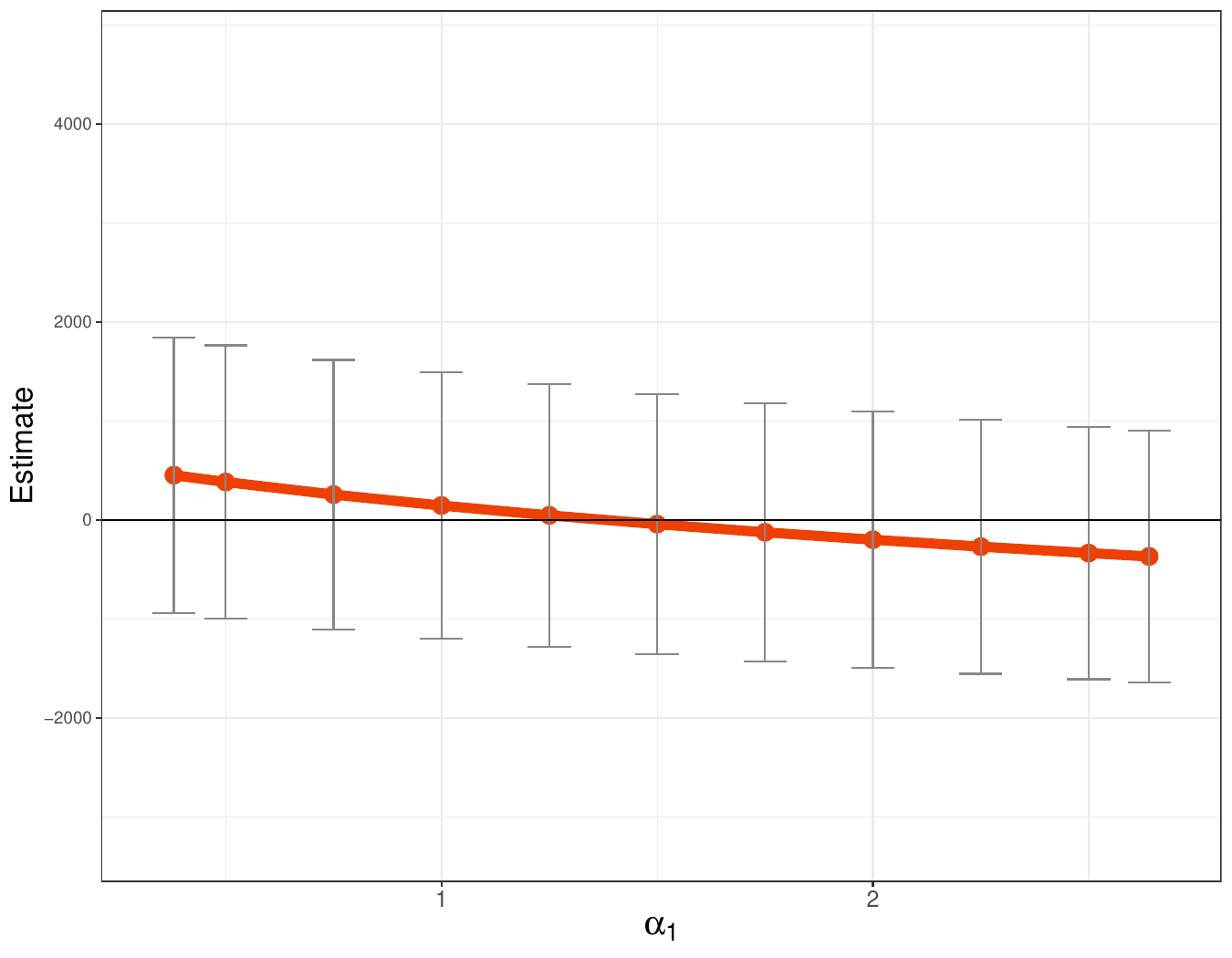}
\end{minipage}%
\begin{minipage}{.5\textwidth}
\centering
\includegraphics[scale=0.32]{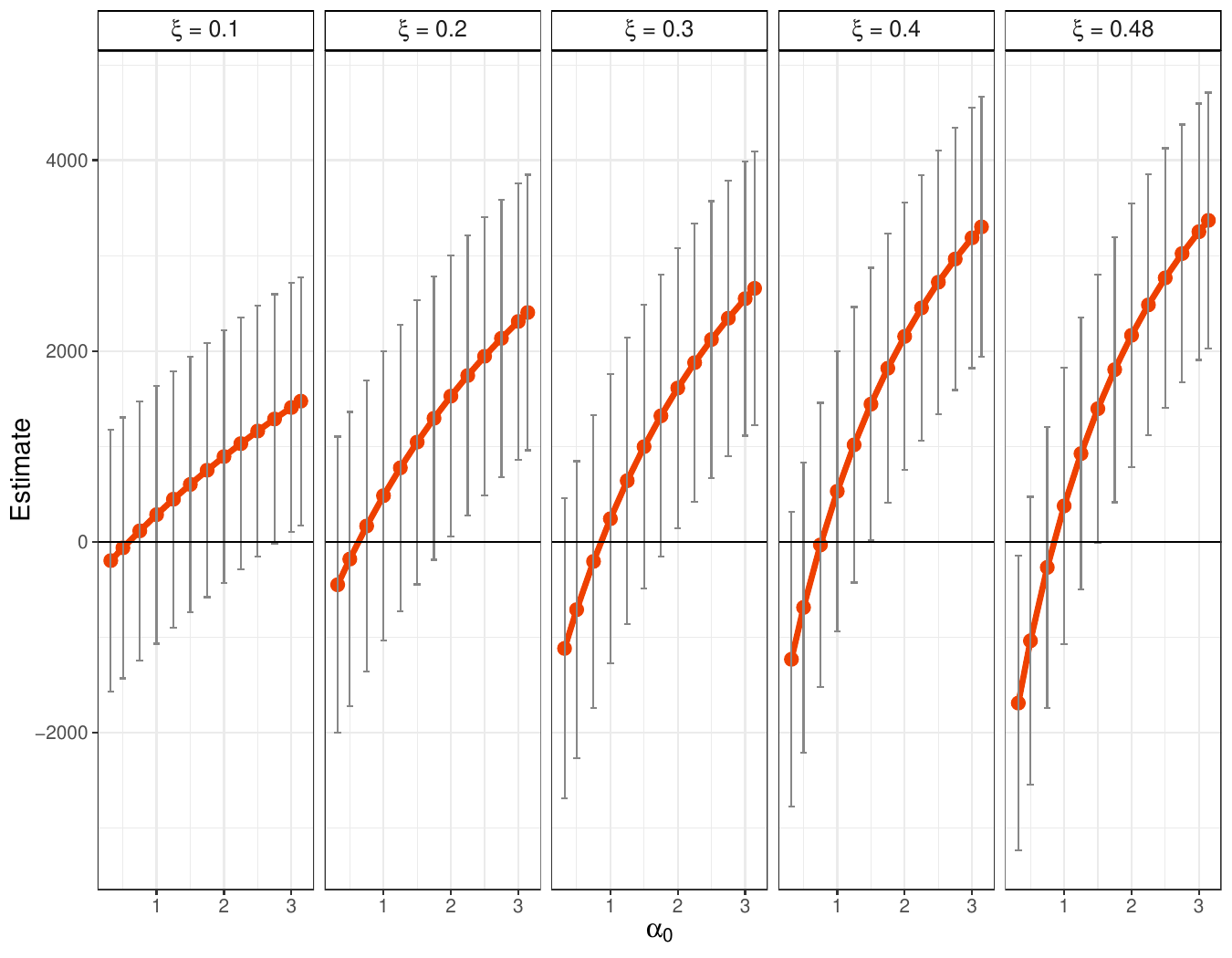}
\end{minipage}
\end{figure}

We now turn to sensitivity analysis for monotonicity (Section \ref{SubSec:Sensmono}).
A possible violation of monotonicity might be that program participants raise their reservation wages (the minimal wage they require before accepting a job) due to the program, and therefore may refuse low-earnings job offers they would have accepted had they did not participate in the program \citep{zhang2009likelihood}. Under CPSR instead of monotonicity, the principal scores are identifiable from the data as a function of $\xi$. In the NSW data,
$\hat{p}_0=0.7$ and $\hat{p}_1=0.77$, thus
$0 \le \xi \le 0.48$. 
Therefore, we considered $\xi \in \{0,0.1,0.2,0.3,0.4,0.48\}$. 
Using a regression model as described in Section \ref{SubSec:Sensmono}, we obtained  $0.32 \le \alpha_0 \le 3.14$. 

For a fixed $\xi>0$, as $\alpha_0$ increases, the SACE increases (Figure \ref{Fig:SA}).  As $\xi$ increases, changes in $\alpha_0$ have larger impact on the SACE estimate,  as expected from Proposition \ref{Prop:SAmono}. For each $\xi$ value, the positive effect of the program becomes significant for large enough $\alpha_0$ value. For example, for $\xi=0.4$, meaning there are 2.5 times more always-survivors than harmed  at each level of $\bX_0$, the SACE is significantly different from zero if the ratio between the mean outcomes of the harmed and the always-survivors when untreated is at least 1.5. Taking the estimated most extreme values, bounds for the SACE under CPSR and PPI are $(-1693, 3369)$. Sensitivity analyses under different distance measures and using several estimators were overall similar (Figures D7 and D8). 

To summarize the conclusions from the data analysis, under monotonicity and PPI (or CPSR and SPPI), no significant causal effect was found on the earnings among those would have been employed regardless of the program participation. Sensitivity analyses revealed that under CPSR and PPI, a positive effect is possible if one believes that the proportion of the harmed stratum is non-negligible and $\alpha_0$ is large enough.

\section{Discussion}
\label{Sec:Discuss}

In this paper, we presented a matching-based approach for estimation of well-defined causal effects in the presence of truncation by death.  Underpinning our approach is that the balance created by randomization and lost due to differential survival can be retrieved by a matching procedure achieving the three properties: survivors only, closeness, and distribution preservation. 

To fix ideas, we focused in this paper on 1:1 matching with or without replacement, and on classical matching distance measures. In practice, 1:k matching can also be used. Other distance measures or newly-developed matching procedures are also possible to use, and might be more attractive, especially when rich data are available and $\bX_0$ is high-dimensional. Regardless of the quality of balance achieved on the observed covariates, researchers should be aware that the identifying assumptions for the SACE are strong and cannot be falsified. Hence, the analysis should be accompanied with the proposed sensitivity analyses. Nevertheless, the methods in this paper assume that all covariates in $\bX_0$ were measured, and without an error. The sensitivity analysis assumes that $\alpha_1$, for example, is not a function of $\bX_0$. If this is not the case, bias is expected. Alternatively, $\alpha_1$ can be replaced by a function $\alpha_1(\bx_0)$ specified by the researchers. Specifying such a function, however, might be challenging in practice.
 
A number of issues concerning matching methods should be highlighted. First, we focused on designs with point-treatment and without loss to follow up. The generalization of matching methods for time-varying treatment is limited compared to other methods, see e.g. \cite{thomas2020matching} for a review of matching methods for time-varying treatments when the goal is to study a static treatment regime. A second issue is that standard matching methods typically do not achieve the semi-parametric efficiency bound \citep{abadie2006large} and can be asymptotically biased \citep{abadie2011bias}. The theoretical properties of matching-based methods are understudied compared to other methods, although they are actively studied  \citep{abadie2022robust}.

Nevertheless, matching methods are one of the most popular tools in the causal inference toolbox of practitioners. We reviewed in this paper the theoretical basis and discussed implementation details for adapting matching methods to adequately overcome truncation by death. 

\bibliographystyle{chicago}
\bibliography{MatchingSACE}

This web appendix includes four major sections. Section \ref{Sec:AdditionalTheory} presents theory and proofs. Section \ref{Sec:EMappendix} provides details on the EM algorithms used for principal scores estimation. Section \ref{Sec:AppSims} includes further details on and results from the simulation study. The final Section \ref{Sec:AppData} presents additional information and results from the NSW data analysis.

\appendix

\appendix

\renewcommand\theequation{\thesection\arabic{equation}}
\renewcommand\thefigure{\thesection\arabic{figure}}
\renewcommand\thetable{\thesection\arabic{table}}

\setcounter{equation}{0}
\setcounter{figure}{0}
\setcounter{table}{0}

\section*{Appendix}
\section{Additional theory and proofs}	
\label{Sec:AdditionalTheory}	

In Section \ref{APPSubsec:Violation of PPI and SPPI} we present a data generating mechanism (DGM), where PPI and SPPI might be violated due to the existence of common causes of survival and the non-survival outcome that are also affected by the treatment.
In Section \ref{APPSubsec:Proofs_further_theory}, we provide proofs for propositions 1-4 from the main text, and present identification formula and proof when randomization is replaced by conditional exchangeability.
In Section \ref{APPSubsec:parametersrequired_identification}, we summarizes the sensitivity parameters required for SACE and principal scores identification under different assumption combinations.
In Section \ref{APPSubsec:bounds}, we provide derive the bounds for the sensitivity parameters.
In Section \ref{SubSec:ProofCSEs}, we provide a proof of Proposition 5, namely a proof for the
alternative presentation of the identification formula of the conditional separable effect $CSE(a_S)$.

\subsection{Violation of PPI and SPPI due to $\bX_{1}$}
\label{APPSubsec:Violation of PPI and SPPI}
We present a DGM under which both PPI and SPPI might be violated because of $\bX_{1}$ being affected by the treatment. For simplicity, we consider both $X_0$ and $X_1$ to be univariate. We then show that under further restrictions, PPI holds but SPPI does not. Consider the following non-parametric structural equation model with independent errors (NPSEM-IE)
\begin{align*}
    X_0 &= \epsilon_{0},\\
    A &= \epsilon_{A},\\
    X_1 &= f_1(A,\epsilon_{1}),\\
    S &= f_S(X_0,A,X_1,\epsilon_{S}),\\
    Y &= f_Y(X_0,A,X_1,S,\epsilon_{Y}),
\end{align*}
where $\epsilon_{0}, \epsilon_{A}, \epsilon_{1}, \epsilon_{S},\epsilon_{Y}$ are independent, $A$ and $S$ are binary and where due to truncation by death $f_Y(x_0,a,x_1,0,\epsilon_{Y})= ^*$ for all $x_0,a,x_1,\epsilon_{Y}$ values. Under this NPSEM-IE we have that 
\begin{align}
\label{AppEq:NPSEM}
\begin{split}
    Y(a) &= f_Y(X_0,a,X_1(a),S(a),\epsilon_{Y}) \\
    & = f_Y(X_0,a,f_1(a,\epsilon_{1}),S(a),\epsilon_{Y}),\\
    S(a) &= f_S(X_0, a, X_1(a), \epsilon_{S}) \\
    & = f_S(X_0,a,f_1(a,\epsilon_{1}),\epsilon_{S}).
\end{split}
\end{align}
Without further assumptions, PPI (and hence SPPI) are not imposed by this model, because even conditionally on $\{X_0, S(1)=1\}$,  we may have that $Y(1) \cancel{\indep} S(0) | X_0, S(1)=1$. This is because $Y(1)$ contains information about $f_1(1,\epsilon_{1})$, which in turn may contain information on $f_1(0,\epsilon_{1})$ and hence on $S(0)$. 

PPI will hold but not SPPI under this NPSEM-IE, for example, if we add the assumption that $X_1$ affects Y directly only when untreated (that is, the treatment inactivates the effect of $X_1$ on $Y$). Formally this means that $f_Y$ from equation \eqref{AppEq:NPSEM} can be written as 
\begin{align*}
Y(0) &= f_{Y,0} = f_Y(X_0,0,X_1(0),S(0),\epsilon_{Y}), \\
Y(1) &= f_{Y,1} = f_Y(X_0,1,S(1),\epsilon_{Y}).
\end{align*}

\subsection{Proofs and further theory}
\label{APPSubsec:Proofs_further_theory}
Throughout the proofs in this section, we denote 
$f(\cdot|\mathcal{Q})$ for either a probability density function, or a joint distribution function (including of a continuous and a binary variable) conditioned on the event $\mathcal{Q}$. For simplicity of presentation only, we assume in all proofs below that all components of $\bX_0$ are continuous.

Before presenting the auxiliary lemmas and the proofs, we define two quantities that will be used. Let $\widetilde{\pi}^{1}_{as} = \frac{\pi_{as}} {\pi_{as} + \pi_{pro}}$ and $\widetilde{\pi}^{0}_{as} = \frac{\pi_{as}} {\pi_{as} + \pi_{har}}$ 
 be the analogue marginal (with respect to $\bX_0$) quantities of $\widetilde{\pi}^{1}_{as}(\bx_0)$ and $\widetilde{\pi}^{0}_{as}(\bx_0)$.

\subsubsection{Auxiliary lemmas}
\label{SubSec:ProofSACEidenAuxLemmas}

 Our first lemma states that under randomization and SUTVA, the proportion of always-survivors among the group $\{A=a,S=1,\bX_0=\bx_0\}$ is the relative size of $\pi_{as}(\bx_0)$ out of the proportion $\Pr[S(a)=1|\bX_0=\bx_0]$.

\begin{lemma} \label{Lem:PiTildeMeaning}
Under randomization and SUTVA 
\begin{align*}
\Pr(G = as|A = 1, S = 1, \bX_0 = \bx_0) &= \widetilde{\pi}^{1}_{as}(\bx_0), \\
 \Pr(G = as|A = 0, S = 1, \bX_0 = \bx_0) &= 
\widetilde{\pi}^{0}_{as}(\bx_0).
\end{align*}
\end{lemma}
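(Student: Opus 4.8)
The plan is to reduce each conditional probability to its principal-score ratio through three moves: unpack the conditioning event with the definition of conditional probability, use SUTVA to replace the observed survival $S$ by the relevant potential outcome, and invoke randomization to strip the conditioning on $A$. I would handle the $A=1$ identity first and then obtain the $A=0$ identity by a symmetric argument.

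For the first claim, I would begin with
\[
\Pr(G=as \mid A=1, S=1, \bX_0=\bx_0) = \frac{\Pr(G=as, S=1 \mid A=1, \bX_0=\bx_0)}{\Pr(S=1 \mid A=1, \bX_0=\bx_0)}.
\]
SUTVA gives $S = S(1)$ on the event $\{A=1\}$, so the denominator equals $\Pr(S(1)=1 \mid A=1, \bX_0=\bx_0)$; since $\{S(1)=1\}$ is exactly the union of the strata $\{G=as\}$ and $\{G=pro\}$, randomization lets me drop $A$ and obtain $\pi_{as}(\bx_0) + \pi_{pro}(\bx_0)$. For the numerator the key observation is that $G=as$ already forces $S(1)=1$, hence $S=1$ on $\{A=1\}$, so the event $\{G=as, S=1\}$ collapses to $\{G=as\}$; randomization then yields $\Pr(G=as \mid \bX_0=\bx_0) = \pi_{as}(\bx_0)$. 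Dividing gives $\widetilde{\pi}^1_{as}(\bx_0)$. The $A=0$ case is fully symmetric: SUTVA gives $S = S(0)$ on $\{A=0\}$, the event $\{S(0)=1\}$ decomposes as $\{G=as\} \cup \{G=har\}$, and $G=as$ forces $S(0)=1$, so the same three steps produce $\pi_{as}(\bx_0)/[\pi_{as}(\bx_0) + \pi_{har}(\bx_0)] = \widetilde{\pi}^0_{as}(\bx_0)$.

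The only real subtlety—and the step I would state most carefully—is the use of randomization. Because $G$ is a cross-world quantity, a function of the pair $(S(0),S(1))$, I need the joint form $A \indep (S(0),S(1)) \mid \bX_0$ rather than the per-$a$ marginal statements, so that conditioning on $A$ can be removed both from $\Pr(G=as \mid A=a, \bX_0)$ and from $\Pr(S(a)=1 \mid A=a, \bX_0)$. This follows from Assumption 1 since $G$ is determined by $\{S(0),S(1)\}$ and $\bX_0$ is among the variables declared independent of $A$. Everything else is bookkeeping: translating observed $S$ into the potential outcome under the correct arm, and recognizing that conditioning on $\{G=as\}$ renders the extra event $\{S=1\}$ redundant in the numerator.
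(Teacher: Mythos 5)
Your proof is correct and follows essentially the same route as the paper's: both arguments rest on SUTVA, randomization, the inclusion $\{G=as\}\subseteq\{S(a)=1\}$, and the decomposition of $\{S(a)=1\}$ into the two relevant strata; you merely apply the conditional-probability expansion before removing the conditioning on $A$, whereas the paper removes $A$ first and then invokes Bayes. Your observation that randomization must be used in the joint form $A \indep \{S(0),S(1)\} \mid \bX_0$ (because $G$ is a cross-world function of both potential survival statuses) is well taken, and the paper's own proof implicitly relies on the same strengthening of its per-$a$ assumption when it drops $A$ from a probability involving $G$.
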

\begin{proof}
By SUTVA and randomization, we have
\begin{align*}
\label{first} 
\Pr(G = as|A = 1, S = 1, \bX_0 = \bx_0) &=
\Pr[G = as|S(1) = 1, \bX_0 = \bx_0] \\
&=
\frac{\Pr(G = as| \bX_0 = \bx_0)} {\Pr[S(1) = 1| \bX_0 = \bx_0]}\\
&=
\frac{\pi_{as}(\bx_0)} {\pi_{as}(\bx_0) + \pi_{pro}(\bx_0)}\\
&=
\widetilde{\pi}^{1}_{as}(\bx_0).
\end{align*}
 The proof for the  other claim about $\widetilde{\pi}^{0}_{as}(\bx_0)$ is analogous. 
\end{proof}
 \hfill $\blacksquare$\\
 An immediate corollary of this Lemma is that the results also hold without conditioning on $\bX_0$. That is,
 \begin{align*}
\Pr(G = as|A = 1, S = 1) &= \widetilde{\pi}^{1}_{as}, \\
 \Pr(G = as|A = 0, S = 1) &= 
\widetilde{\pi}^{0}_{as}.
\end{align*}
The following Lemma would be useful for our proofs, and is also interesting by itself. It states that under randomization, SUTVA and CPSR, the distribution of the covariates $\bX_0$ in the untreated survivors $\{A=0,S=1\}$ is the same as the distribution of $\bX_0$ in the always-survivors $\{S(0)=0,S(1)=1\}$. Since under monotonicity $\xi=0$, this result also holds if we replace CPSR with monotonicity.
\begin{lemma} \label{Lem:XdistASnomono}
Under randomization, SUTVA and CPSR
	\begin{equation*}
	f_{\bX_0}[\bx_0|S(1) = 1, S(0) = 1] = f_{\bX_0}(\bx_0|A = 0, S = 1).
	\end{equation*}	
\end{lemma}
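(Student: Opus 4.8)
The plan is to write each of the two densities via Bayes' rule and show that their $\bx_0$-dependent parts coincide, with the normalizing constants matching as well. First I would simplify the right-hand side: by SUTVA the event $\{A=0,S=1\}$ is the same as $\{A=0,S(0)=1\}$, and by randomization $\bX_0$ is independent of $A$, so $f_{\bX_0}(\bx_0|A=0,S=1)=f_{\bX_0}(\bx_0|S(0)=1)$. Both sides of the claimed identity are thereby expressed as conditional densities of $\bX_0$ given a survival event, which puts them in a form amenable to a direct Bayes computation.

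Next I would compute each density. The left-hand side is the density among always-survivors, i.e. $\{G=as\}$, so Bayes gives $f_{\bX_0}[\bx_0|S(1)=1,S(0)=1]=\pi_{as}(\bx_0)f_{\bX_0}(\bx_0)/\pi_{as}$, where $\pi_{as}=E[\pi_{as}(\bX_0)]$ is the marginal always-survivor proportion. For the right-hand side I would use that $\{S(0)=1\}=\{G\in\{as,har\}\}$, whence $\Pr(S(0)=1|\bX_0=\bx_0)=\pi_{as}(\bx_0)+\pi_{har}(\bx_0)$. Invoking CPSR to write $\pi_{har}(\bx_0)=\xi\,\pi_{as}(\bx_0)$, this conditional survival probability equals $(1+\xi)\pi_{as}(\bx_0)$, and Bayes then yields $f_{\bX_0}(\bx_0|S(0)=1)=(1+\xi)\pi_{as}(\bx_0)f_{\bX_0}(\bx_0)/\Pr(S(0)=1)$.

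The crux of the argument is the normalizing constant $\Pr(S(0)=1)$, and this is the one step that genuinely uses the \emph{constancy} in CPSR rather than mere positivity of a ratio. Because $\xi$ does not depend on $\bx_0$, it factors out of the marginalization: $\Pr(S(0)=1)=E[(1+\xi)\pi_{as}(\bX_0)]=(1+\xi)\pi_{as}$. Substituting this into the right-hand expression, the factor $(1+\xi)$ in the numerator cancels the identical factor in the denominator, leaving exactly $\pi_{as}(\bx_0)f_{\bX_0}(\bx_0)/\pi_{as}$, which is the left-hand side. I expect this cancellation to be the only substantive point of the proof: were $\xi$ allowed to vary with $\bx_0$, it could not be pulled through the expectation, the two normalizations would have different $\bx_0$-profiles, and the densities would fail to agree — so it is precisely the constancy built into CPSR that makes the two distributions equal, with monotonicity ($\xi=0$) recovering Lemma~\ref{Lem:XdistAS} as the special case.
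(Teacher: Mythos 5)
Your proof is correct, but it is organized differently from the paper's. The paper first reduces $f_{\bX_0}(\bx_0|A=0,S=1)$ to a mixture over the two strata compatible with $S(0)=1$, using the law of total probability with mixing weights from Lemma \ref{Lem:PiTildeMeaning}, and then shows via Bayes' rule and CPSR that the harmed and always-survivor covariate distributions coincide, $f_{\bX_0}[\bx_0|S(0)=1,S(1)=0]=f_{\bX_0}[\bx_0|S(0)=1,S(1)=1]$, so the mixture collapses to either component. You bypass both the mixture decomposition and this intermediate identity: you apply Bayes directly to the event $\{S(0)=1\}$, writing $\Pr[S(0)=1|\bX_0=\bx_0]=(1+\xi)\pi_{as}(\bx_0)$ and $\Pr[S(0)=1]=(1+\xi)\pi_{as}$, and cancel the common factor $(1+\xi)$. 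Both arguments hinge on the same fact --- constancy of $\xi$ lets it pass through the marginalization (the paper's preliminary computation showing $\pi_{har}/\pi_{as}=\xi$ is exactly your computation of the normalizing constant) --- and your writeup makes this dependence especially transparent, correctly identifying that a covariate-dependent ratio would break the cancellation. What the paper's longer route buys is the intermediate result itself: the equality of covariate distributions between the harmed and the always-survivors under CPSR is of independent interpretive interest. One precision: in your first step, concluding $f_{\bX_0}(\bx_0|A=0,S(0)=1)=f_{\bX_0}(\bx_0|S(0)=1)$ requires the joint independence $A\indep\{S(0),\bX_0\}$, not merely $A\indep\bX_0$ as you wrote; the paper's randomization assumption does grant this joint form, so the step is sound, but the justification should invoke it explicitly.
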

\begin{proof}
First, note that $\xi$ is also the ratio between the marginal stratum proportions. That is,
\begin{equation}
\label{Eq:xiMarg}
\frac{\pi_{har}}{\pi_{as}} = \frac{\int_{\bx_0} f_{\bX_0}(\bx_0) \pi_{har}(\bx_0) d\bx_0} {\int_{\bx_0} f_{\bX_0}(\bx_0) \pi_{as}(\bx_0)d\bx_0} 
= \frac{\int_{\bx_0} f_{\bX_0}(\bx_0)\xi\pi_{as}(\bx_0)d\bx_0}{\int_{\bx_0} f_{\bX_0}(\bx_0) \pi_{as}(\bx_0)d\bx_0} = \xi. 
\end{equation}
Next, by Bayes' theorem and CPSR, the distribution of $\bX_0$ in the harmed and in the always-survivors strata is the same,
\begin{align}
\begin{split}
\label{Eq:Xdist_asharm}
f_X(\bx_0|S(0)=1, S(1)=0) &= \frac{f_{\bX_0}(\bx_0) \pi_{har}(\bx_0)}{\pi_{har}} = \frac{f_{\bX_0}(\bx_0) \xi\pi_{as}(\bx_0)}{\xi \pi_{as}} = \frac{f_{\bX_0}(\bx_0) \pi_{as}(\bx_0)}{ \pi_{as}} \\
&=f_X(\bx_0|S(0)=1, S(1)=1)
\end{split}
\end{align}	 
Now, by law of total probability and Lemma \ref{Lem:PiTildeMeaning} we may write
\begin{align}
\nonumber
& f_{\bX_0}(\bx_0|A=0, S=1) \\[1em]
\nonumber
&= \widetilde{\pi}^{0}_{as} f_{\bX_0}[\bx_0|S(0)=1, S(1)=1, A=0, S=1] \\[1em]
\nonumber
& + (1 - \widetilde{\pi}^{0}_{as}) f_{\bX_0}(\bx_0|S(0)=1, S(1)=0, A=0, S=1)\\[1em] 
\label{Eq:NoASinX}
&= \widetilde{\pi}^{0}_{as} f_{\bX_0}[\bx_0|S(0)=1, S(1)=1]
+ (1 - \widetilde{\pi}^{0}_{as}) f_{\bX_0}[\bx_0|S(0)=1, S(1)=0]\\[1em] 
&=f_{\bX_0}[\bx_0|S(0)=1, S(1)=1],
\label{Eq:FinalLineS0A1}
\end{align}
where \eqref{Eq:NoASinX} follows from SUTVA and randomization, and \eqref{Eq:FinalLineS0A1} is by \eqref{Eq:Xdist_asharm}.
\end{proof}
 \hfill $\blacksquare$

\subsubsection{Proof of Propositions 1 and 4}
\label{SubSec:ProofSACEidentSPPI}
Proposition 1 is a special case of Proposition 4, when PPI is replaced with the stronger SPPI. Therefore, we first present the proof for Proposition 4, which does not assume SPPI, and then use the fact that under SPPI, $\alpha_0=1$, to derive the conclusion of Proposition 1.
 
First, by law of total expectation and Lemma \ref{Lem:XdistASnomono}, we may write
\begin{align}
\begin{split}
\label{Eq:SACEouter}
E[Y(a)|S(0) = 1, S(1) = 1] &= E_{\bX_0|S(0) = 1, S(1) = 1}\big\{E[Y(a)|S(1) = 1, S(0) = 1, \bX_0]\big\} \\[1em]
& = E_{\bX_0|A = 0, S = 1}\big[\mu_{a,as}(\bX_0)\big]
\end{split}
\end{align}
for $a=0,1$. Next, $\mu_{1,as}(\bX_0)$ is identified by
\begin{align} 
\begin{split}
\label{Eq:SACEm1}
\mu_{1,as}(\bX_0) & = E[Y(1)|S(1) = 1, \bX_0] \\[0.5em]
& = E[Y(1)|A = 1, S(1) = 1, \bX_0] \\[0.5em]
& = E[Y|A = 1, S = 1, \bX_0],
\end{split}
\end{align}
where the first line is by PPI, the second by randomization, and the third by SUTVA.
By setting $a=1$, and substituting (\ref{Eq:SACEm1}) in (\ref{Eq:SACEouter}), it follows that 
\begin{equation}
\label{Eq:Y1_SACE}
E[Y(1)|S(0)=1,S(1)=1] = E_{\bX_0|A=0,S=1}\left\{ E[Y|A = 1, S = 1, \bX_0]\right\}. 
\end{equation}
Next, to identify $E[Y(0)|S(0)=1,S(1)=1]$,
we turn to the identification of $\mu_{0,as}(\bX_0)$. Observe that by SUTVA, randomization, and the fact that $\bX_0$ are pre-treatment variables which are not affected by $A$ we may write
\begin{equation}
\label{Eq:MeanY0atA0S1}
E[Y(0)| G = g, A = 0, S = 1, \bX_0]=E[Y(0)| G = g, \bX_0]=\mu_{0,g}(\bX_0)
\end{equation}
for $g=as,har$. Now, 
\begin{align*}
\nonumber
&E[Y|A=0, S=1, \bX_0] = 
E[Y(0)| A=0, S=1, \bX_0] \\[0.5em] \nonumber
&= \Pr(G=as | A = 0, S = 1, \bX_0) E[Y(0)| G=as, A = 0, S = 1, \bX_0 ]\\[0.5em] \nonumber
&+
\Pr(G=har | A = 0, S = 1, \bX_0) E[Y(0)| G=har, A = 0, S = 1, \bX_0]\\[0.5em] 
&=
 \widetilde{\pi}^{0}_{as}(\bX_0) \mu_{0,as}(\bX_0)+
[1-\widetilde{\pi}^{0}_{as}(\bx_0)]\mu_{0,har}(\bX_0)\\[0.5em] 
&=
\big\{\widetilde{\pi}^{0}_{as}(\bX_0) + 
[1-\widetilde{\pi}^{0}_{as}(\bX_0)]\alpha_0\big\}\mu_{0,as}(\bX_0)
\end{align*}
where the first equality is by SUTVA, the second is by the law of total expectation, the third is by Lemma \ref{Lem:PiTildeMeaning} and \eqref{Eq:MeanY0atA0S1}, and  last line is by the definition of $\alpha_0$.
Reorganizing the terms, we obtain
\begin{equation}
\label{Eq:EY0as} 
\mu_{0,as}(\bX_0) = \frac{E(Y|\bX_0, A=0, S=1)}{\widetilde{\pi}^{0}_{as}(\bX_0) + 
[1-\widetilde{\pi}^{0}_{as}(\bx_0)]\alpha_0}. 
\end{equation}
Finally, setting $a=0$, and substituting \eqref{Eq:EY0as} in \eqref{Eq:SACEouter}, we get 
\begin{align}
\nonumber
E[Y(0)|S(0)=1,S(1)=1] &= E_{\bX_0|A=0,S=1}\left\{\frac{E(Y| A = 0, S = 1, \bX_0)}{\widetilde{\pi}^{0}_{as}(\bX_0) + 
[1-\widetilde{\pi}^{0}_{as}(\bx_0)]\alpha_0}\right\} \\
\label{Eq:Y0_SACE}
 	& = \frac{(1 + \xi)}{(1 + \xi \alpha_0)} 
 	E_{\bX_0|A=0,S=1}[E(Y| A = 1, S = 1, \bX_0)], 
 \end{align}
where in (\ref{Eq:Y0_SACE}) we used that
$\widetilde{\pi}^{0}_{as}(\bx_0)= \frac{1}{1+\xi}$. 
To complete the proof of Proposition 4, we note that the difference between \eqref{Eq:Y1_SACE} and (\ref{Eq:Y0_SACE}) is exactly the SACE.
Proposition 1 follows by setting $\alpha_0=1$.
\hfill $\blacksquare$

\subsubsection{Proof of Proposition 2}
Formally, SPPI entails the two following statements:
\begin{enumerate}[1.]
    \item $Y(1) \indep S(0)|S(1)=1,\bX_0$.
    \item $Y(0) \indep S(1)|S(0)=1,\bX_0$.
\end{enumerate}
Statement 1 is PPI. Statement 2 follows from monotonicity, because $S(1)=1$ deterministically when $S(0)=1$. 
\hfill $\blacksquare$

\subsubsection{Proof of Proposition 3}
\label{SubSec:SAppi}
First, $\mu_{0,as}(\bX_0)$
is identified by
\begin{align} 
\begin{split}
\label{Eq:SACEm0_mono}
\mu_{0,as}(\bX_0) & = E[Y(0)|S(0) = 1, \bX_0] \\[0.5em]
& = E[Y(0)|A = 0, S(0) = 1, \bX_0] \\[0.5em]
& = E(Y|A = 0, S = 1, \bX_0),
\end{split}
\end{align}
where the first equality is by monotonicity, the second is by randomization, and the third by SUTVA.

To identify
$E[Y(0)|S(0)=1,S(1)=1]$, we set $a=0$, and plug
\eqref{Eq:SACEm0_mono} in
\eqref{Eq:SACEouter} (Section \ref{SubSec:ProofSACEidentSPPI}), to obtain
\begin{equation}
\label{Eq:Y0_SACE_SA_PPI}
E[Y(0)|S(0)=1,S(1)=1] = E_{\bX_0|A=0,S=1}[E(Y|A = 0, S = 1, \bX_0)]
\end{equation}

Turning to $E[Y(1)|S(0)=1,S(1)=1]$,  observe that by SUTVA, randomization, and the fact that $\bX_0$ are pre-treatment variables which are not affected by $A$ we may write
\begin{equation}
\label{Eq:MeanY1atA0S1}
E[Y(1) | G = g, A = 1, S = 1, \bX_0] = E[Y(1) |  G = g, \bX_0] = \mu_{1,g}(\bX_0)
\end{equation}
for $g=as,pro$. Now,
\begin{align*}
\nonumber
E(Y&| A = 1, S = 1, \bX_0)\\ &=
E[Y(1)| A = 1, S = 1, \bX_0] \\[0.5em] \nonumber
&= 
\Pr(G = as |A = 1, S = 1, \bX_0) E[Y(1)\ | \ G = as, A = 1, S = 1, \bX_0]\\[0.5em] \nonumber
& + \Pr(G=pro | A = 1, S = 1, \bX_0) E[Y(1)\ | \ G = pro, A = 1, S = 1, \bX_0] \\[0.5em] 
&=\widetilde{\pi}^{1}_{as}(\bX_0)\mu_{1,as}(\bX_0) 
+ [(1-\widetilde{\pi}^{1}_{as}(\bX_0)]\mu_{1,pro}(\bX_0)\\[0.5em] 
 &= \{\widetilde{\pi}^{1}_{as}(\bX_0) +
[1 - \widetilde{\pi}^{1}_{as}(\bX_0)]\alpha_1\} \mu_{1,as}(\bX_0),
\end{align*}
where the first equality is by SUTVA, the second is by the law of total expectation,   the third is by Lemma \ref{Lem:PiTildeMeaning} and \eqref{Eq:MeanY1atA0S1}, and the last line is by the definition of $\alpha_1$.
Reorganizing the terms, we obtain that
\begin{equation}
\label{Eq:EY1as} 	
\mu_{1,as}(\bX_0) = \frac{E(Y| A = 1, S = 1, \bX_0)}
{\widetilde{\pi}^{1}_{as}(\bX_0) + \alpha_1[1 - \widetilde{\pi}^{1}_{as}(\bX_0)]}.
\end{equation}
Finally, setting $a=1$, and substituting \eqref{Eq:EY1as} in 
\eqref{Eq:SACEouter} (Section \ref{SubSec:ProofSACEidentSPPI}), we get 
\begin{equation}
\label{Eq:Y1_SACE_SA_PPI}
E[Y(1)|S(0)=1,S(1)=1] = E_{\bX_0|A=0,S=1}\left\{\frac{E[Y| A = 1, S = 1, \bX_0]}
{\widetilde{\pi}^{1}_{as}(\bX_0) + \alpha_1[1 - \widetilde{\pi}^{1}_{as}(\bX_0)]}\right\}
\end{equation}
To complete the proof, we note that the difference between \eqref{Eq:Y1_SACE_SA_PPI} and \eqref{Eq:Y0_SACE_SA_PPI} is exactly the SACE.
\hfill $\blacksquare$

\subsubsection{SACE identification under conditional exchangeability}
\label{SubSec:SACEidentSPPI_CE}

In this Section we show that the SACE can be identified when replacing the randomization assumption by a weaker assumption, namely, the following conditional exchangeability (CE) assumption. We focus on the case the same covariates $\bX_0$ are needed for both SPPI and CE to hold.
\setcounter{assumption}{5}
\begin{assumption}
	Conditional exchangeability. $A \indep \{S(a),Y(a)\} | \bX_0$ for $a=0,1$.
\end{assumption}

As with the case of randomization, we start with a Lemma identifying the distribution of $\bX_0$ among the always survivors. The following Lemma is the analogue of Lemma \ref{Lem:XdistASnomono}, under CE instead of randomization. 
 \begin{lemma}
\label{Lem:XdistASnomono_CE}
Under SUTVA, CE, and CPSR 
\begin{equation*}
f_{\bX_0}[\bx_0|S(0)=1, S(1)=1] =\tilde{f}_{\bX_0}(\bx_0)
\end{equation*}
where $\tilde{f}_{\bX_0}(\bx_0)$ denotes the  probability density function
\begin{equation*}
\tilde{f}_{\bX_0}(\bx_0) = \dfrac{\Pr(S=1|A=0, \bX_0 = \bx_0)}{E_{\bX_0}\big[\Pr(S=1|A=0, \bX_0)\big]} f_{\bX_0}(\bx_0). \end{equation*}
\end{lemma}
\begin{proof}
First, by Bayes' theorem
\begin{align}
\label{X|as_bayes}
f_{\bX_0}(\bx_0|S(0)=1, S(1)=1) &=
\frac{\pi_{as}(\bx_0) f_{\bX_0}(\bx_0)}{\pi_{as}}.
\end{align}

Starting with $\pi_{as}(\bx_0)$ and $\pi_{as}$, we have

\begin{equation}
\label{pi_as_x_LTP_CE}
\pi_{as}(\bx_0) = \Pr(S(0)=1|\bX_0=\bx_0)
\Pr(S(1)=1|S(0)=1, \bX_0 = \bx_0).
\end{equation}
By CE and SUTVA, we obtain
\begin{equation}
\label{S0_identification_CE}
\Pr(S(0)=1|\bX_0 = \bx_0) = \Pr(S=1|A=0, \bX_0 = \bx_0),
\end{equation}
and
\begin{align}
\begin{split}
\label{as_in_S0_identification_CE_CE}
 \Pr(S(1)=1|S(0)=1, \bX_0 = \bx_0) &= \frac{\Pr(S(0)=1, S(1)=1| \bX_0 = \bx_0)}{\Pr(S(0)=1|\bX_0 = \bx_0)} \\
& =
\frac{\pi_{as}(\bx_0)} {\pi_{as}(\bx_0) + \pi_{pro}(\bx_0)} 
\\
&= \frac{1}{1+\xi}.
\end{split}
\end{align}
Substituting (\ref{as_in_S0_identification_CE_CE}) and (\ref{S0_identification_CE}) in 
(\ref{pi_as_x_LTP_CE}), we obtain
\begin{align}
\label{pi_as_x_identification_CE}
\pi_{as}(\bx_0) &=  \frac{1}{1+\xi} \Pr(S=1|A=0, \bX_0 = \bx_0). 
\end{align}
By the law of total probability, $\pi_{as}$ can be written as
\begin{align}
\begin{split}
\label{pi_as_identification_CE}
\pi_{as} &= \int_{\bx_0} f_{\bX_0}(\bx_0) \pi_{as}(\bx_0) d\bx_0 \\
&= \frac{1}{1+\xi} \int_{\bx_0} f_{\bX_0}(\bx_0) \Pr(S=1|A=0, \bX_0 = \bx_0) d\bx_0  \\
&= \frac{1}{1+\xi} E_{\bX_0}\big[\Pr(S=1|A=0, \bX_0)\big], \end{split}
\end{align}
where the second line is by \eqref{pi_as_x_identification_CE}.
To finish the proof, we plug (\ref{pi_as_x_identification_CE}) and (\ref{pi_as_identification_CE}) in (\ref{X|as_bayes}).

\end{proof}
\hfill $\blacksquare$
 
\setcounter{proposition}{5}
\begin{proposition}
\label{Prop:SACEident_CE}
Under SUTVA, CE, SPPI and CPSR (for any $\xi$), the SACE is identified from the observed data by 	
\begin{equation}
\label{Eq:SACEidentCE}
SACE=E_{\tilde{f}_{\bX_0}}\big[E(Y|A=1, S=1, \bX_0) - E(Y|A=0, S=1, \bX_0)\big],
\end{equation}
where $\tilde{f}_{\bX_0}(\bx_0)$ is defined in Lemma \ref{Lem:XdistASnomono_CE}.
\end{proposition}

\begin{proof}
 
First, when replacing randomization with CE,  
\eqref{Eq:SACEouter} (Section \ref{SubSec:ProofSACEidentSPPI}),
can be replaced, by Lemma \ref{Lem:XdistASnomono_CE},
with 
\begin{align}
\begin{split}
\label{Eq:SACEouter_CE}
E[Y(a)|S(0) = 1, S(1) = 1] &= E_{\bX_0|S(0) = 1, S(1) = 1}\big\{E[Y(a)|S(1) = 1, S(0) = 1, \bX_0]\big\} \\[0.5em]
& = E_{\tilde{f}_{\bX_0}}\big[\mu_{a,as}(\bX_0)\big],
\end{split}
\end{align}

for $a=0,1$.
Second, the inner expectations in \eqref{Eq:SACEouter}, i.e. $\mu_{a,as}(\bX_0)$, are identified by
\begin{align} 
\begin{split}
\label{Eq:mu_a_g_CE}
\mu_{a,as}(\bX_0) & = E[Y(a)|S(a) = 1, \bX_0] \\[0.5em]
& = E[Y(a)|A = a, S(a) = 1, \bX_0] \\[0.5em]
& = E[Y|A = a, S = 1, \bX_0],
\end{split}
\end{align}
for $a=0,1$, where the first equality is by SPPI, the second is by CE, and the third is by SUTVA. 
Finally, 
for both $a=1$ and $a=0$,
we plug (\ref{Eq:mu_a_g_CE}) in (\ref{Eq:SACEouter_CE}), and consider the contrast between the two, to obtain the identification formula presented in \eqref{Eq:SACEidentCE}.  
\end{proof}
\hfill $\blacksquare$

\subsection{Summary of parameters required for identification}
\label{APPSubsec:parametersrequired_identification}

Table \ref{Tab:Parameters_required_for_identification} reviews identifiability of the SACE and of the principal scores, and summarizes the sensitivity parameters needed for SACE and principal scores identification under each combination of the assumptions.
For instance, under PPI and monotonicity, both the SACE and the principal scores are point-identifiable, without specifying any further parameters.  Under the weaker combination of SPPI and CPSR, the SACE is point-identifiable (Proposition 1) but the principal scores are not, unless a specific value is assumed for $\xi$. The table shows that whenever monotonicity is assumed, the principal scores are identifiable. If CPSR is assumed instead, the principal scores becomes identifiable only as a function of $\xi$. The table also presents a result not discussed in the main text. Under CPSR only, the SACE is identifiable as a function of all three sensitivity parameters. 

\begin{table}
\caption{\label{Tab:Parameters_required_for_identification} Parameters required for identification of the SACE and the principal scores, under different  assumption combinations. }
\centering
\fbox{\begin{tabular}{|l|c|c|}
 \hline
\multirow{2}{*}{Assumptions} &  \multicolumn{2}{c}{Identification parameters}   \\[0.2cm]
  & SACE & Principal scores  \\[0.05cm]
 \hline
PPI $+$ monotonicity & ---  & --- \\[0.2cm] 
SPPI $+$ CPSR &  --- &  $\xi$  \\[0.2cm]
Monotonicity &  $\alpha_1$ & ---  \\[0.2cm] 
PPI $+$ CPSR &  $\xi, \alpha_0$ & $\xi$  \\[0.2cm] 
CPSR & $\xi, \alpha_0, \alpha_1$ & $\xi$ \\
\end{tabular}}
\end{table}

\newpage
\subsection{Bounds for the sensitivity parameters}
\label{APPSubsec:bounds}
\subsubsection{Bounds for $\xi$}
\label{SubSec:XiBounds}

We now show that $\xi$ is bounded by

\[
\begin{cases}
\left[\max\left(0, \: \frac{p_0 - p_1}{p_1}\right), \:
\frac{(1 - p_1)}{p_0 - (1 - p_1)}\right] &\quad p_0 + p_1 > 1, \\ \\
\left[\max\left(0, \: \frac{p_0 - p_1}{p_1}\right), \:
\infty\right) &\quad p_0 + p_1 \le 1
\end{cases}
\]
First, it can be shown that the principal stratum proportions can be expressed as a function of $p_0, p_1$ and $\xi$ by 
\begin{align*}
\pi_{as} &= \frac{1}{1 + \xi}p_0,\\ 
\pi_{har} &= \frac{\xi}{1 + \xi}p_0, \\  \pi_{pro} &= p_1 - \frac{1}{1 + \xi}  p_0, \\
\pi_{ns} &= 1 - p_1 - \frac{\xi}{1 + \xi}p_0.
\end{align*}
Note that because $p_0\in[0,1]$ and $\xi>0$, then that $\pi_{as},\pi_{har} \in [0,1]$ does not impose any restrictions on $\xi$. Turning to the next two equations, it is easy to see that for any $p_0,p_1\in[0,1]$ we have that $\pi_{pro}$ and $\pi_{ns}$ are equal to or lower than one. However $\pi_{pro}$ and $\pi_{ns}$ are non-negative only if
\begin{align}
\pi_{pro} &= p_1 - \frac{1}{1 + \xi} p_0 \geq 0 , \label{pi pro inequality} \\
\pi_{ns} &= 1 - p_1 - \frac{\xi}{1 + \xi} p_0 \geq 0. \label{pi ns inequality}
\end{align}
From \eqref{pi pro inequality} we obtain $\xi \geq \frac{p_0 - p_1}{p_1} $, which imposes a new restriction only if $p_0 > p_1$. From (\ref{pi ns inequality}) we obtain $\xi \le \frac{(1 - p_1)}{p_0 - (1 - p_1)}$, as long as $p_0 + p_1 > 1$. 
When $p_0 + p_1 \le 1$, $\pi_{as}$ might be zero, while $\pi_{har}$ is positive and hence it is not possible to obtain an upper bound for $\xi$.

\subsubsection{Bounds for $\alpha_1$ and $\alpha_0$}
\label{SubSubSec:alphasBounds}

We now show that
$\alpha_1$ can be bounded by $$
\left[\frac{\min_{\bx_0} E(Y|A=1,S=1,\bx_0)}{\max_{\bx_0} E(Y|A=1,S=1,\bx_0)}, \: \frac{\max_{\bx_0} E(Y|A=1,S=1,\bx_0)}{\min_{\bx_0} E(Y|A=1,S=1,\bx_0)}\right],
$$
and $\alpha_0$ can be bounded by $$
\left[\frac{\min_{\bx_0} E(Y|A=0,S=1,\bx_0)}{\max_{\bx_0} E(Y|A=0,S=1,\bx_0)}, \: \frac{\max_{\bx_0} E(Y|A=0,S=1,\bx_0)}{\min_{\bx_0} E(Y|A=0,S=1,\bx_0)}\right].$$
We start with  $\alpha_1$.
For $g=as,pro$, for $s=0,1$, and
for each value of $\bX_0$, denoted by $\widetilde{\bx}_0$, on one hand we have
\begin{align}
\begin{split}
\label{alpha1_lower_bound}
\mu_{1,g}(\widetilde{\bx}_0) &= 
E[Y(1)|S(1)=1, S(0)=s, \widetilde{\bx}_0] \\
&= 
E[Y|A=1, S(1)=1, S(0)=s, \widetilde{\bx}_0]\\
&\ge \min_{\bx_0} E[Y|A=1, S(1)=1, S(0)=s, \bx_0]\\
&\ge \min_{\bx_0} E[Y|A=1, S(1)=1, \bx_0]\\
&= \min_{\bx_0} E(Y|A=1,S=1,\bx_0),
\end{split}
\end{align}
where second is due to randomization and SUTVA, and the last one is due to SUTVA. On the other hand,
by similar considerations, it follows that
\begin{align}
\begin{split}
\label{alpha1_upper_bound}
\mu_{1,g}(\widetilde{\bx}_0) &=
E[Y(1)|S(1)=1, S(0)=s, \widetilde{\bx}_0]\\ &=
E[Y|A=1, S(1)=1, S(0)=s, \widetilde{\bx}_0]\\
&\le \max_{\bx_0} E[Y|A=1, S(1)=1, S(0)=s, \bx_0]\\
&\le \max_{\bx_0} E[Y|A=1, S(1)=1, \bx_0]\\
&= \max_{\bx_0} E(Y|A=1,S=1,\bx_0).
\end{split}
\end{align}
Finally, putting together \eqref{alpha1_lower_bound} and \eqref{alpha1_upper_bound}, for $g=pro,as$ it follows that for any value of $\bX_0$, denoted by $\widetilde{\bx}_0$, we may write
\begin{align*}
\frac{\min_{\bx_0} E[Y|A=1, S=1, \bx_0]}{\max_{\bx_0} E[Y|A=1, S=1, \bx_0]}
 &\le 
\frac{\mu_{1,pro}(\widetilde{\bx}_0)}{\mu_{1,as}(\widetilde{\bx}_0)} \le \frac{\max_{\bx_0} E[Y|A=1, S=1, \bx_0]}{\min_{\bx_0} E[Y|A=1, S=1, \bx_0]},
\end{align*}
which implies that $\alpha_1 = \frac{\mu_{1,pro}(\bx_0)}{\mu_{1,as}(\bx_0)}$ is bounded by $\left[\frac{\min_{\bx_0} E(Y|A=1,S=1,\bx_0)}{\max_{\bx_0} E(Y|A=1,S=1,\bx_0)}, \: \frac{\max_{\bx_0} E(Y|A=1,S=1,\bx_0)}{\min_{\bx_0} E(Y|A=1,S=1,\bx_0)}\right]$. 
The bounds for $\alpha_0$ are obtained by similar considerations,
for $g=as,har$.


\subsection{Proof of Proposition 5}
\label{SubSec:ProofCSEs}
\label{APPSubsec:Proof of Proposition 5}
In the notation of our paper, we can write the identification result of \cite{stensrud2022conditional} as
\begin{equation}
\label{Eq:CSEident}
CSE(a_S) = \int_{\bx_0,\bx_1}E(Y|A = a_Y, S=1, \bx_0, \bx_1)\frac{f(S=1, \bx_1| A = a_S, \bx_0)f(\bx_0)}{\Pr(S=1| A = a_S)}d\bx_0d\bx_{1}\\
\end{equation}
Now, 
\begin{align}
\begin{split}
\label{Eq:CSEproof}
&\frac{f(S=1, \bx_1| A = a_S, \bx_0) f(\bx_0)} {\Pr(S=1| A = a_S)} \\
&= \frac{f(\bx_0, \bx_1| A = a_S, S=1) \Pr(S=1|A=a_S)\Pr(A=a_S)f(\bx_0)}{\Pr(S=1|A=a_S)f(A=a_S, \bx_0)} \\
&= \frac{f(\bx_0, \bx_1| A = a_S, S = 1)\Pr(A = a_S) f(\bx_0)}{f(A = a_S, \bx_0)}\\
&= f(\bx_0, \bx_1| A = a_S, S=1),
\end{split}
\end{align}
where in the second line $f(S = 1, \bx_1| A = a_S, \bx_0)$ was rewritten as a joint distribution divided by the marginal and the fourth line follows by randomization because $f(A=a_S, \bx_0)=\Pr(A=a_S)f(\bx_0)$. Plugging \eqref{Eq:CSEproof} in \eqref{Eq:CSEident} we get Proposition 5.


\section{EM algorithms} 
\label{Sec:EMappendix}

In this section, we present the EM algorithms we have used for the estimation of the principal scores. 
We first present the EM algorithm for estimating the sequential logistic regression model.  
This estimation process requires a value for $\xi$, which can also be zero if the monotonicity assumption is imposed. 
We then describe the multinomial logistic regression model DGM for the principal strata. For this DGM, we give the EM algorithm developed by \cite{ding2017principal} under monotonicity, and then provide the details of needed modifications under CPSR.

Towards the descriptions of the algorithms we define $\bQ_i = (\bZ_{i},A_i,S_i)$, where $\bZ_i=(1, \bX_{0i})$, is the available data for unit $i$.

\subsection{EM algorithm, sequential logistic regression model} 
\label{App: sec:EM_DGM_seq}

With a slight abuse of notations,
let $\gamma_{0,S(a)}$ and $\bgamma_{S(a)}, a=0,1$, denote the intercepts and coefficient vectors, respectively, of the two logistic regressions. The E-step and M-step are given below. Before presenting them, we repeat the definition of the DGM and review a number of useful quantities.

Recall that under the sequential regression model, the following logistic regression model is assumed for $S(0)$.
\begin{equation}
\label{Eq:EMFirstStepSeqRegS0}
\Pr(S(0)=1|\bx_{0i}) = \frac{\exp(\gamma_{0,S(0)} + \bgamma_{S(0)}^T \bx_{0i})} 
{1 + \exp(\gamma_{0,S(0)} + \bgamma_{S(0)}^T \bx_{0i})},
\end{equation}
and $\Pr(S_i(0)=0|\bx_{0i}) = 1 - \Pr(S_i(0)=1|\bx_{0i})$. Note that $\Pr(S_i(0)=1|\bx_{0i}) = \Pr(G \in \{as,har\}|\bx_{0i})$, and
$\Pr(S_i(0)=0|\bx_{0i}) = \Pr(G \in \{ns,pro\}|\bx_{0i})$.
The second part of the model is a logistic regression model for $S_i(1)$ among those with $S_i(0)=0$, 
\begin{equation}
\label{Eq:EMsecondStepSeqRegS1_given_S0=0}
\Pr[S_i(1)=1|S_i(0)=0,\bx_{0i}] = \frac{\exp(\gamma_{0,S(1)} + \bgamma_{S(1)}^T \bx_{0i})}{1+\exp(\gamma_{0,S(1)} + \bgamma_{S(1)}^T \bx_{0i})},
\end{equation}
and $\Pr[S_i(1)=0|S_i(0)=0,\bx_{0i}] = 1 -\Pr[S_i(1)=1|S_i(0)=0,\bx_{0i}]$. Finally, recall also that form CPSR we have.
\begin{equation*}
\Pr[S_i(1)=1|S_i(0)=1,\bx_{0i}] = \frac{1}{1+\xi},
\end{equation*}
and $\Pr[S_i(1)=0|S_i(0)=1,\bx_{0i}] = 1 - \Pr[S_i(1)=1|S_i(0)=1,\bx_{0i}] = \frac{\xi}{1+\xi}$.

To describe the EM algorithm, let $\gamma^k_{0,S(a)}$ and $\bgamma^k_{S(a)}$ , $a=0,1$ denote the values of the coefficients in the $k$--th iteration of the algorithm. Denote also the conditional strata probabilities in the $k$--th iteration by 
\begin{align}
\begin{split}
\label{pis_EM_DGM_seq}
&\pi_{pro}^k(\bx_{0i}) = \left(\frac{1}
{1 + \exp(\gamma^k_{0,S(0)} + (\bgamma_{S(0)}^k)^T \bx_{0i})}\right) \cdot
\left(\frac{\exp(\gamma^k_{0,S(1)} + (\bgamma_{S(1)}^k)^T \bx_{0i})}{1+\exp(\gamma^k_{0,S(1)} + (\bgamma_{S(1)}^k)^T \bx_{0i})}\right),
\\[0.7em]
&\pi_{ns}^k(\bx_{0i}) = \left(\frac{1} 
{1 + \exp(\gamma^k_{0,S(0)} + \bgamma_{S(0)}^k)^T \bx_{0i})}\right) \cdot
\left(\frac{1}{1+\exp(\gamma^k_{0,S(1)} + \bgamma_{S(1)}^k)^T \bx_{0i})}\right), 
\\[0.7em]
&\pi_{as}^k(\bx_{0i},\xi) = \frac{\exp(\gamma^k_{0,S(0)} + \bgamma_{S(0)}^k)^T \bx_{0i})}
{1 + \exp(\gamma^k_{0,S(0)} + \bgamma_{S(0)}^k)^T \bx_{0i})} \cdot \frac{1}{1+\xi},
\\[0.7em]
&\pi_{har}^k(\bx_{0i},\xi) = \frac{\exp(\gamma^k_{0,S(0)} + \bgamma_{S(0)}^k)^T \bx_{0i})}
{1 + \exp(\gamma^k_{0,S(0)} + \bgamma_{S(0)}^k)^T \bx_{0i})} \cdot \frac{\xi}{1+\xi}.
\end{split}
\end{align}
The algorithm starts with choosing initial values for $(\gamma_{0,S(0)}, \bgamma_{S(0)}, \gamma_{0,S(1)},\bgamma_{S(1)})$,
denoted by 
$(\gamma^0_{0,S(0)}, \bgamma^0_{S(0)}, \gamma^0_{0,S(1)},\bgamma^0_{S(1)})$, and setting $k=0$.
Then, at each algorithm iteration $k$, the estimated parameters $(\gamma^k_{0,S(0)}, \bgamma^k_{S(0)}, \gamma^k_{0,S(1)},\bgamma^k_{S(1)})$ are updated according to the following E-step and M-step until convergence. Loosely, at the E-step, for model \eqref{Eq:EMFirstStepSeqRegS0}, the weights $\widetilde{w}^k_i$ are calculated for units with $A_i=1$ as for these units $S_i(0)$ is unknown. These units are then duplicated, and appear with two copies.  In the first copy $S_i(0)=0$, and the calculated weight is the probability that $S_i(0)=0$ given the covariates, the treatment and survival status, and the current values of the regression coefficients. In the second copy, $S(0)=1$ and the calculated weight is the probability that $S(0)=1$ given the observed data and current values of the regression coefficient. For units with $A=0$, $S(0)$ is known, and thus these are not duplicated and receive weight of one. The second part of the E-step for model \eqref{Eq:EMsecondStepSeqRegS1_given_S0=0} also calculate weights, denoted by $\widetilde{\widetilde{w}}^k_i$, and duplicate units as needed. The details are given below. In the M-step the two models are fitted for the augmented datasets that include the weighted and duplicated units. Details of the entire algorithm are given below.

 \begin{enumerate}
 \item \textbf{E-step:}
 \begin{enumerate}
 \item For $i=1,...,n$, calculate $\Pr(S_i(0)=1|\bQ_i)$, using the current values of the parameters, $(\gamma^k_{0,S(0)}, \bgamma^k_{S(0)}, \gamma^k_{0,S(1)},\bgamma^k_{S(1)})$.
 The obtained probabilities are 
\begin{equation}
\label{Eq:PrS0=1GivenQ}
\Pr[S_i(0)=1|\bQ_i] = \left\{\begin{array}{cr}
         0, & A_i=0,S_i=0 \\[0.7em]
   \frac{\pi_{har}^k(\bx_{0i},\xi)}{\pi_{ns}^k(\bx_{0i}) + \pi_{har}^k(\bx_{0i},\xi)}, & A_i=1,S_i=0 \\[0.7em]
1, &  A_i=0,S_i=1 \\[0.7em]
       \frac{\pi_{as}^k(\bx_{0i},\xi)}{\pi_{as}^k(\bx_{0i},\xi) + \pi_{pro}^k(\bx_{0i})}, & A_i=1, S_i=1
\end{array}\right. \\,
\end{equation}
where $\pi_{g}^k(\bx_{0i})$ are defined in  \eqref{pis_EM_DGM_seq}.
\item Create Dataset ``0'', an augmented dataset with duplicated and weighted units, as follows. 
\begin{itemize}
\item For units with $A_i=0$, $\widetilde{w}^k_i=1$ and $S_i(0)$ is the observed survival status. 
\item For units with $A=1$, $S_i(0)$ is unknown. Hence, create two units, one with $S_i(0)=0$  and  $\widetilde{w}^k_i=\Pr[S_i(0)=0|\bQ_i]$ and one  with $S_i(0)=1$ and  $\widetilde{w}^k_i=\Pr[S_i(0)=1|\bQ_i]$.
\end{itemize}
\item Create Dataset ``1'', an augmented dataset with duplicated and weighted units, as follows.
\begin{itemize}
\item Omit units with $\{A=0, S=1\}$. These units are known to have $S_i(0)=1$ and hence should not be included when fitting model \eqref{Eq:EMsecondStepSeqRegS1_given_S0=0}.
\item Units with  $\{A=0, S=0\}$ are part of the desired subset, but their outcome $S(1)$ is unknown. These units are duplicated. One copy with $S_i(1)=1$ and a weight  $\widetilde{\widetilde{w}}^k_i=\Pr[S_i(1)=1|S_i(0)=0, \bx_{0i})$. The second copy with $S_i(1)=0$ and a weight $\widetilde{\widetilde{w}}^k_i=\Pr[S_i(1)=0|S_i(0)=0, \bx_{0i}]$.  The conditional probability (\ref{Eq:EMsecondStepSeqRegS1_given_S0=0}), is given by 
$$
\Pr[S_i(1)=1|S_i(0)=0, \bx_{0i}] = \frac{\pi_{pro}^k(\bx_{0i})}{\pi_{ns}^k(\bx_{0i}) + \pi_{pro}^k(\bx_{0i})} =\frac{\exp(\gamma^k_{0,S(1)} + (\bgamma_{S(1)}^k)^T \bx_{0i})}{1+\exp(\gamma^k_{0,S(1)} + (\bgamma_{S(1)}^k)^T \bx_{0i})}.
$$ 
\item For units with  $A_i=1$, their outcome $S_i(1)$ is known, but it is unknown whether they are part of the desired subset. Therefore, a single copy of each such unit is included in Dataset ``1'', with with weights $\widetilde{\widetilde{w}}^k_i=\Pr(S_i(0)=0|\bQ_i)$, defined in \eqref{Eq:PrS0=1GivenQ}.
\end{itemize}

 \end{enumerate}
\item \textbf{M-step:}

\begin{enumerate}[(I)]
\item \textbf{Logistic regression model for $S(0)$:}
\begin{enumerate}[(i)]
\item Fit the logistic regression model  given in Equation \eqref{Eq:EMFirstStepSeqRegS0} for Dataset ``0'', with the weights $\widetilde{w}^k_i$.
\item Update the parameters to be the estimated $(\gamma^{k+1}_{0,S(0)}, \bgamma^{k+1}_{S(0)})$ from the logistic regression model.
\end{enumerate}

\item \textbf{Logistic regression model for $S(1)$, given $S(0)=0$:}
\begin{enumerate}[(i)]
\item Fit the logistic regression model given in Equation \eqref{Eq:EMsecondStepSeqRegS1_given_S0=0} for Dataset ``1'' , with the weights $\widetilde{\widetilde{w}}^k_i$.
\item Update the parameters to be the estimated $(\gamma^{k+1}_{0,S(1)},\bgamma^{k+1}_{S(1)})$ 
from the logistic regression model.
\end{enumerate}

\end{enumerate}

\end{enumerate}

Note that fitting the model used randomization and SUTVA, namely that
$\Pr(S_i(a)=s|\bX_{0i}=\bx_{0i}) = 
 \Pr(S_i=s|A_i=a, \bX_{0i}=\bx_{0i})$ for a=0,1, s=0,1. Therefore, one can estimate $\gamma_{0,S(0)}$ and  $\bgamma_{S(0)}$
by fitting a logistic regression model among the untreated, and estimate only $\gamma_{0,S(1)}$ and $\bgamma_{S(1)}$ through the EM algorithm. By adopting this approach (as we did), the creation of dataset ``0'', and the estimation of the logistic regression model given in Equation \eqref{Eq:EMFirstStepSeqRegS0}, during the E and M steps, respectively, become redundant.


\subsection{Multinomial logistic regression model for the principal strata and its estimation}
\label{App:DGMmulti_DGM_and_EM}
We first describe the DGM under a multinomial logistic regression for the principal strata proportions before describing the EM algorithms for fitting this model.
\subsubsection{DGM under the multinomial logistic regression model for the principal strata}
\label{SubSec:AppDGMmulti}

Let $V_i = G_i$ if $G_i \in \{ns, pro\}$, and $V_i = ah$ if $G_i \in \{as, har\}$.  Let $\pi_v(\bx_{0i})=\Pr(V_i=v| \bX_{0i}=\bx_{0i})$, $v \in \{ns, pro, ah\}$. 
Consider the following multinomial regression model for $V_i$, with the protected as the reference category.
\begin{equation}
\label{Eq:MultiRegV}
\pi_v(\bx_{0i}) = \frac{\exp(\gamma_{0,v} + \bgamma_v^T \bx_{0i})} 
{1 + \exp(\gamma_{0,ns} + \bgamma_{ns}^T \bx_{0i}) + \exp(\gamma_{0,ah} + \bgamma_{ah}^T \bx_{0i})},
\end{equation}
for $v=ns,ah$, and $\pi_{pro}(\bx_{0i}) = 1 - \pi_{ns}(\bx_{0i}) - \pi_{ah}(\bx_{0i})$.
Among $V_i = ah$, under CPSR   we have
\begin{align}
\begin{split}
\label{Eq:MultiRegV_ah_by_xi}
\pi_{as}(\bx_{0i}) &= \pi_{ah}(\bx_{0i})\frac{1}{1 + \xi},\\
\pi_{har}(\bx_{0i}) &= \pi_{ah}(\bx_{0i})\frac{\xi}{1 + \xi}.
\end{split}
\end{align}

\subsubsection{EM algorithm for the multinomial logistic regression model under monotonicity}
\label{sec:EMmono}
Under monotonicity, the $ah$ group only include always-survivors and we replace the $ah$ notation with $as$.
The algorithm starts with choosing initial values for $(\gamma_{0,ns}, \bgamma_{ns}, \gamma_{0,as},\bgamma_{as})$, denoted by $(\gamma^0_{0,ns}, \bgamma^0_{ns}, \gamma^0_{0,as},\bgamma^0_{as})$ and setting $k=0$. Then, at each algorithm iteration $k=0,1,...,$ until convergence, the  estimated parameters $(\gamma^k_{0,ns},\bgamma^k_{ns}, \gamma^k_{0,as}, \bgamma^k_{as})$ are updated according to the following E-step and M-step.
\begin{enumerate}
\item \textbf{E-step:}
\begin{enumerate}
\item For $i=1,...,n$, calculate $\Pr(G_i=g|\bQ_i)$ according to model Equation (\ref{Eq:MultiRegV}), using the current values of the parameters, $(\gamma^k_{0,ns}, \bgamma^k_{ns}, \gamma^k_{0,as},\bgamma^k_{as})$. The obtained probabilities are 
\begin{align}
\begin{split}
\Pr(G_i=as|\bQ_i) &= \left\{\begin{array}{cr}
    0, & S_i=0 \\[0.3em]
    1, &  A_i=0,S_i=1 \\[0.3em]
    \frac{1}{1 + \exp[-\gamma^k_{0,as} - (\bgamma^k_{as})^T \bx_{0i}]}, & A_i=1, S_i=1 \\[0.3em]
    \end{array}\right. \\[0.7em]
\Pr(G_i=ns|\bQ_i) &= \left\{\begin{array}{cr}
        \frac{1}{1 + \exp[-\gamma^k_{0,ns} - (\bgamma^k_{ns})^T \bx_{0i}]}, & A_i=0, S_i=0 \\[0.3em]
        1, & A_i=1,S_i=0 \\[0.3em]
        0, &  S_i=1 
        \end{array}\right. \\[0.7em]
\Pr(G_i=pro|\bQ_i) &= 1 - \Pr(G_i=as|\bQ_i) - \Pr(G_i=ns|\bQ_i).
\end{split}
\end{align}

\item Create an augmented and weighted dataset as follows. 
For units in $\{A=0,S=1\}$ and $\{A=1,S=0\}$, $w^k_i=1$.
For each unit $i$ in $\{A=1,S=1\}, \{A=0,S=0\}$, we created two units, with weights $w^k_i$, according to its possible strata and $\Pr(G_i=g|\bQ_i)$.
\end{enumerate}
\item \textbf{M-step:}
\begin{enumerate}
\item Fit the multinomial regression model (\ref{Eq:MultiRegV}) for the augmented data but with the weights $w^k_i, i=1,...,n$.
\item Update the parameters to be $(\gamma^{k+1}_{0,ns}, \bgamma^{k+1}_{ns}, \gamma^{k+1}_{0,as},\bgamma^{k+1}_{as})$ from the multinomial regression model.
\end{enumerate}
\end{enumerate}

\subsubsection{EM algorithm for the multinomial logistic regression model without monotonicity} 
\label{sec:EMnoMono}

When monotonicity is replaced with the weaker CPSR, we hanged the reference category to $ah$, as oppose to $pro$, as described in 
Section \ref{SubSec:AppDGMmulti}. Therefore, to avoid confusion, the model coefficients $\bgamma_{0,v},\bgamma_v$ are replaced with $\breve{\bgamma}_{0,v}, \breve{\bgamma}_v$.

The EM step described in Section \ref{sec:EMmono}
is revised as follows. 

The algorithm starts with choosing initial values for $(\breve{\gamma}_{0,ns}, \breve{\bgamma}_{ns}, \breve{\gamma}_{0,pro},\breve{\bgamma}_{pro})$, denoted by $(\breve{\gamma}^0_{0,ns}, \breve{\bgamma}^0_{ns}, \breve{\gamma}^0_{0,pro},\breve{\bgamma}^0_{pro})$ and setting $k=0$. Then, at each algorithm iteration $k=0,1,...,$ until convergence, the estimated parameters $(\breve{\gamma}^k_{0,ns}, \breve{\bgamma}^k_{ns}, \breve{\gamma}^k_{0,pro}, \breve{\bgamma}^k_{pro})$ are updated according to the following E-step and M-step.
\begin{enumerate}
\item \textbf{E-step:}
\begin{enumerate}
\item For $i=1,...,n$, calculate $\Pr(V_i=v|\bQ_i)$ according to model given in Equation \eqref{Eq:MultiRegV}, using the current values of the parameters, $(\breve{\gamma}^k_{0,ns}, \breve{\bgamma}^k_{ns}, \breve{\gamma}^k_{0,pro},\breve{\bgamma}^k_{pro})$. The obtained probabilities are (recall Table 1)
\begin{align}
\begin{split}
\Pr(V_i=ah|\bQ_i) &= \left\{\begin{array}{cr}
    0, & A_i=0,S_i=0, \\[0.7em]
    \frac{\xi}{\xi + (1 + \xi)\exp[\breve{\gamma}^k_{0,ns} + (\breve{\bgamma}_{ns}^k)^T \bx_{0i}]}, & A_i=1,S_i=0 \\[0.7em]
    1, &  A_i=0,S_i=1 \\[0.7em]
    \frac{1}{1 + (1 + \xi)\exp[\breve{\gamma}^k_{0,pro} + (\breve{\bgamma}_{pro}^k)^T \bx_{0i}]}, & A_i=1, S_i=1
    \end{array}\right. \\[0.7em]
\Pr(V_i=ns|\bQ_i) &= \left\{\begin{array}{cr}
\frac{\exp[\breve{\gamma}^k_{0,ns} + (\breve{\bgamma}_{ns}^k)^T \bx_{0i}]}{\exp[\breve{\gamma}^k_{0,ns} + (\breve{\bgamma}_{ns}^k)^T \bx_{0i}] + \exp[\breve{\gamma}^k_{0,pro} + (\breve{\bgamma}_{pro}^k)^T \bx_{0i}]}, & A_i=0, S_i=0
        \\[0.7em]
\frac{\exp[\breve{\gamma}^k_{0,ns} + (\breve{\bgamma}_{ns}^k)^T \bx_{0i}]}{\exp[\breve{\gamma}^k_{0,ns} + (\breve{\bgamma}_{ns}^k)^T \bx_{0i}] + \frac{\xi}{1 + \xi}}, & A_i=1,S_i=0 \\[0.7em]
0, &  S_i=1 
\end{array}\right. \\[0.7em]
\Pr(V_i=pro|\bQ_i) &= 1 - \Pr(V_i=ah|\bQ_i) - \Pr(V_i=ns|\bQ_i).
\end{split}
\end{align}

\item Create an augmented and weighted dataset as follows. 
For units in $\{A=0,S=1\}$, $w^k_i=1$.
For each unit $i$ in $\{A=1,S=1\}, \{A=0,S=0\}$ and $\{A=1,S=0\}$, we created two units, with weights $w^k_i$, according to its possible strata and $\Pr(V_i=v|\bQ_i)$.
\end{enumerate}
\item \textbf{M-step:}
\begin{enumerate}
\item Fit the multinomial regression model \eqref{Eq:MultiRegV} for the augmented data but with the weights $w^k_i, i=1,...,n$.
\item Update the parameters to be $(\breve{\gamma}^{k+1}_{0,ns}, \breve{\bgamma}^{k+1}_{ns}, \breve{\gamma}^{k+1}_{0,pro},\breve{\bgamma}^{k+1}_{pro})$ from the multinomial regression model.
\end{enumerate}
\end{enumerate}




\section{Further details about the simulation studies}
\label{Sec:AppSims}

\subsection{Parameters list and stratum proportions}
\label{SubSec:List of parameters}

We now present the values of the parameters we used for the simulation studies described  in Section 7 of the main text and in the additional simulations presented in Section \ref{SubSec:AppSimRes}.
As in Section \ref{App: sec:EM_DGM_seq},
let $\gamma_{0,S(a)}$ and $\bgamma_{S(a)}, a=0,1$ denote the logistic regressions intercepts and coefficients, respectively, where $\bgamma_{S(a)} = (c_{S(a)},...c_{S(a)})$.
For the multinomial regression model for the principal strata, we consider the intercept and the coefficients $\gamma_{0,v}, \bgamma_{v}$, 
described in Section \ref{SubSec:AppDGMmulti},
where $\bgamma_{g} = (c_{g},...c_{g})$, for $v=ns, ah$.

\begin{itemize}
\item Tables \ref{Tab:gamma parameter table Scenario A DGMseq} and \ref{Tab:gamma parameter table Scenario B DGMseq} present the values of the parameters $\gamma_{0,S(a)}, c_{S(a)}$, for $a = 0,1$, 
under a correctly specified (left side) and misspecified (right side) principal score model. For the misspecified principal score model
the modified coefficients of the squared and the exponential  transformations or interactions are presented, while the other coefficients remain the same as in the left panels.

\item Tables \ref{Tab:gamma parameter table Scenario A DGMmulti} and \ref{Tab:gamma parameter table Scenario B DGMmulti} are the the analogue of Tables \ref{Tab:gamma parameter table Scenario A DGMseq} and \ref{Tab:gamma parameter table Scenario B DGMseq}, under the multinomial regression model for the principal strata,
where $\bgamma_{g} = (c_{g},...c_{g})$, for $g=as,ns$.  
Under the multinomial regression model, when misspecified, the true principal score model included squared and log terms for two of the covariates. 

\item Table \ref{Tab:parameter table beta} presents the values of outcome regression model parameters $\beta_{0,0}, \bbeta_{0}, \beta_{0,1}$ and $\bbeta_{1}$.

\item Tables \ref{Tab:AppSimsPiDGMseq} and \ref{Tab:AppSimsPiDGMmulti} give the strata proportions for each scenario, under the sequential logistic regression model and under the multinomial logistic regression model, respectively.
\end{itemize}

\begin{table}
\caption{\label{Tab:gamma parameter table Scenario A DGMseq}True $\gamma_{0,S(0)}, c_{S(0)}, \gamma_{0,S(1)}, c_{S(1)}$ values in the simulations, Scenario A.} 
\centering
\fbox{
\scriptsize	  
\begin{tabular}{lllllllll}
\multicolumn{1}{l}{} &
\multicolumn{4}{l}{\textbf{PS correctly specified}} & 
\multicolumn{4}{l}{\textbf{PS correctly specified}} \\[0.25cm]
$\pi_{pro}$& coefficient & $k=3$ & $k=5$ & $k=10$ 
& transformation & $k=3$ & $k=5$ & $k=10$ \\[0.25cm]
\hline
\multirow{4}*{High} 
& $\gamma_{0,S(0)}$ & -0.69 & -0.07 & -0.3 & square & -0.92 & -0.06 & -0.12 \\[0.25cm]
& $c_{S(0)}$ & 0.46 & 0.03 & 0.06 & int/exp & 0.92 & 0.06 & 0.12 \\[0.25cm] 
& $\gamma_{0,S(1)}$ & 0.5 & 0.41 & 0.17 & square & -1.12 & -0.4 & -0.3 \\[0.25cm]
& $c_{S(1)}$ & 0.56 & 0.2 & 0.15 & int/exp & 1.12 & 0.4 & 0.3 \\[0.25cm]
\hline
\multirow{4}*{Low} 
& $\gamma_{0,S(0)}$ & -0.1 & 0.00 & 0.05 & square & -0.14 & 0 & 0.02 \\[0.25cm] 
& $c_{S(0)}$ & 0.07 & 0.00 & -0.01 & int/exp & 0.14 & 0 & -0.02 \\[0.25cm]
&  $\gamma_{0,S(1)}$ & -0.9 & -0.26 & -0.21 & square & 0.9 & 1.26 & 0.54 \\[0.25cm] 
& $c_{S(1)}$ & -0.45 & -0.63 & -0.27 & int/exp & -0.9 & -1.26 & -0.54 \\ 
\end{tabular}}
\end{table}

\begin{table}
\caption{\label{Tab:gamma parameter table Scenario B DGMseq}True $\gamma_{0,S(0)}, c_{S(0)}, \gamma_{0,S(1)}, c_{S(1)}$ values in the simulations, Scenario B.} 
\centering
\fbox{
\scriptsize	  
\begin{tabular}{lllllllll}
\multicolumn{1}{l}{} &
\multicolumn{4}{l}{\textbf{PS correctly specified}} & 
\multicolumn{4}{l}{\textbf{PS correctly specified}} \\[0.25cm]
$\pi_{pro}$ & coefficient & $k=3$ & $k=5$ & $k=10$ 
& transformation & $k=3$ & $k=5$ & $k=10$ \\[0.25cm]
\hline
\multirow{4}*{High} 
& $\gamma_{0,S(0)}$ & 0.46 & 0.39 & -0.5 & square & -1.12 & -0.66 & -0.76 \\[0.25cm]
& $c_{S(0)}$ & 0.56 & 0.33 & 0.38 & int/exp & 1.12 & 0.66 & 0.76 \\[0.25cm] 
& $\gamma_{0,S(1)}$ & 1.4 & 0.85 & 0.5 & square & 0.1 & -1.7 & -1.00 \\[0.25cm]
& $c_{S(1)}$ & -0.05 & 0.85 & 0.5 & int/exp & -0.1 & 1.7 & 1.00 \\[0.25cm]
\hline
\multirow{4}*{Low} 
& $\gamma_{0,S(0)}$ & 0.51 & -0.26 & -0.51 & square & -1.02 & -1.5 & -0.76 \\[0.25cm] 
& $c_{S(0)}$ & 0.51 & 0.75 & 0.38 & int/exp & 1.02 & 1.5 & 0.76 \\[0.25cm]
&  $\gamma_{0,S(1)}$ & -0.19 & -0.17 & 0.24 & square & 0.94 & 0.84 & 0.50 \\[0.25cm] 
& $c_{S(1)}$ & -0.47 & -0.42 & -0.25 & int/exp & -0.94 & -0.84 & -0.50\\ 
\end{tabular}}
\end{table}

\begin{table}
\caption{\label{Tab:gamma parameter table Scenario A DGMmulti} True $\gamma_{0,as}, c_{as}, \gamma_{0,ns}, c_{ns}$ values in the simulations, Scenario A.
The principal strata were generated according to the multinomial logistic regression model.} 
\centering
\fbox{
\scriptsize	  
\begin{tabular}{llllllllll}
\multicolumn{2}{l}{} &
\multicolumn{4}{l}{\textbf{PS correctly specified}} & 
\multicolumn{4}{l}{\textbf{PS misspecification}} \\[0.25cm]
$\pi_{pro}$ & G & coefficient & $k=3$ & $k=5$ & $k=10$ 
& transformation & $k=3$ & $k=5$ & $k=10$ \\[0.25cm]
\hline
\multirow{4}*{High} 
&  \multirow{2}*{as} 
& $\gamma_{0,as}$ & -0.1 & -0.05 & -0.954 & square & -0.81 & -0.48 & -0.75 \\[0.25cm]
& & $c_{as}$ & 0.27 & 0.16 & 0.25 & log & 0.81 & 0.48 & 0.75 \\[0.25cm] 
& \multirow{2}*{ns} 
& $\gamma_{0,ns}$ & -0.52 & -0.4 & -0.31 & square & 1.8 & 0.75 & 0.48 \\[0.25cm]
& & $c_{ns}$ & -0.6 & -0.25 & -0.16 & log & -1.8 & -0.75 & -0.48 \\[0.25cm]
\hline
\multirow{4}*{Low} 
&  \multirow{2}*{as} 
& $\gamma_{0,as}$ & -0.07 & 0.45 & 0.024 & square & -3.72 & -2.25 & -1.23 \\[0.25cm] 
& & $c_{as}$ & 1.24 & 0.75 & 0.41 & log & 3.72 & 2.25 & 1.23 \\[0.25cm]
& \multirow{2}*{ns} 
&  $\gamma_{0,ns}$ & 1.25 & 0.62 & 0.26 & square & -0.72 & -1.8 & -0.96 \\[0.25cm] 
& & $c_{ns}$ & 0.24 & 0.6 & 0.32 & log & 0.7 & 1.80 & 0.96 \\ 
\end{tabular}}
\end{table}

\begin{table}
\caption{\label{Tab:gamma parameter table Scenario B DGMmulti} True $\gamma_{0,as}, c_{as}, \gamma_{0,ns}, c_{ns}$ values in the simulations, Scenario B.
The principal strata were generated according to the multinomial logistic regression model.} 
\centering
\fbox{
\scriptsize	  
\begin{tabular}{llllllllll}
\multicolumn{2}{l}{} &
\multicolumn{4}{l}{\textbf{PS correctly specified}} & 
\multicolumn{4}{l}{\textbf{PS misspecification}} \\[0.25cm]
$\pi_{pro}$& G & coefficient & $k=3$ & $k=5$ & $k=10$ 
& transformation & $k=3$ & $k=5$ & $k=10$ \\[0.25cm]
\hline
\multirow{4}*{High} 
&  \multirow{2}*{as} 
& $\gamma_{0,as}$ & 0.6 & -0.5 & -1.01 & square & -3.975 & -4.5 & -2.4 \\[0.25cm]
& & $c_{as}$ & 1.325 & 1.5 & 0.8 & log & 3.975 & 4.5 & 2.4 \\[0.25cm] 
& \multirow{2}*{ns} 
& $\gamma_{0,ns}$ & -0.25 & -0.25 & -1.3 & square & -0.75 & -0.75 & -1.35 \\[0.25cm]
& & $c_{ns}$ & 0.25 & 0.25 & 0.45 & log & 0.75 & 0.75 & 1.35 \\[0.25cm]
\hline
\multirow{4}*{Low} 
&  \multirow{2}*{as} 
& $\gamma_{0,as}$ & 0.84 & -0.12 & -0.45 & square & -3.6 & -3.75 & -1.83 \\[0.25cm] 
& & $c_{as}$ & 1.2 & 1.25 & 0.61 & log & 3.6 & 3.75 & 1.83 \\[0.25cm]
& \multirow{2}*{ns} 
&  $\gamma_{0,ns}$ & 0.32 & 0.45 & 0.4 & square & 0.6 & 0.6 & 0.06 \\[0.25cm] 
& & $c_{ns}$ & -0.2 & -0.2 & -0.02 & log & -0.6 & -0.6 & -0.06\\ 
\end{tabular}}
\end{table}

\begin{table}
\caption{\label{Tab:parameter table beta} True $\beta_{0,0}, \bbeta_{0}, \beta_{0,1}$, $\bbeta_{1}$ values in the simulations.}
\centering
\fbox{
\scriptsize	
\begin{tabular}{l|lllllllllllll}
\hline\\[0.01em]
Outcome model & $k$ & $a$ & $\beta_{0,a}$ & $\beta_{1,a}$ & $\beta_{2,a}$ & $\beta_{3,a}$ & $\beta_{4,a}$ & $\beta_{5,a}$ & $\beta_{6,a}$ & $\beta_{7,a}$ & $\beta_{8,a}$ & $\beta_{9,a}$ & $\beta_{10,a}$ \\[0.25em]
\hline
\multirow{6}*{True model}\\ 
\multirow{6}*{does not include}\\
\multirow{6}*{$A-\bX_0$ interactions}
& \multirow{2}*{3} & 1  & 22 & 3 & 4 & 5 &  &  &  &  &  &  &  \\
& & 0 & 20 & 3 & 4 & 5 &  &  &  &  &  &  &  \\[0.1em] 
\cline{2-14} \\
& \multirow{2}*{5} & 1 & 22 & 3 & 4 & 5 & 1 & 3 &  &  &  &  &  \\[0.1em] 
& & 0 & 20 & 3 & 4 & 5 & 1 & 3 &  &  &  &  &  \\[0.1em] 
\cline{2-14}\\
& \multirow{2}*{10}  & 1 & 22 & 5 & 2 & 1 & 3 & 5 & 5 & 2 & 1 & 3 & 5 \\[0.1em]
&  & 0 & 20 & 5 & 2 & 1 & 3 & 5 & 5 & 2 & 1 & 3 & 5 \\[0.1em]
\hline
\multirow{6}*{True model includes} \\ 
\multirow{6}*{$A-\bX_0$ interactions}
& \multirow{2}*{3} & 1 & 22 & 5 & 2 & 1 &  &  &  &  &  & &  \\[0.1em]
&  & 0 & 20 & 3 & 3 & 0 &  &  &  &  &  &  &  \\[0.1em]
\cline{2-14}\\
& \multirow{2}*{5} & 1 & 22 & 5 & 2 & 1 & 3 & 5 &  &  &  &  &  \\[0.1em]
&  & 0 &  20 & 3 & 3 & 0 & 1 & 3 &  &  &  &  &  \\[0.1em]
\cline{2-14}\\
& \multirow{2}*{10} & 1 & 22 & 5 & 2 & 1 & 3 & 5 & 5 & 2 & 1 & 3 & 5 \\[0.1em]
& & 0 & 20 & 3 & 3 & 0 & 1 & 3 & 3 & 3 & 0 & 1 & 3 \\[0.1em]  
\end{tabular}}
\end{table}    

\begin{table}
\caption{\label{Tab:AppSimsPiDGMseq}Strata proportions for each scenario in the simulations, under correctly specified and misspecified principal score model.} 
\centering
\fbox{
\begin{tabular}{c|ccccccc}
 \hline
 & & \multicolumn{3}{c}{High $\pi_{pro}$} & \multicolumn{3}{c}{Low $\pi_{pro}$} \\
\cline{2-8}\\
& Scenario & $\pi_{ah}$ & $\pi_{pro}$ & $\pi_{ns}$&$\pi_{ah}$ & $\pi_{pro}$ & $\pi_{ns}$ \\
\hline\\
\multirow{2}*{Correctly specified} \\ 
\multirow{2}*{models}
& A  & 0.50 & 0.35 & 0.15 & 0.50 & 0.10 & 0.40 \\
& B & 0.75 & 0.20 & 0.05 & 0.75 & 0.10 & 0.15 \\
\hline
\multirow{2}*{Misspecified principal} \\ 
\multirow{2}*{score model ($k=3$)}
& A  & 0.55 & 0.22 & 0.23 & 0.53 & 0.14 & 0.33 \\
& B & 0.71 & 0.24 & 0.05 & 0.71 & 0.19 & 0.10 \\
\hline
\multirow{2}*{Misspecified principal} \\ 
\multirow{2}*{score model ($k=5$)}
& A  & 0.51 & 0.34 & 0.15 & 0.50 & 0.13 & 0.37 \\
& B & 0.73 & 0.13 & 0.14 & 0.70 & 0.19 & 0.11 \\
\hline
\multirow{2}*{Misspecified principal} \\ 
\multirow{2}*{score model ($k=10$)}
& A  & 0.52 & 0.32 & 0.16 & 0.49 & 0.11 & 0.40 \\
& B & 0.73 & 0.16 & 0.11 & 0.73 & 0.15 & 0.12 \\
\end{tabular}}
\end{table}

\begin{table}
\caption{\label{Tab:AppSimsPiDGMmulti} Strata proportions at each scenario in the simulations, under correctly specified and misspecified principal score model. 
The principal strata were generated according to the multinomial logistic regression model.} 
\centering
\fbox{
\begin{tabular}{c|ccccccc}
 \hline
 & & \multicolumn{3}{c}{High $\pi_{pro}$} & \multicolumn{3}{c}{Low $\pi_{pro}$} \\
\cline{2-8}\\
& Scenario & $\pi_{as}$ & $\pi_{pro}$ & $\pi_{ns}$&$\pi_{as}$ & $\pi_{pro}$ & $\pi_{ns}$ \\
\hline\\
\multirow{2}*{Correctly specified} \\ 
\multirow{2}*{principal score model}
& A  & 0.50 & 0.35 & 0.15 & 0.50 & 0.10 & 0.40 \\
& B & 0.75 & 0.14 & 0.11 & 0.75 & 0.10 & 0.15 \\
\hline
\multirow{2}*{Misspecified principal} \\ 
\multirow{2}*{score model ($k=3$)}
& A  & 0.54 & 0.25 & 0.21 & 0.60 & 0.15 & 0.25 \\
& B & 0.73 & 0.20 & 0.07 & 0.73 & 0.07 & 0.20 \\
\hline
\multirow{2}*{Misspecified principal} \\ 
\multirow{2}*{score model ($k=5$)}
& A  & 0.52 & 0.29 & 0.19 & 0.50 & 0.18 & 0.32 \\
& B & 0.74 & 0.19 & 0.07 & 0.73 & 0.07 & 0.20 \\
\hline
\multirow{2}*{Misspecified principal} \\ 
\multirow{2}*{score model ($k=10$)}
& A  & 0.53 & 0.30 & 0.17 & 0.50 & 0.15 & 0.35 \\
& B & 0.74 & 0.19 & 0.07 & 0.74 & 0.10 & 0.16 \\
\end{tabular}}
\end{table}

\clearpage
\newpage

\subsection{Additional simulation results}
\label{SubSec:AppSimRes}
This section presents additional simulation results which are summarized in Section 7.3 of the main text. We present the following results:

\begin{itemize}
     \item Table \ref{Tab:AppSimResWithInterDGMseq} extends Table 2 of the main text, when monotonicity holds ($\xi_{assm}=\xi=0$), showing results for $k=3,10$, instead of $k=5$ shown in Table 2.
     \item Table \ref{Tab:AppSimResWoutInterDGMseq} is the analogue of Table \ref{Tab:AppSimResWithInterDGMseq},
     when the true outcome model did not include $A$-$\bX_0$ interactions.
     
    \item Table \ref{Tab:AppSimResWithInterSeveralcorrectXiDGMseq} presents the results of SACE estimators, when $\xi_{assm} = \xi$, under several values of $\xi$, with $k=5$, when the true outcome model outcome included all $A$-$\bX_0$ interactions.
    \item Table \ref{Tab:AppSimResWithInterSeveralwrongXiDGMseq} presents the results of SACE estimators when $\xi$ is assumed to be $0$, but the true value of $\xi$ is varying between $0$ and $0.2$, with $k=5$, when the true outcome model outcome included all $A$-$\bX_0$ interactions.

   \item Table \ref{Tab:AppSimResWithInterFullDGMmulti} presents selected SACE estimators under a multinomial regression model for the principal strata, when the true outcome model outcome included all $A$-$\bX_0$ interactions.
    In the left panel, both models were correctly specified and in the left panel, the principal score model was misspecified. 

    \item Figure \ref{Fig:AppbiasS2withInterDGMseq} shows results of selected SACE estimators when $\xi_{assm} = \xi$, under several values of $\xi$, with $k=5$, and the true outcome model outcome included all $A$-$\bX_0$ interactions.
    \item Figure \ref{Fig:AppbiasS3withInterDGMseq} shows results of SACE estimators when $\xi_{assm} = 0$ (i.e. monotonicity is assumed) but the true value of $\xi$ is varying between $0$ and $0.2$, with $k=5$, and the true outcome model outcome included all $A$-$\bX_0$ interactions.
    \item Figure \ref{Fig:AppbiasS1woutInterDGMseq}, Figure \ref{Fig:AppbiasS2woutInterDGMseq} and Figure \ref{Fig:AppbiasS3woutInterDGMseq} are the analogue of Figure 2 of the main text, Figure \ref{Fig:AppbiasS2withInterDGMseq}, and Figure \ref{Fig:AppbiasS3withInterDGMseq}, respectively, when the true outcome model did not include $A$-$\bX_0$ interactions.
    \item Figure \ref{Fig:AppbiasS1withInterDGMmulti} is the analogue of the top left and top right panels of Figure 1 of the main text,|
    under a multinomial regression model for the principal strata. 
 
\end{itemize}

\clearpage

\begin{table}
\caption{\label{Tab:AppSimResWithInterDGMseq} \footnotesize 
Selected simulation results when monotonicity holds ($\xi_{assm}=\xi=0$), with low $\pi_{pro}$ and $k = 3,5$ covariares.
True outcome model included $A$-$\bX_0$ interactions. 
Results described for matching on $\widehat{\widetilde{\pi}}^1_{as}(\bx_0)$ (PS) or using Mahalanobis distance with a caliper (cal).
OLS: least squares with interactions; WLS: weighted least squares with interactions; DL: model-based weighting estimator of \cite{ding2017principal}; Emp.SD: empirical standard deviation. Est.SE: estimated standard error; MSE: mean square error; CP95: empirical coverage proportion of 95\% confidence interval.}
\centering
\fbox{
\scriptsize
\begin{tabular}{lccccccccccc}
& \multicolumn{5}{c}{Correctly specified principal score and outcome models} & \multicolumn{5}{c}{Misspecified principal score and outcome models} \\
\em Method & \em Estimator &  \em Mean & \em Emp.SD & \em Est.SE & \em MSE & \em CP95 & Mean & Emp.SD & Est.SE & MSE & CP95 \\ \hline
& & & & & & & & & \\
& \multicolumn{4}{c}{\textbf{Scenario A, SACE = 3.07}} & \multicolumn{2}{c}{\textbf{k=3}} &  \multicolumn{4}{c}{\textbf{Scenario A, SACE = 0.8}} \\
\hline
Matching & Crude:cal & 3.02 & 0.15 & 0.14 & 0.02 & 0.91 & 0.40 & 0.52 & 0.36 & 0.43 & 0.76 \\ 
& Crude:PS & 3.02 & 0.21 & 0.21 & 0.05 & 0.93 & 0.64 & 0.65 & 0.61 & 0.44 & 0.94 \\ 
& OLS:cal & 3.07 & 0.13 & 0.13 & 0.02 & 0.96 & 0.61 & 0.42 & 0.33 & 0.22 & 0.87 \\ 
&  OLS:PS & 3.07 & 0.13 & 0.13 & 0.02 & 0.95 & 0.85 & 0.46 & 0.42 & 0.22 & 0.92 \\ 
Matching &  Crude:cal & 3.03 & 0.14 & 0.13 & 0.02 & 0.92 & 0.39 & 0.34 & 0.32 & 0.28 & 0.81 \\ 
with &  Crude:PS & 3.06 & 0.25 & 0.21 & 0.06 & 0.90 & 0.69 & 0.69 & 0.61 & 0.48 & 0.92 \\ 
replacement &  WLS:cal & 3.07 & 0.14 & 0.13 & 0.02 & 0.94 & 0.43 & 0.32 & 0.36 & 0.24 & 0.92 \\ 
& WLS:PS & 3.07 & 0.14 & 0.14 & 0.02 & 0.94 & 0.86 & 0.47 & 0.49 & 0.22 & 0.96 \\ 
&  BC:cal & 3.07 & 0.14 & 0.16 & 0.02 & 0.97 & 0.43 & 0.32 & 0.37 & 0.24 & 0.94 \\
 \multicolumn{2}{c}{Composite}  & 3.90 & 0.57 & 0.57 & 1.01 & 0.69 & 4.59 & 0.81 & 0.80 & 15.04 & 0.00 \\ 
 \multicolumn{2}{c}{Naive} & 2.52 & 0.31 & 0.30 & 0.41 & 0.54 & 0.18 & 0.70 & 0.70 & 0.87 & 0.87 \\ 
  \multicolumn{2}{c}{DL} & 3.06 & 0.11 & & 0.01 & & 0.87 & 0.46 & & 0.21 & \\ 
& & & & & & & & & \\
 & \multicolumn{4}{c}{\textbf{Scenario B, SACE = 3.23}} & \multicolumn{2}{c}{\textbf{k=3}} & \multicolumn{4}{c}{\textbf{Scenario B, SACE = 0.45}} \\
 \hline
Matching & Crude:cal & 3.17 & 0.13 & 0.11 & 0.02 & 0.86 & 0.25 & 0.40 & 0.28 & 0.20 & 0.78 \\ 
&  Crude:PS & 3.16 & 0.20 & 0.16 & 0.04 & 0.88 & 0.62 & 0.50 & 0.48 & 0.27 & 0.92 \\ 
&  OLS:cal & 3.22 & 0.10 & 0.10 & 0.01 & 0.96 & 0.38 & 0.33 & 0.25 & 0.11 & 0.88 \\ 
&  OLS:PS & 3.22 & 0.10 & 0.10 & 0.01 & 0.95 & 0.56 & 0.34 & 0.32 & 0.13 & 0.89 \\ 
Matching & Crude:cal & 3.18 & 0.11 & 0.10 & 0.01 & 0.90 & 0.16 & 0.25 & 0.24 & 0.14 & 0.82 \\ 
with &  Crude:PS & 3.24 & 0.20 & 0.16 & 0.04 & 0.90 & 0.64 & 0.53 & 0.47 & 0.31 & 0.90 \\ 
replacement &  WLS:cal & 3.23 & 0.11 & 0.11 & 0.01 & 0.96 & 0.19 & 0.23 & 0.28 & 0.12 & 0.94 \\ 
&  WLS:PS & 3.22 & 0.11 & 0.11 & 0.01 & 0.94 & 0.55 & 0.34 & 0.36 & 0.13 & 0.94 \\ 
&  BC:cal & 3.23 & 0.11 & 0.13 & 0.01 & 0.98 & 0.19 & 0.23 & 0.29 & 0.12 & 0.95 \\ 
\multicolumn{2}{c}{Composite}  & 4.65 & 0.48 & 0.48 & 2.25 & 0.16 & 5.99 & 0.71 & 0.69 & 31.16 & 0.00 \\ 
\multicolumn{2}{c}{Naive} & 2.60 & 0.24 & 0.25 & 0.46 & 0.27 & 0.04 & 0.54 & 0.55 & 0.46 & 0.90 \\ 
\multicolumn{2}{c}{DL} & 3.22 & 0.09 & & 0.01 &  & 0.64 & 0.33 &  & 0.14 &  \\  
& & & & & & & & & \\
& \multicolumn{4}{c}{\textbf{Scenario A, SACE = 7.95}} & \multicolumn{2}{c}{\textbf{k=10}} & \multicolumn{4}{c}{\textbf{Scenario A, SACE = 13.7}} \\
\hline
Matching & Crude:cal & 7.84 & 0.40 & 0.36 & 0.17 & 0.91 & 13.06 & 0.68 & 0.58 & 0.90 & 0.74 \\ 
&  Crude:PS & 7.83 & 0.46 & 0.50 & 0.23 & 0.96 & 13.56 & 0.80 & 0.79 & 0.66 & 0.95 \\ 
&  OLS:cal & 7.92 & 0.24 & 0.25 & 0.06 & 0.96 & 13.41 & 0.56 & 0.48 & 0.41 & 0.85 \\ 
&  OLS:PS & 7.93 & 0.24 & 0.25 & 0.06 & 0.96 & 13.98 & 0.59 & 0.56 & 0.41 & 0.93 \\ 
Matching &  Crude:cal & 7.84 & 0.36 & 0.31 & 0.14 & 0.90 & 11.80 & 0.59 & 0.47 & 4.05 & 0.06 \\ 
with &  Crude:PS & 7.92 & 0.67 & 0.50 & 0.45 & 0.87 & 13.67 & 1.08 & 0.79 & 1.18 & 0.85 \\ 
replacement & WLS:cal & 7.93 & 0.24 & 0.25 & 0.06 & 0.95 & 11.96 & 0.46 & 0.50 & 3.31 & 0.06 \\ 
& WLS:PS & 7.93 & 0.24 & 0.25 & 0.06 & 0.96 & 13.98 & 0.74 & 0.71 & 0.61 & 0.94 \\ 
&  BC:cal & 7.93 & 0.24 & 0.31 & 0.06 & 0.99 & 11.96 & 0.46 & 0.51 & 3.31 & 0.06 \\ 
\multicolumn{2}{c}{Composite} & 7.25 & 0.81 & 0.81 & 1.15 & 0.86 & 11.73 & 0.99 & 1.02 & 4.96 & 0.50 \\ 
\multicolumn{2}{c}{Naive} & 6.98 & 0.57 & 0.58 & 1.27 & 0.61 & 13.30 & 0.77 & 0.79 & 0.77 & 0.92 \\ 
\multicolumn{2}{c}{DL} & 7.90 & 0.21 &  & 0.05 &  & 13.95 & 0.54 &  & 0.34 &  \\ 
& & & & & & & & & \\
 & \multicolumn{4}{c}{\textbf{Scenario B, SACE = 8.93}} & \multicolumn{2}{c}{\textbf{k=10}} & \multicolumn{4}{c}{\textbf{Scenario B, SACE = 14.71}} \\
 \hline
Matching & Crude:cal & 8.85 & 0.33 & 0.26 & 0.12 & 0.87 & 14.84 & 0.66 & 0.49 & 0.46 & 0.86 \\ 
& Crude:PS & 8.75 & 0.44 & 0.32 & 0.22 & 0.85 & 15.43 & 0.70 & 0.62 & 1.00 & 0.76 \\ 
&  OLS:cal & 8.90 & 0.18 & 0.20 & 0.03 & 0.96 & 14.93 & 0.51 & 0.43 & 0.31 & 0.89 \\ 
&  OLS:PS & 8.92 & 0.19 & 0.20 & 0.04 & 0.96 & 15.46 & 0.50 & 0.48 & 0.82 & 0.68 \\ 
Matching &  Crude:cal & 8.64 & 0.27 & 0.23 & 0.15 & 0.72 & 12.95 & 0.45 & 0.37 & 3.27 & 0.01 \\
with &  Crude:PS & 8.93 & 0.40 & 0.31 & 0.16 & 0.88 & 15.53 & 0.85 & 0.61 & 1.42 & 0.66 \\ 
replacement &  WLS:cal & 8.92 & 0.19 & 0.20 & 0.04 & 0.96 & 13.17 & 0.36 & 0.40 & 2.50 & 0.04 \\ 
&  WLS:PS & 8.92 & 0.19 & 0.20 & 0.04 & 0.96 & 15.44 & 0.62 & 0.59 & 0.92 & 0.79 \\ 
&  BC:cal & 8.92 & 0.19 & 0.25 & 0.04 & 0.99 & 13.17 & 0.36 & 0.41 & 2.50 & 0.04 \\ 
\multicolumn{2}{c}{Composite} & 9.48 & 0.72 & 0.73 & 0.83 & 0.88 & 17.59 & 0.89 & 0.90 & 9.10 & 0.10 \\ 
\multicolumn{2}{c}{Naive} & 7.37 & 0.46 & 0.47 & 2.63 & 0.09 & 14.30 & 0.67 & 0.66 & 0.61 & 0.89 \\ 
\multicolumn{2}{c}{DL} & 8.91 & 0.17 &  & 0.03 &  & 15.54 & 0.47 &  & 0.92 &  \\
\end{tabular}}
\end{table}

\begin{table}
\caption{\label{Tab:AppSimResWoutInterDGMseq} \footnotesize Selected simulation results when monotonicity holds  ($\xi_{assm}=\xi=0$), with low $\pi_{pro}$ and $k=3, 5, 10$ covariates.
True outcome model did not include $A$-$\bX_0$ interactions. 
Results described for matching on $\widehat{\widetilde{\pi}}^1_{as}(\bx_0)$ (PS) or using Mahalanobis distance with a caliper (cal).
OLS: least squares with interactions; WLS: weighted least squares with interactions; DL: model-based weighting estimator of \cite{ding2017principal}; Emp.SD: empirical standard deviation. Est.SE: estimated standard error; MSE: mean square error; CP95: empirical coverage proportion of 95\% confidence interval.}
\centering
\fbox{
\tiny 
\begin{tabular}{lccccccccccc}
& \multicolumn{5}{c}{Correctly specified principal score and outcome models} & \multicolumn{5}{c}{Misspecified principal score and outcome models} \\
\em Method & \em Estimator &  \em Mean & \em Emp.SD & \em Est.SE & \em MSE & \em CP95 & Mean & Emp.SD & Est.SE & MSE & CP95 \\ \hline
& & & & & & & & & \\
& \multicolumn{4}{c}{\textbf{Scenario A, SACE = 2}} & \multicolumn{2}{c}{\textbf{k=3}} &  \multicolumn{4}{c}{\textbf{Scenario A, SACE = 2}} \\
\hline
Matching & Crude:cal & 1.92 & 0.14 & 0.10 & 0.03 & 0.79 & 1.24 & 0.79 & 0.47 & 1.19 & 0.55 \\ 
& Crude:PS & 1.92 & 0.24 & 0.20 & 0.06 & 0.88 & 2.22 & 1.04 & 1.00 & 1.12 & 0.93 \\ 
&  OLS:cal & 2.00 & 0.06 & 0.06 & 0.00 & 0.95 & 1.64 & 0.69 & 0.40 & 0.60 & 0.62 \\ 
&  OLS:PS & 2.00 & 0.06 & 0.06 & 0.00 & 0.95 & 2.62 & 0.74 & 0.66 & 0.94 & 0.81 \\ 
Matching &  Crude:cal & 1.94 & 0.10 & 0.09 & 0.01 & 0.86 & 1.04 & 0.39 & 0.34 & 1.07 & 0.16 \\ 
with & Crude:PS & 2.00 & 0.25 & 0.20 & 0.06 & 0.90 & 2.31 & 1.19 & 0.99 & 1.51 & 0.89 \\ 
replacement & WLS:cal & 2.00 & 0.08 & 0.08 & 0.01 & 0.94 & 1.11 & 0.33 & 0.49 & 0.90 & 0.60 \\ 
&  WLS:PS & 2.00 & 0.08 & 0.08 & 0.01 & 0.96 & 2.64 & 0.83 & 0.86 & 1.09 & 0.92 \\ 
&  BC:cal & 2.00 & 0.08 & 0.08 & 0.01 & 0.94 & 1.11 & 0.33 & 0.35 & 0.90 & 0.20 \\ 
\multicolumn{2}{c}{Composite} & 3.41 & 0.65 & 0.64 & 2.42 & 0.41 & 6.96 & 1.13 & 1.11 & 25.89 & 0.01 \\ 
\multicolumn{2}{c}{Naive} & 1.16 & 0.44 & 0.44 & 0.90 & 0.51 & 1.88 & 1.13 & 1.11 & 1.28 & 0.95 \\ 
\multicolumn{2}{c}{DL} & 1.99 & 0.06 &  & 0.00 &  & 2.73 & 0.81 &  & 1.18 &  \\ 
& & & & & & & & & \\
& \multicolumn{4}{c}{\textbf{Scenario B, SACE = 2}} & \multicolumn{2}{c}{\textbf{k=3}} &  \multicolumn{4}{c}{\textbf{Scenario B, SACE = 2}} \\
\hline
Matching & Crude:cal & 1.91 & 0.14 & 0.08 & 0.03 & 0.65 & 1.69 & 0.66 & 0.36 & 0.53 & 0.65 \\ 
& Crude:PS & 1.89 & 0.24 & 0.12 & 0.07 & 0.77 & 2.98 & 0.82 & 0.78 & 1.65 & 0.74 \\ 
&  OLS:cal & 2.00 & 0.05 & 0.05 & 0.00 & 0.95 & 1.95 & 0.59 & 0.30 & 0.35 & 0.65 \\ 
&  OLS:PS & 2.00 & 0.05 & 0.05 & 0.00 & 0.95 & 2.91 & 0.55 & 0.50 & 1.14 & 0.52 \\ 
Matching & Crude:cal & 1.94 & 0.08 & 0.06 & 0.01 & 0.79 & 1.31 & 0.27 & 0.25 & 0.54 & 0.19 \\ 
with & Crude:PS & 1.99 & 0.14 & 0.11 & 0.02 & 0.88 & 3.00 & 0.90 & 0.77 & 1.81 & 0.72 \\ 
replacement & WLS:cal & 2.00 & 0.06 & 0.06 & 0.00 & 0.95 & 1.38 & 0.23 & 0.37 & 0.43 & 0.69 \\ 
&  WLS:PS & 2.00 & 0.06 & 0.06 & 0.00 & 0.94 & 2.86 & 0.59 & 0.63 & 1.10 & 0.75 \\ 
&  BC:cal & 2.00 & 0.06 & 0.06 & 0.00 & 0.95 & 1.38 & 0.23 & 0.26 & 0.43 & 0.26 \\ 
\multicolumn{2}{c}{Composite} & 3.71 & 0.57 & 0.57 & 3.24 & 0.14 & 9.54 & 1.00 & 0.98 & 57.88 & 0.00 \\ 
\multicolumn{2}{c}{Naive} & 1.05 & 0.34 & 0.35 & 1.02 & 0.22 & 2.44 & 0.87 & 0.89 & 0.96 & 0.92 \\ 
\multicolumn{2}{c}{DL} & 2.00 & 0.05 & & 0.00 &  & 3.10 & 0.56 &  & 1.53 &  \\
  & & & & & & & & & \\
& \multicolumn{4}{c}{\textbf{Scenario A, SACE = 2}} & \multicolumn{2}{c}{\textbf{k=5}} &  \multicolumn{4}{c}{\textbf{Scenario A, SACE = 2}} \\
\hline
Matching & Crude:cal & 1.94 & 0.21 & 0.15 & 0.05 & 0.82 & 1.63 & 0.38 & 0.28 & 0.28 & 0.65 \\ 
&  Crude:PS & 1.90 & 0.30 & 0.26 & 0.10 & 0.88 & 1.91 & 0.52 & 0.57 & 0.28 & 0.97 \\ 
&  OLS:cal & 2.00 & 0.06 & 0.06 & 0.00 & 0.94 & 1.89 & 0.27 & 0.19 & 0.09 & 0.75 \\ 
&  OLS:PS & 2.00 & 0.06 & 0.06 & 0.00 & 0.96 & 2.27 & 0.29 & 0.28 & 0.15 & 0.84 \\ 
Matching & Crude:cal & 1.89 & 0.14 & 0.12 & 0.04 & 0.80 & 1.34 & 0.24 & 0.22 & 0.49 & 0.17 \\ 
with &  Crude:PS & 2.02 & 0.30 & 0.25 & 0.09 & 0.90 & 1.90 & 0.68 & 0.57 & 0.47 & 0.90 \\ 
replacement & WLS:cal & 2.00 & 0.08 & 0.08 & 0.01 & 0.95 & 1.40 & 0.17 & 0.22 & 0.40 & 0.12 \\ 
& WLS:PS & 2.00 & 0.08 & 0.08 & 0.01 & 0.94 & 2.26 & 0.36 & 0.36 & 0.19 & 0.91 \\ 
&  BC:cal & 2.00 & 0.08 & 0.08 & 0.01 & 0.95 & 1.40 & 0.17 & 0.18 & 0.40 & 0.06 \\ 
\multicolumn{2}{c}{Composite} & 3.43 & 0.69 & 0.68 & 2.50 & 0.44 & 5.59 & 0.82 & 0.81 & 13.56 & 0.02 \\ 
\multicolumn{2}{c}{Naive} & 0.94 & 0.48 & 0.48 & 1.38 & 0.38 & 2.18 & 0.57 & 0.58 & 0.36 & 0.94 \\ 
\multicolumn{2}{c}{DL} & 2.00 & 0.06 &  & 0.00 &  & 2.36 & 0.31 &  & 0.22 &  \\ 
  & & & & & & & & & \\
& \multicolumn{4}{c}{\textbf{Scenario B, SACE = 2}} & \multicolumn{2}{c}{\textbf{k=5}} &  \multicolumn{4}{c}{\textbf{Scenario B, SACE = 2}} \\
\hline
Matching & Crude:cal & 1.92 & 0.22 & 0.12 & 0.06 & 0.68 & 2.04 & 0.39 & 0.21 & 0.16 & 0.70 \\ 
&  Crude:PS & 1.84 & 0.33 & 0.18 & 0.13 & 0.76 & 2.68 & 0.47 & 0.45 & 0.69 & 0.67 \\ 
&  OLS:cal & 2.00 & 0.05 & 0.05 & 0.00 & 0.96 & 2.14 & 0.25 & 0.14 & 0.08 & 0.66 \\ 
&  OLS:PS & 2.00 & 0.05 & 0.05 & 0.00 & 0.96 & 2.63 & 0.23 & 0.21 & 0.45 & 0.18 \\ 
Matching & Crude:cal & 1.84 & 0.11 & 0.10 & 0.04 & 0.64 & 1.54 & 0.18 & 0.16 & 0.24 & 0.22 \\ 
with & Crude:PS & 1.99 & 0.22 & 0.18 & 0.05 & 0.89 & 2.70 & 0.59 & 0.45 & 0.84 & 0.61 \\ 
replacement & WLS:cal & 2.00 & 0.07 & 0.06 & 0.00 & 0.94 & 1.59 & 0.12 & 0.15 & 0.18 & 0.18 \\ 
&  WLS:PS & 2.00 & 0.06 & 0.06 & 0.00 & 0.95 & 2.62 & 0.28 & 0.27 & 0.46 & 0.39 \\ 
&  BC:cal & 2.00 & 0.07 & 0.06 & 0.00 & 0.94 & 1.59 & 0.12 & 0.13 & 0.18 & 0.09 \\ 
\multicolumn{2}{c}{Composite} & 3.71 & 0.61 & 0.61 & 3.31 & 0.20 & 8.25 & 0.72 & 0.69 & 39.54 & 0.00 \\ 
\multicolumn{2}{c}{Naive} & 0.70 & 0.36 & 0.37 & 1.81 & 0.06 & 2.47 & 0.49 & 0.47 & 0.46 & 0.82 \\ 
\multicolumn{2}{c}{DL} & 2.00 & 0.05 & & 0.00 &  & 2.68 & 0.22 & & 0.51 &  \\
  & & & & & & & & & \\
& \multicolumn{4}{c}{\textbf{Scenario A, SACE = 2}} & \multicolumn{2}{c}{\textbf{k=10}} &  \multicolumn{4}{c}{\textbf{Scenario A, SACE = 2}} \\
\hline
Matching & Crude:cal & 1.92 & 0.42 & 0.36 & 0.19 & 0.90 & 1.37 & 0.79 & 0.66 & 1.02 & 0.79 \\ 
& Crude:PS & 1.90 & 0.51 & 0.57 & 0.27 & 0.96 & 1.86 & 0.93 & 0.96 & 0.88 & 0.95 \\ 
&  OLS:cal & 2.00 & 0.06 & 0.06 & 0.00 & 0.94 & 1.72 & 0.56 & 0.46 & 0.39 & 0.86 \\ 
&  OLS:PS & 2.00 & 0.06 & 0.06 & 0.00 & 0.95 & 2.28 & 0.60 & 0.57 & 0.45 & 0.92 \\ 
Matching & Crude:cal & 1.91 & 0.35 & 0.31 & 0.13 & 0.91 & 0.10 & 0.67 & 0.57 & 4.06 & 0.11 \\ 
with & Crude:PS & 1.99 & 0.68 & 0.56 & 0.47 & 0.89 & 1.97 & 1.17 & 0.95 & 1.36 & 0.89 \\ 
replacement & WLS:cal & 2.00 & 0.08 & 0.08 & 0.01 & 0.95 & 0.26 & 0.46 & 0.49 & 3.23 & 0.05 \\ 
&  WLS:PS & 2.00 & 0.08 & 0.08 & 0.01 & 0.95 & 2.28 & 0.72 & 0.72 & 0.60 & 0.94 \\ 
&  BC:cal & 2.00 & 0.08 & 0.08 & 0.01 & 0.94 & 0.26 & 0.46 & 0.49 & 3.23 & 0.04 \\ 
\multicolumn{2}{c}{Composite} & 4.29 & 0.90 & 0.89 & 6.02 & 0.27 & 5.94 & 1.14 & 1.17 & 16.85 & 0.08 \\ 
\multicolumn{2}{c}{Naive} & 1.05 & 0.69 & 0.69 & 1.38 & 0.72 & 1.60 & 0.98 & 0.99 & 1.12 & 0.94 \\ 
\multicolumn{2}{c}{DL} & 1.99 & 0.06 &  & 0.00 &  & 2.36 & 0.58 &  & 0.47 & 0 \\ &
& & & & & & & & & \\\multicolumn{4}{c}{\textbf{Scenario B, SACE = 2}} & \multicolumn{2}{c}{\textbf{k=10}} &  \multicolumn{4}{c}{\textbf{Scenario B, SACE = 2}} \\
\hline
Matching & Crude:cal & 1.95 & 0.36 & 0.24 & 0.13 & 0.80 & 2.14 & 0.77 & 0.54 & 0.61 & 0.81 \\ 
&  Crude:PS & 1.83 & 0.47 & 0.33 & 0.25 & 0.83 & 2.70 & 0.81 & 0.73 & 1.16 & 0.79 \\ 
& OLS:cal & 2.00 & 0.05 & 0.05 & 0.00 & 0.95 & 2.23 & 0.51 & 0.38 & 0.31 & 0.82 \\ 
&  OLS:PS & 2.00 & 0.05 & 0.05 & 0.00 & 0.95 & 2.74 & 0.51 & 0.47 & 0.80 & 0.64 \\ 
Matching & Crude:cal & 1.72 & 0.25 & 0.21 & 0.14 & 0.69 & 0.23 & 0.51 & 0.44 & 3.39 & 0.03 \\ 
with & Crude:PS & 2.01 & 0.40 & 0.32 & 0.16 & 0.89 & 2.82 & 0.91 & 0.72 & 1.49 & 0.73 \\ 
replacement & WLS:cal & 2.00 & 0.07 & 0.07 & 0.00 & 0.94 & 0.45 & 0.35 & 0.38 & 2.54 & 0.01 \\ 
& WLS:PS & 2.00 & 0.07 & 0.06 & 0.00 & 0.95 & 2.72 & 0.60 & 0.58 & 0.88 & 0.79 \\ 
&  BC:cal & 2.00 & 0.07 & 0.07 & 0.00 & 0.94 & 0.45 & 0.35 & 0.39 & 2.54 & 0.01 \\ 
\multicolumn{2}{c}{Composite} & 4.31 & 0.81 & 0.81 & 5.95 & 0.20 & 8.30 & 1.07 & 1.08 & 40.75 & 0.00 \\ 
\multicolumn{2}{c}{Naive} & 0.45 & 0.53 & 0.55 & 2.69 & 0.19 & 1.58 & 0.85 & 0.84 & 0.90 & 0.93 \\ 
\multicolumn{2}{c}{DL} & 2.00 & 0.05 & & 0.00 & & 2.80 & 0.50 & & 0.88 &  \\  \\ 
\end{tabular}}
\end{table}

\begin{table}
\caption{\label{Tab:AppSimResWithInterSeveralcorrectXiDGMseq} \footnotesize Selected simulation results 
for several $\xi$ values, and with $\xi_{assm}=\xi$, and $k=5$ covariates.
True outcome included $A$-$\bX_0$ interactions. 
Results described for matching on $\widehat{\widetilde{\pi}}^1_{as}(\bx_0)$ (PS) or using Mahalanobis distance with a caliper (cal).
OLS: least squares; WLS: weighted least squares; DL: model-based weighting estimator of \cite{ding2017principal}; Emp.SD: empirical standard deviation. Est.SE: estimated standard error; MSE: mean square error; CP95: empirical coverage proportion of 95\% confidence interval.}
\centering
\fbox{
\tiny
\begin{tabular}{lccccccccccc}
& \multicolumn{5}{c}{Correctly specified principal score and outcome models} & \multicolumn{5}{c}{Misspecified principal score and outcome models} \\
\em Method & \em Estimator &  \em Mean & \em Emp.SD & \em Est.SE & \em MSE & \em CP95 & Mean & Emp.SD & Est.SE & MSE & CP95 \\ \hline
& & & & & & & & & \\
& \multicolumn{4}{c}{\textbf{Scenario A, SACE = 4.99}} & \multicolumn{2}{c}{\textbf{$\xi=0.05$}} &  \multicolumn{4}{c}{\textbf{Scenario A, SACE = 10.9}} \\
\hline
Matching & Crued:cal & 4.90 & 0.24 & 0.21 & 0.07 & 0.89 & 10.30 & 0.59 & 0.46 & 0.72 & 0.64 \\ 
&  Crued:PS & 4.79 & 0.36 & 0.28 & 0.17 & 0.84 & 10.76 & 0.70 & 0.69 & 0.51 & 0.93 \\ 
&  OLS:cal & 4.96 & 0.17 & 0.18 & 0.03 & 0.95 & 10.78 & 0.56 & 0.44 & 0.33 & 0.84 \\ 
&  OLS:PS & 5.00 & 0.18 & 0.18 & 0.03 & 0.94 & 11.40 & 0.56 & 0.54 & 0.56 & 0.87 \\ 
Matching &  Crude:cal & 4.88 & 0.19 & 0.18 & 0.05 & 0.89 & 9.51 & 0.38 & 0.35 & 2.08 & 0.05 \\ 
with &  Crude:PS & 5.01 & 0.34 & 0.28 & 0.12 & 0.89 & 10.83 & 0.92 & 0.69 & 0.85 & 0.84 \\ 
replacement &  WLS:cal & 5.00 & 0.18 & 0.18 & 0.03 & 0.95 & 9.63 & 0.36 & 0.45 & 1.75 & 0.16 \\ 
&  WLS:PS & 5.00 & 0.19 & 0.19 & 0.04 & 0.94 & 11.38 & 0.69 & 0.69 & 0.70 & 0.94 \\ 
&  BC:cal& 5.00 & 0.18 & 0.22 & 0.03 & 0.98 & 9.63 & 0.36 & 0.41 & 1.75 & 0.12 \\ 
\multicolumn{2}{c}{Composite} & 4.20 & 0.64 & 0.64 & 1.03 & 0.78 & 9.85 & 0.88 & 0.88 & 1.88 & 0.76 \\ 
\multicolumn{2}{c}{Naive} & 3.92 & 0.44 & 0.42 & 1.34 & 0.29 & 10.95 & 0.68 & 0.69 & 0.46 & 0.96 \\ 
\multicolumn{2}{c}{DL} & 4.98 & 0.15 &  & 0.02 &  & 11.51 & 0.58 &  & 0.70 & \\
    & & & & & & & & & \\
 & \multicolumn{4}{c}{\textbf{Scenario B, SACE = 5.78}} & \multicolumn{2}{c}{\textbf{$\xi=0.05$}} & \multicolumn{4}{c}{\textbf{Scenario B, SACE = 11.39}} \\
 \hline
Matching &  Crued:cal & 5.70 & 0.23 & 0.17 & 0.06 & 0.83 & 11.60 & 0.59 & 0.42 & 0.40 & 0.83 \\ 
&  Crued:PS & 5.40 & 0.47 & 0.22 & 0.36 & 0.61 & 12.38 & 0.64 & 0.56 & 1.41 & 0.58 \\ 
&  OLS:cal & 5.76 & 0.14 & 0.15 & 0.02 & 0.96 & 11.81 & 0.53 & 0.39 & 0.46 & 0.78 \\ 
&  OLS:PS & 5.79 & 0.14 & 0.14 & 0.02 & 0.94 & 12.52 & 0.50 & 0.46 & 1.55 & 0.30 \\ 
Matching & Crude:cal & 5.63 & 0.16 & 0.14 & 0.05 & 0.80 & 10.29 & 0.33 & 0.30 & 1.30 & 0.09 \\ 
with &  Crude:PS & 5.77 & 0.26 & 0.20 & 0.07 & 0.89 & 12.52 & 0.74 & 0.55 & 1.83 & 0.49 \\ 
replacement &  WLS:cal & 5.79 & 0.15 & 0.15 & 0.02 & 0.96 & 10.44 & 0.32 & 0.37 & 0.99 & 0.26 \\ 
&  WLS:PS & 5.79 & 0.15 & 0.15 & 0.02 & 0.94 & 12.43 & 0.56 & 0.57 & 1.41 & 0.58 \\ 
&  BC:cal& 5.79 & 0.15 & 0.17 & 0.02 & 0.98 & 10.44 & 0.32 & 0.36 & 0.99 & 0.25 \\ 
\multicolumn{2}{c}{Composite} & 5.41 & 0.59 & 0.58 & 0.48 & 0.91 & 14.59 & 0.80 & 0.79 & 10.89 & 0.01 \\ 
\multicolumn{2}{c}{Naive} & 4.43 & 0.33 & 0.33 & 1.92 & 0.02 & 11.85 & 0.58 & 0.58 & 0.54 & 0.90 \\ 
\multicolumn{2}{c}{DL} & 5.79 & 0.12 & & 0.02 &  & 12.63 & 0.47 &  & 1.78 & \\ 
& & & & & & & & & \\
 & \multicolumn{4}{c}{\textbf{Scenario A, SACE = 4.99}} & \multicolumn{2}{c}{\textbf{$\xi=0.1$}} &  \multicolumn{4}{c}{\textbf{Scenario A, SACE = 10.9}} \\
\hline
Matching &  Crued:cal & 4.88 & 0.25 & 0.21 & 0.07 & 0.87 & 10.41 & 0.61 & 0.48 & 0.62 & 0.73 \\ 
&  Crued:PS & 4.63 & 0.44 & 0.28 & 0.32 & 0.70 & 10.79 & 0.70 & 0.69 & 0.50 & 0.95 \\ 
&  OLS:cal & 4.94 & 0.17 & 0.18 & 0.03 & 0.96 & 10.93 & 0.54 & 0.45 & 0.29 & 0.89 \\ 
&  OLS:PS & 5.00 & 0.18 & 0.18 & 0.03 & 0.94 & 11.46 & 0.58 & 0.54 & 0.64 & 0.85 \\ 
Matching &  Crude:cal & 4.89 & 0.20 & 0.18 & 0.05 & 0.90 & 9.52 & 0.40 & 0.35 & 2.07 & 0.05 \\ 
with &  Crude:PS & 5.02 & 0.34 & 0.27 & 0.12 & 0.89 & 10.85 & 0.95 & 0.69 & 0.90 & 0.85 \\ 
replacement &  WLS:cal & 5.00 & 0.19 & 0.18 & 0.04 & 0.94 & 9.64 & 0.36 & 0.45 & 1.72 & 0.17 \\ 
&  WLS:PS & 5.00 & 0.19 & 0.19 & 0.04 & 0.94 & 11.43 & 0.74 & 0.71 & 0.82 & 0.92 \\ 
&  BC:cal& 5.00 & 0.19 & 0.22 & 0.04 & 0.98 & 9.64 & 0.36 & 0.42 & 1.72 & 0.13 \\ 
\multicolumn{2}{c}{Composite} & 3.60 & 0.65 & 0.64 & 2.34 & 0.43 & 8.97 & 0.89 & 0.88 & 4.53 & 0.41 \\ 
\multicolumn{2}{c}{\textbf{Naive}} & 3.87 & 0.41 & 0.42 & 1.42 & 0.24 & 10.97 & 0.72 & 0.70 & 0.52 & 0.95 \\ 
\multicolumn{2}{c}{DL} & 4.99 & 0.16 &  & 0.02 &  & 11.54 & 0.58 & & 0.75 &  \\ 
& & & & & & & & & \\
  & \multicolumn{4}{c}{\textbf{Scenario A, SACE = 5.78}} & \multicolumn{2}{c}{\textbf{$\xi=0.1$}} &  \multicolumn{4}{c}{\textbf{Scenario B, SACE = 11.39}} \\
\hline
Matching &  Crued:cal & 5.68 & 0.25 & 0.17 & 0.07 & 0.79 & 11.69 & 0.58 & 0.43 & 0.43 & 0.82 \\ 
&  Crued:PS & 5.02 & 0.56 & 0.23 & 0.90 & 0.32 & 12.22 & 0.64 & 0.56 & 1.10 & 0.66 \\ 
&  OLS:cal & 5.73 & 0.14 & 0.15 & 0.02 & 0.96 & 11.96 & 0.47 & 0.41 & 0.55 & 0.71 \\ 
&  OLS:PS & 5.79 & 0.14 & 0.14 & 0.02 & 0.96 & 12.65 & 0.51 & 0.47 & 1.86 & 0.23 \\ 
Matching &  Crude:cal & 5.62 & 0.16 & 0.14 & 0.05 & 0.79 & 10.29 & 0.31 & 0.30 & 1.30 & 0.07 \\ 
with &  Crude:PS & 5.78 & 0.25 & 0.20 & 0.06 & 0.89 & 12.47 & 0.76 & 0.55 & 1.74 & 0.50 \\ 
replacement &  WLS:cal & 5.78 & 0.14 & 0.15 & 0.02 & 0.96 & 10.44 & 0.32 & 0.37 & 0.99 & 0.26 \\ 
&  WLS:PS & 5.79 & 0.14 & 0.15 & 0.02 & 0.96 & 12.41 & 0.58 & 0.57 & 1.40 & 0.60 \\ 
&  BC:cal& 5.78 & 0.14 & 0.17 & 0.02 & 0.99 & 10.44 & 0.32 & 0.37 & 0.99 & 0.26 \\ 
\multicolumn{2}{c}{Composite} & 4.34 & 0.60 & 0.60 & 2.43 & 0.33 & 13.34 & 0.80 & 0.80 & 4.47 & 0.30 \\ 
\multicolumn{2}{c}{Naive} & 4.37 & 0.35 & 0.34 & 2.09 & 0.02 & 11.82 & 0.58 & 0.59 & 0.53 & 0.90 \\ 
\multicolumn{2}{c}{DL} & 5.78 & 0.13 &  & 0.02 &  & 12.66 & 0.45 &  & 1.84 &  \\ 
 & & & & & & & & & \\
 & \multicolumn{4}{c}{\textbf{Scenario B, SACE = 4.99}} & \multicolumn{2}{c}{\textbf{$\xi=0.1$}} & \multicolumn{4}{c}{\textbf{Scenario A, SACE = 10.9}} \\
 \hline
Matching & Crued:cal & 4.80 & 0.27 & 0.22 & 0.11 & 0.79 & 10.53 & 0.62 & 0.50 & 0.52 & 0.81 \\ 
&  Crued:PS & 4.18 & 0.54 & 0.30 & 0.95 & 0.36 & 10.80 & 0.69 & 0.70 & 0.48 & 0.96 \\ 
&  OLS:cal & 4.87 & 0.18 & 0.18 & 0.05 & 0.91 & 11.11 & 0.54 & 0.48 & 0.34 & 0.91 \\ 
&  OLS:PS & 5.00 & 0.18 & 0.18 & 0.03 & 0.94 & 11.62 & 0.63 & 0.56 & 0.92 & 0.77 \\ 
Matching &  Crude:cal & 4.88 & 0.20 & 0.18 & 0.06 & 0.88 & 9.53 & 0.42 & 0.35 & 2.05 & 0.09 \\ 
with &  Crude:PS & 5.00 & 0.34 & 0.28 & 0.12 & 0.89 & 10.89 & 0.98 & 0.68 & 0.96 & 0.84 \\ 
replacement &  WLS:cal & 5.00 & 0.19 & 0.19 & 0.04 & 0.94 & 9.65 & 0.38 & 0.46 & 1.70 & 0.19 \\ 
&  WLS:PS & 5.00 & 0.19 & 0.19 & 0.04 & 0.94 & 11.46 & 0.74 & 0.72 & 0.87 & 0.92 \\ 
&  BC:cal& 5.00 & 0.19 & 0.23 & 0.04 & 0.98 & 9.65 & 0.38 & 0.42 & 1.70 & 0.15 \\ 
\multicolumn{2}{c}{Composite} & 2.42 & 0.68 & 0.64 & 7.07 & 0.03 & 7.40 & 0.84 & 0.88 & 12.90 & 0.02 \\ 
\multicolumn{2}{c}{Naive} & 3.77 & 0.44 & 0.44 & 1.68 & 0.22 & 10.97 & 0.72 & 0.72 & 0.52 & 0.96 \\ 
\multicolumn{2}{c}{DL} & 4.98 & 0.16 &  & 0.03 & & 11.61 & 0.59 &  & 0.86 &  \\ 
& & & & & & & & & \\
 & \multicolumn{4}{c}{\textbf{Scenario B, SACE = 5.78}} & \multicolumn{2}{c}{\textbf{$\xi=0.2$}} & \multicolumn{4}{c}{\textbf{Scenario B, SACE = 11.39}} \\
 \hline
Matching &  Crued:cal & 5.63 & 0.25 & 0.18 & 0.08 & 0.77 & 11.76 & 0.60 & 0.46 & 0.50 & 0.80 \\ 
&  Crued:PS & 4.42 & 0.45 & 0.26 & 2.05 & 0.04 & 12.01 & 0.61 & 0.57 & 0.75 & 0.80 \\ 
&  OLS:cal & 5.68 & 0.14 & 0.15 & 0.03 & 0.92 & 12.06 & 0.45 & 0.43 & 0.66 & 0.65 \\ 
&  OLS:PS & 5.79 & 0.14 & 0.15 & 0.02 & 0.95 & 13.03 & 0.58 & 0.50 & 3.02 & 0.09 \\ 
Matching &  Crude:cal & 5.62 & 0.16 & 0.14 & 0.05 & 0.78 & 10.26 & 0.33 & 0.30 & 1.38 & 0.07 \\ 
with &  Crude:PS & 5.78 & 0.26 & 0.20 & 0.07 & 0.88 & 12.47 & 0.80 & 0.56 & 1.81 & 0.51 \\ 
replacement &  WLS:cal & 5.79 & 0.15 & 0.15 & 0.02 & 0.95 & 10.42 & 0.32 & 0.37 & 1.04 & 0.25 \\ 
&  WLS:PS & 5.79 & 0.15 & 0.15 & 0.02 & 0.95 & 12.49 & 0.60 & 0.59 & 1.57 & 0.57 \\ 
&  BC:cal& 5.79 & 0.15 & 0.18 & 0.02 & 0.98 & 10.42 & 0.32 & 0.37 & 1.04 & 0.24 \\ 
\multicolumn{2}{c}{Composite} & 2.57 & 0.60 & 0.62 & 10.66 & 0.00 & 11.16 & 0.83 & 0.83 & 0.74 & 0.94 \\ 
\multicolumn{2}{c}{Naive} & 4.28 & 0.33 & 0.35 & 2.37 & 0.01 & 11.84 & 0.58 & 0.60 & 0.54 & 0.90 \\ 
\multicolumn{2}{c}{DL} & 5.79 & 0.13 &  & 0.02 & & 12.73 & 0.47 &  & 2.02 & \\ 
\end{tabular}}
\end{table}

\begin{table}
\caption{\label{Tab:AppSimResWithInterSeveralwrongXiDGMseq} \footnotesize Selected simulation results for several $\xi$ values, with $\xi_{assm}=0$, and $k=5$ covariates.
True outcome included $A$-$\bX_0$ interactions. 
Results described for matching on $\widehat{\widetilde{\pi}}^1_{as}(\bx_0)$ (PS) or using Mahalanobis distance with a caliper (cal).
OLS: least squares with interactions; WLS: weighted least squares with interactions; DL: model-based weighting estimator of \cite{ding2017principal}; Emp.SD: empirical standard deviation. Est.SE: estimated standard error; MSE: mean square error; CP95: empirical coverage proportion of 95\% confidence interval.}
\centering
\fbox{
\tiny
\begin{tabular}{lccccccccccc}
& \multicolumn{5}{c}{Correctly specified principal score and outcome models} & \multicolumn{5}{c}{Misspecified principal score and outcome models} \\
\em Method & \em Estimator &  \em Mean & \em Emp.SD & \em Est.SE & \em MSE & \em CP95 & Mean & Emp.SD & Est.SE & MSE & CP95 \\ \hline
& & & & & & & & & \\
& \multicolumn{4}{c}{\textbf{Scenario A, SACE = 4.99}} & \multicolumn{2}{c}{\textbf{$\xi=0.05$}} &  \multicolumn{4}{c}{\textbf{Scenario A, SACE = 10.9}} \\
\hline
Matching & Crued:cal & 4.89 & 0.25 & 0.21 & 0.07 & 0.89 & 10.20 & 0.56 & 0.45 & 0.82 & 0.58 \\ 
&  Crued:PS & 4.80 & 0.37 & 0.28 & 0.17 & 0.83 & 10.64 & 0.68 & 0.69 & 0.53 & 0.92 \\ 
&  OLS:cal & 4.98 & 0.17 & 0.18 & 0.03 & 0.96 & 10.75 & 0.56 & 0.43 & 0.34 & 0.84 \\ 
&  OLS:PS & 5.00 & 0.17 & 0.18 & 0.03 & 0.96 & 11.36 & 0.57 & 0.53 & 0.54 & 0.88 \\ 
Matching &  Crude:cal & 4.86 & 0.20 & 0.18 & 0.06 & 0.88 & 9.52 & 0.40 & 0.35 & 2.09 & 0.06 \\ 
with &  Crude:PS & 5.04 & 0.35 & 0.28 & 0.12 & 0.86 & 10.70 & 0.92 & 0.68 & 0.89 & 0.83 \\ 
replacement &  WLS:cal & 5.01 & 0.18 & 0.18 & 0.03 & 0.96 & 9.65 & 0.36 & 0.45 & 1.71 & 0.18 \\ 
&  WLS:PS & 5.01 & 0.18 & 0.19 & 0.03 & 0.95 & 11.35 & 0.71 & 0.69 & 0.71 & 0.93 \\ 
&  BC:cal& 5.01 & 0.18 & 0.22 & 0.03 & 0.98 & 9.65 & 0.36 & 0.41 & 1.71 & 0.14 \\ 
 \multicolumn{2}{c}{Composite}  & 4.22 & 0.65 & 0.64 & 1.02 & 0.77 & 9.76 & 0.88 & 0.88 & 2.07 & 0.74 \\ 
 \multicolumn{2}{c}{Naive}  & 3.90 & 0.42 & 0.42 & 1.35 & 0.26 & 10.94 & 0.69 & 0.69 & 0.47 & 0.96 \\ 
 \multicolumn{2}{c}{DL}  & 4.94 & 0.15 &  & 0.02 &  & 11.31 & 0.57 &  & 0.48 &  \\ 
 & & & & & & & & & \\
& \multicolumn{4}{c}{\textbf{Scenario A, SACE = 5.78}} & \multicolumn{2}{c}{\textbf{$\xi=0.05$}} &  \multicolumn{4}{c}{\textbf{Scenario A, SACE = 11.39}} \\
\hline
Matching &  Crued:cal & 5.66 & 0.24 & 0.17 & 0.07 & 0.8 & 11.54 & 0.59 & 0.41 & 0.37 & 0.83 \\ 
&  Crued:PS & 5.40 & 0.46 & 0.22 & 0.35 & 0.6 & 12.35 & 0.63 & 0.57 & 1.32 & 0.61 \\ 
&  OLS:cal & 5.76 & 0.14 & 0.14 & 0.02 & 0.96 & 11.78 & 0.53 & 0.39 & 0.43 & 0.77 \\ 
&  OLS:PS & 5.79 & 0.14 & 0.14 & 0.02 & 0.96 & 12.49 & 0.47 & 0.47 & 1.44 & 0.33 \\ 
Matching &  Crude:cal & 5.62 & 0.16 & 0.14 & 0.05 & 0.8 & 10.30 & 0.32 & 0.3 & 1.28 & 0.08 \\ 
with &  Crude:PS & 5.76 & 0.25 & 0.2 & 0.06 & 0.9 & 12.44 & 0.77 & 0.56 & 1.71 & 0.52 \\ 
replacement &  WLS:cal & 5.79 & 0.14 & 0.15 & 0.02 & 0.96 & 10.44 & 0.32 & 0.37 & 0.99 & 0.25 \\ 
&  WLS:PS & 5.79 & 0.14 & 0.15 & 0.02 & 0.96 & 12.37 & 0.55 & 0.57 & 1.28 & 0.63 \\ 
&  BC:cal& 5.79 & 0.14 & 0.17 & 0.02 & 0.98 & 10.44 & 0.32 & 0.36 & 0.99 & 0.25 \\ 
 \multicolumn{2}{c}{Composite}  & 5.42 & 0.58 & 0.58 & 0.46 & 0.9 & 14.59 & 0.78 & 0.78 & 10.86 & 0.02 \\ 
 \multicolumn{2}{c}{Naive}  & 4.42 & 0.34 & 0.33 & 1.94 & 0.03 & 11.80 & 0.57 & 0.58 & 0.50 & 0.91 \\ 
 \multicolumn{2}{c}{DL}  & 5.69 & 0.12 &  & 0.02 &  & 12.46 & 0.43 &  & 1.34 &  \\
& & & & & & & & & \\
& \multicolumn{4}{c}{\textbf{Scenario B, SACE = 4.99}} & \multicolumn{2}{c}{\textbf{$\xi=0.1$}} &  \multicolumn{4}{c}{\textbf{Scenario A, SACE = 10.9}} \\
\hline
Matching &  Crued:cal & 4.83 & 0.27 & 0.22 & 0.10 & 0.82 & 10.29 & 0.60 & 0.47 & 0.74 & 0.68 \\ 
&  Crued:PS & 4.65 & 0.44 & 0.29 & 0.31 & 0.74 & 10.61 & 0.69 & 0.69 & 0.56 & 0.91 \\ 
&  OLS:cal & 4.96 & 0.17 & 0.18 & 0.03 & 0.95 & 10.91 & 0.56 & 0.45 & 0.32 & 0.88 \\ 
&  OLS:PS & 5.00 & 0.18 & 0.18 & 0.03 & 0.94 & 11.39 & 0.58 & 0.54 & 0.57 & 0.86 \\ 
Matching &  Crude:cal & 4.84 & 0.21 & 0.19 & 0.07 & 0.84 & 9.53 & 0.38 & 0.35 & 2.03 & 0.05 \\ 
with &  Crude:PS & 4.98 & 0.33 & 0.28 & 0.11 & 0.9 & 10.60 & 0.97 & 0.68 & 1.02 & 0.81 \\ 
replacement &  WLS:cal & 5.00 & 0.19 & 0.18 & 0.04 & 0.94 & 9.65 & 0.34 & 0.45 & 1.69 & 0.18 \\ 
&  WLS:PS & 5.00 & 0.19 & 0.19 & 0.04 & 0.94 & 11.31 & 0.70 & 0.69 & 0.65 & 0.95 \\ 
&  BC:cal& 5.00 & 0.19 & 0.22 & 0.04 & 0.98 & 9.65 & 0.34 & 0.42 & 1.69 & 0.14 \\ 
  \multicolumn{2}{c}{Composite}  & 3.57 & 0.63 & 0.64 & 2.41 & 0.4 & 8.93 & 0.86 & 0.88 & 4.65 & 0.39 \\ 
 \multicolumn{2}{c}{Naive}  & 3.86 & 0.42 & 0.42 & 1.46 & 0.26 & 10.98 & 0.70 & 0.71 & 0.50 & 0.95 \\ 
 \multicolumn{2}{c}{DL}  & 4.88 & 0.15 &  & 0.04 &  & 11.19 & 0.57 &  & 0.40 &  \\
 & & & & & & & & & \\
& \multicolumn{4}{c}{\textbf{Scenario A, SACE = 5.78}} & \multicolumn{2}{c}{\textbf{$\xi=0.1$}} &  \multicolumn{4}{c}{\textbf{Scenario A, SACE = 11.39}} \\
\hline
Matching &  Crued:cal & 5.63 & 0.26 & 0.17 & 0.09 & 0.74 & 11.71 & 0.61 & 0.43 & 0.48 & 0.8 \\ 
&  Crued:PS & 5.02 & 0.55 & 0.24 & 0.88 & 0.3 & 12.24 & 0.66 & 0.58 & 1.16 & 0.68 \\ 
&  OLS:cal & 5.75 & 0.13 & 0.15 & 0.02 & 0.97 & 11.99 & 0.50 & 0.41 & 0.62 & 0.66 \\ 
&  OLS:PS & 5.79 & 0.13 & 0.14 & 0.02 & 0.97 & 12.64 & 0.52 & 0.48 & 1.85 & 0.23 \\ 
Matching &  Crude:cal & 5.62 & 0.15 & 0.14 & 0.05 & 0.8 & 10.29 & 0.34 & 0.3 & 1.32 & 0.08 \\ 
with &  Crude:PS & 5.76 & 0.26 & 0.21 & 0.07 & 0.88 & 12.44 & 0.78 & 0.56 & 1.71 & 0.55 \\ 
replacement &  WLS:cal & 5.79 & 0.14 & 0.15 & 0.02 & 0.97 & 10.43 & 0.33 & 0.37 & 1.01 & 0.26 \\ 
&  WLS:PS & 5.78 & 0.14 & 0.15 & 0.02 & 0.96 & 12.38 & 0.57 & 0.57 & 1.32 & 0.61 \\ 
&  BC:cal& 5.79 & 0.14 & 0.17 & 0.02 & 0.98 & 10.43 & 0.33 & 0.37 & 1.01 & 0.26 \\ 
 \multicolumn{2}{c}{Composite}  & 4.35 & 0.59 & 0.6 & 2.37 & 0.34 & 13.36 & 0.82 & 0.8 & 4.59 & 0.31 \\ 
 \multicolumn{2}{c}{Naive}  & 4.38 & 0.34 & 0.34 & 2.05 & 0.01 & 11.85 & 0.59 & 0.58 & 0.56 & 0.89 \\ 
 \multicolumn{2}{c}{DL}  & 5.59 & 0.12 &  & 0.05 &  & 12.36 & 0.45 &  & 1.14 &  \\
 & & & & & & & & & \\
& \multicolumn{4}{c}{\textbf{Scenario A, SACE = 4.99}} & \multicolumn{2}{c}{\textbf{$\xi=0.2$}} &  \multicolumn{4}{c}{\textbf{Scenario A, SACE = 10.9}} \\
\hline
Matching &  Crued:cal & 4.70 & 0.32 & 0.23 & 0.19 & 0.69 & 10.31 & 0.65 & 0.49 & 0.76 & 0.69 \\ 
&  Crued:PS & 4.16 & 0.54 & 0.31 & 0.98 & 0.34 & 10.56 & 0.71 & 0.69 & 0.61 & 0.91 \\ 
&  OLS:cal & 4.93 & 0.17 & 0.18 & 0.03 & 0.95 & 11.11 & 0.55 & 0.47 & 0.34 & 0.9 \\ 
&  OLS:PS & 5.00 & 0.18 & 0.18 & 0.03 & 0.95 & 11.52 & 0.62 & 0.55 & 0.77 & 0.8 \\ 
Matching &  Crude:cal & 4.79 & 0.21 & 0.19 & 0.08 & 0.78 & 9.50 & 0.38 & 0.35 & 2.10 & 0.05 \\ 
with &  Crude:PS & 4.94 & 0.36 & 0.28 & 0.13 & 0.88 & 10.51 & 0.96 & 0.68 & 1.06 & 0.78 \\ 
replacement &  WLS:cal & 5.01 & 0.19 & 0.18 & 0.04 & 0.94 & 9.65 & 0.36 & 0.46 & 1.67 & 0.2 \\ 
&  WLS:PS & 5.00 & 0.18 & 0.19 & 0.03 & 0.96 & 11.34 & 0.70 & 0.7 & 0.69 & 0.93 \\ 
&  BC:cal& 5.01 & 0.19 & 0.22 & 0.04 & 0.98 & 9.65 & 0.36 & 0.42 & 1.67 & 0.14 \\ 
 \multicolumn{2}{c}{Composite}  & 2.42 & 0.63 & 0.64 & 6.99 & 0.02 & 7.38 & 0.86 & 0.88 & 13.11 & 0.02 \\ 
 \multicolumn{2}{c}{Naive}  & 3.74 & 0.41 & 0.44 & 1.74 & 0.18 & 10.94 & 0.73 & 0.72 & 0.53 & 0.94 \\ 
 \multicolumn{2}{c}{DL}  & 4.77 & 0.15 &  & 0.07 &  & 10.96 & 0.57 &  & 0.33 &  \\ 
& & & & & & & & & \\
& \multicolumn{4}{c}{\textbf{Scenario A, SACE = 5.78}} & \multicolumn{2}{c}{\textbf{$\xi=0.2$}} &  \multicolumn{4}{c}{\textbf{Scenario A, SACE = 11.39}} \\
\hline
Matching &  Crued:cal & 5.54 & 0.28 & 0.18 & 0.14 & 0.65 & 11.73 & 0.66 & 0.46 & 0.55 & 0.78 \\ 
&  Crued:PS & 4.42 & 0.45 & 0.27 & 2.06 & 0.04 & 11.96 & 0.66 & 0.59 & 0.76 & 0.83 \\ 
&  OLS:cal & 5.72 & 0.15 & 0.15 & 0.02 & 0.93 & 12.15 & 0.48 & 0.43 & 0.82 & 0.57 \\ 
&  OLS:PS & 5.79 & 0.15 & 0.15 & 0.02 & 0.95 & 13.04 & 0.58 & 0.51 & 3.05 & 0.1 \\ 
Matching &  Crude:cal & 5.61 & 0.16 & 0.14 & 0.05 & 0.75 & 10.29 & 0.32 & 0.3 & 1.32 & 0.08 \\ 
with &  Crude:PS & 5.76 & 0.29 & 0.22 & 0.08 & 0.88 & 12.35 & 0.86 & 0.58 & 1.66 & 0.59 \\ 
replacement &  WLS:cal & 5.79 & 0.15 & 0.15 & 0.02 & 0.95 & 10.46 & 0.31 & 0.38 & 0.96 & 0.29 \\ 
&  WLS:PS & 5.79 & 0.15 & 0.15 & 0.02 & 0.96 & 12.47 & 0.61 & 0.59 & 1.54 & 0.58 \\ 
&  BC:cal& 5.79 & 0.15 & 0.18 & 0.02 & 0.98 & 10.46 & 0.31 & 0.37 & 0.96 & 0.27 \\ 
  \multicolumn{2}{c}{Composite}  & 2.55 & 0.64 & 0.62 & 10.86 & 0 & 11.18 & 0.83 & 0.83 & 0.73 & 0.94 \\ 
 \multicolumn{2}{c}{Naive}  & 4.27 & 0.35 & 0.35 & 2.39 & 0.01 & 11.85 & 0.62 & 0.6 & 0.60 & 0.88 \\ 
 \multicolumn{2}{c}{DL}  & 5.40 & 0.12 &  & 0.16 &  & 12.09 & 0.47 &  & 0.71 &  \\ 
\end{tabular}}
\end{table}

\begin{table}
\caption{\label{Tab:AppSimResWithInterFullDGMmulti} \footnotesize
Selected simulation results when monotonicity holds ($\xi=\xi_{assm}=0$), for Scenarios A and B, with high $\pi_{pro}$, and $k=3, 5, 10$ covariates.
The principal strata were generated according to the multinomial logistic regression model.
True outcome model included $A$-$\bX_0$ interactions. 
Results described for matching on $\widehat{\widetilde{\pi}}^1_{as}(\bx_0)$ (PS) or using Mahalanobis distance with a caliper (cal).
OLS: least squares with interactions; WLS: weighted least squares with interactions; DL: model-based weighting estimator of \cite{ding2017principal}; Emp.SD: empirical standard deviation. Est.SE: estimated standard error; MSE: mean square error; CP95: empirical coverage proportion of 95\% confidence interval.}
\centering
\fbox{
\scriptsize
\begin{tabular}{lccccccccccc}
& \multicolumn{5}{c}{Correctly specified principal score and outcome models} & \multicolumn{5}{c}{Misspecified principal score model} \\
\em Method & \em Estimator &  \em Mean & \em Emp.SD & \em Est.SE & \em MSE & \em CP95 & Mean & Emp.SD & Est.SE & MSE & CP95 \\ \hline
& & & & & & & & & \\
& \multicolumn{4}{c}{\textbf{Scenario A, SACE = 3.36}} & \multicolumn{2}{c}{\textbf{k=3}} &  \multicolumn{4}{c}{\textbf{Scenario A, SACE = 3.71}} \\
\hline
Matching 
& Crude &  3.33 & 0.13 & 0.13 & 0.02 & 0.94 & 3.67 & 0.14 & 0.12 & 0.02 & 0.91 \\
& OLS & 3.35 & 0.12 & 0.12 & 0.01 & 0.95 & 3.69 & 0.13 & 0.12 & 0.02 & 0.94 \\ 
& OLS-I & 3.36 & 0.13 & 0.12 & 0.02 & 0.95 & 3.71 & 0.13 & 0.12 & 0.02 & 0.94 \\  Matching 
& Crude & 3.34 & 0.13 & 0.12 & 0.02 & 0.94 & 3.69 & 0.14 & 0.12 & 0.02 & 0.90 \\ 
 with & BC & 3.36 & 0.13 & 0.14 & 0.02 & 0.98 & 3.71 & 0.14 & 0.14 & 0.02 & 0.96 \\ 
replacement& WLS & 3.35 & 0.13 & 0.13 & 0.02 & 0.97 & 3.70 & 0.14 & 0.13 & 0.02 & 0.95 \\ 
& WLS-I & 3.36 & 0.13 & 0.13 & 0.02 & 0.95 & 3.71 & 0.14 & 0.13 & 0.02 & 0.94 \\ 
\multicolumn{2}{c}{Composite}  & 10.73 & 0.51 & 0.52 & 54.58 & 0.00 & 8.47 & 0.53 & 0.53 & 22.91 & 0.00 \\ 

\multicolumn{2}{c}{Naive}  & 2.61 & 0.25 & 0.26 & 0.62 & 0.16 & 3.40 & 0.27 & 0.26 & 0.17 & 0.78 \\ 
\multicolumn{2}{c}{DL} & 3.36 & 0.11 &  & 0.01 &  & 3.84 & 0.13 &  & 0.03 &  \\
& & & & & & & & & \\
 & \multicolumn{4}{c}{\textbf{Scenario B, SACE = 3.38}} & \multicolumn{2}{c}{\textbf{k=3}} & \multicolumn{4}{c}{\textbf{Scenario B, SACE = 3.64}} \\
 \hline
Matching & Crude & 3.30 & 0.14 & 0.11 & 0.03 & 0.81 & 3.48 & 0.18 & 0.11 & 0.05 & 0.70 \\
& OLS & 3.36 & 0.09 & 0.10 & 0.01 & 0.96 & 3.62 & 0.11 & 0.10 & 0.01 & 0.93 \\ 
& OLS-I & 3.37 & 0.10 & 0.10 & 0.01 & 0.93 & 3.63 & 0.12 & 0.10 & 0.01 & 0.92 \\ 
Matching  & Crude & 3.32 & 0.11 & 0.10 & 0.01 & 0.90 & 3.62 & 0.12 & 0.10 & 0.02 & 0.88 \\ 
with  & BC & 3.37 & 0.11 & 0.13 & 0.01 & 0.97 & 3.64 & 0.12 & 0.13 & 0.02 & 0.96 \\ 
replacement  & WLS & 3.37 & 0.11 & 0.11 & 0.01 & 0.96 & 3.63 & 0.12 & 0.12 & 0.01 & 0.93 \\ 
& WLS-I & 3.37 & 0.11 & 0.11 & 0.01 & 0.94 & 3.64 & 0.12 & 0.11 & 0.02 & 0.91 \\ 
 \multicolumn{2}{c}{Composite}  & 5.57 & 0.45 & 0.46 & 5.02 & 0.00 & 8.98 & 0.56 & 0.45 & 28.87 & 0.00 \\ 
\multicolumn{2}{c}{Naive}   &  2.34 & 0.23 & 0.23 & 1.12 & 0.00 & 3.66 & 0.22 & 0.23 & 0.05 & 0.96 \\ 
\multicolumn{2}{c}{DL} & 3.38 & 0.09 &  & 0.01 &  & 3.79 & 0.14 &  & 0.04 &  \\ 

& & & & & & & & & \\
& \multicolumn{4}{c}{\textbf{Scenario A, SACE = 5.65}} & \multicolumn{2}{c}{\textbf{k=5}} &  \multicolumn{4}{c}{\textbf{Scenario A, SACE = 4.92}} \\
\hline
Matching 
& Crude & 5.60 & 0.19 & 0.18 & 0.04 & 0.93 & 4.93 & 0.20 & 0.19 & 0.04 & 0.93 \\
& OLS & 5.63 & 0.17 & 0.17 & 0.03 & 0.95 & 
4.94 & 0.16 & 0.16 & 0.03 & 0.95 \\ 
& OLS-I & 5.64 & 0.17 & 0.18 & 0.03 & 0.96 &
4.92 & 0.17 & 0.17 & 0.03 & 0.96 \\ Matching 
& Crude & 5.62 & 0.19 & 0.18 & 0.04 & 0.93& 4.92 & 0.20 & 0.18 & 0.04 & 0.92 \\ 
 with & BC & 5.64 & 0.18 & 0.20 & 0.03 & 0.98 &
 4.92 & 0.18 & 0.20 & 0.03 & 0.97 \\ 
 replacement& WLS & 5.64 & 0.18 & 0.18 & 0.03 &0.96 & 
4.93 & 0.18 & 0.18 & 0.03 & 0.96 \\ 
& WLS-I & 5.64 & 0.18 & 0.18 & 0.03 & 0.95 & 4.92 & 0.18 & 0.18 & 0.03 & 0.94 \\
\multicolumn{2}{c}{Composite}  & 13.0 & 0.59 & 0.60 & 54.31 & 0.00 & 11.3 & 0.64 & 0.62 & 40.58 & 0.00 \\ 

\multicolumn{2}{c}{Naive}  & 4.69 & 0.36 & 0.36 & 1.05 & 0.23 & 4.47 & 0.37 & 0.37 & 0.34 & 0.78 \\ 
\multicolumn{2}{c}{DL} & 5.65 & 0.16 &  & 0.03 &  & 5.14 & 0.17 &  & 0.08 &\\
& & & & & & & & & \\
 & \multicolumn{4}{c}{\textbf{Scenario B, SACE = 6}} & \multicolumn{2}{c}{\textbf{k=5}} &  \multicolumn{4}{c}{\textbf{Scenario B, SACE = 4.86}} \\
 \hline
Matching & Crude & 5.89 & 0.22 & 0.16 & 0.06 & 0.80
& 4.43 & 0.30 & 0.19 & 0.28 & 0.43 \\ 
& OLS & 5.96 & 0.11 & 0.13 & 0.01 & 0.97 
& 4.78 & 0.14 & 0.14 & 0.03 & 0.92 \\ 
& OLS-I & 5.98 & 0.13 & 0.14 & 0.02 & 0.96 & 4.86 & 0.16 & 0.14 & 0.02 & 0.91 \\ 
Matching  & Crude & 5.80 & 0.15 & 0.14 & 0.06 & 0.72 
& 4.75 & 0.17 & 0.15 & 0.04 & 0.86 \\ 
with  & BC & 5.99 & 0.14 & 0.17 & 0.02 & 0.98 &
4.86 & 0.16 & 0.18 & 0.03 & 0.96 \\ 
replacement  & WLS & 5.95 & 0.13 & 0.15 & 0.02 & 0.96 
& 4.85 & 0.16 & 0.16 & 0.02 & 0.95 \\
& WLS-I & 5.99 & 0.14 & 0.15 & 0.02 & 0.96 & 4.86 & 0.16 & 0.15 & 0.03 & 0.92 \\
 \multicolumn{2}{c}{Composite}  & 7.51 & 0.55 & 0.55 & 2.61 & 0.21 & 10.3 & 0.66 & 0.55 & 29.71 & 0.00 \\ 
\multicolumn{2}{c}{Naive}   & 4.22 & 0.31 & 0.31 & 3.25 & 0.00 & 4.53 & 0.34 & 0.33 & 0.23 & 0.81\\ 
\multicolumn{2}{c}{DL} & 6.00 & 0.12 &  & 0.015 & & 4.95 & 0.16 &  & 0.035 &\\
& & & & & & & & & \\
& \multicolumn{4}{c}{\textbf{Scenario A, SACE = 9.48}} & \multicolumn{2}{c}{\textbf{k=10}} & \multicolumn{4}{c}{\textbf{Scenario A, SACE = 8.63}} \\
\hline
Matching 
& Crude & 9.34 & 0.30 & 0.28 & 0.11 & 0.90 & 8.72 & 0.34 & 0.30 & 0.12 & 0.92 \\
& OLS &  9.45 & 0.22 & 0.22 & 0.05 & 0.94 & 8.66 & 0.23 & 0.22 & 0.06 & 0.93 \\ 
& OLS-I & 9.48 & 0.23 & 0.24 & 0.05 & 0.96 & 8.63 & 0.25 & 0.24 & 0.06 & 0.95 \\  Matching 
& Crude & 9.38 & 0.29 & 0.26 & 0.09 & 0.90 & 8.69 & 0.33 & 0.28 & 0.11 & 0.89 \\ 
 with & BC & 9.49 & 0.24 & 0.29 & 0.06 & 0.97 & 8.63 & 0.25 & 0.29 & 0.06 & 0.98 \\ 
 replacement & WLS & 9.46 & 0.24 & 0.24 & 0.06 & 0.96 & 8.66 & 0.25 & 0.24 & 0.06 & 0.94 \\ 
& WLS-I & 9.49 & 0.24 & 0.25 & 0.06 & 0.95 & 8.63 & 0.25 & 0.25 & 0.06 & 0.94 \\ 
\multicolumn{2}{c}{Composite}  & 17.20 & 0.78 & 0.77 & 60.13 & 0.00 & 15.83 & 0.81 & 0.79 & 52.49 & 0.00 \\ 
\multicolumn{2}{c}{Naive}  & 6.71 & 0.50 & 0.50 & 7.93 & 0.00 & 6.83 & 0.52 & 0.51 & 3.50 & 0.06 \\ 
\multicolumn{2}{c}{DL} & 9.48 & 0.23 &  & 0.05 &  & 8.91 & 0.25 &  & 0.15 &  \\ 
& & & & & & & & & \\
 & \multicolumn{4}{c}{\textbf{Scenario B, SACE = 9.12}} & \multicolumn{2}{c}{\textbf{k=10}} & \multicolumn{4}{c}{\textbf{Scenario B, SACE = 8.34}} \\
 \hline
Matching & Crude & 8.99 & 0.33 & 0.26 & 0.12 & 0.87 & 7.5 & 0.44 & 0.32 & 0.919 & 0.29 \\ 
& OLS & 9.08 & 0.15 & 0.18 & 0.02 & 0.97 & 8.17 & 0.17 & 0.19 & 0.06 & 0.87 \\ 
& OLS-I & 9.10 & 0.19 & 0.20 & 0.03 & 0.96 & 8.33 & 0.21 & 0.20 & 0.04 & 0.93 \\ 
Matching  & Crude & 8.73 & 0.26 & 0.23 & 0.21 & 0.58 & 7.96 & 0.31 & 0.27 & 0.25 & 0.67 \\ 
with  & BC & 9.12 & 0.19 & 0.24 & 0.04 & 0.99 & 8.33 & 0.22 & 0.26 & 0.05 & 0.97 \\ 
replacement & WLS & 9.04 & 0.18 & 0.20 & 0.04 & 0.95 & 8.26 & 0.20 & 0.21 & 0.04 & 0.95 \\ 
& WLS-I & 9.12 & 0.19 & 0.20 & 0.04 & 0.96 & 8.33 & 0.22 & 0.21 & 0.05 & 0.94 \\ 
 \multicolumn{2}{c}{Composite} & 10.42 & 0.69 & 0.70 & 2.17 & 0.53 & 13.87 & 0.80 & 0.70 & 31.18 & 0.00 \\ 
\multicolumn{2}{c}{Naive}  & 6.67 & 0.46 & 0.46 & 6.18 & 0.00 & 7.01 & 0.46 & 0.46 & 2.00 & 0.17 \\ 
\multicolumn{2}{c}{DL} & 9.13 & 0.17 &  & 0.03 &  & 8.42 & 0.22 &  & 0.05 &  \\ 
\end{tabular}}
\end{table}

\begin{figure}
\centering
\caption{\footnotesize{Bias of different estimators, with $k=5$, for several $\xi$ values, and $\xi_{assm}=\xi$, under correctly specified models (top left), misspecified principal score model (top right), misspecified outcome model (bottom  left), and misspecified principal score model and outcome model (bottom  right). True outcome model outcome included $A$-$\bX_0$ interactions. 
Matching was on the Mahalanobis distance without (Mahal) or with a caliper (Mahal caliper), or on $\widehat{\widetilde{\pi}}^1_{as}(\bx_0)$ (PS).
WLS: weighted least squares with interactions;
OLS: ordinary least squares with interactions; 
Crude: Crude mean difference.
Weighting (DL): model-based weighting estimator of \cite{ding2017principal}.
The true SACE parameter ranged between $4.99 - 5.78$ under the first outcome model (with the original covariates), and between $10.9 - 12.01$ under the second outcome model (with the transformed covariates).
The OLS estimator after matching without replacement on the Mahalanobis distance with a caliper is affected by $\xi$ values, because as $\xi$ increases, $\pi_{as}$ decreases while $\pi_{pro}$ increases, resulting in smaller pool of always-survivors in $\{A=1, S=1\}$. One may expect that matching without replacement is more frail to a smaller pool than matching with replacement.
\label{Fig:AppbiasS2withInterDGMseq}}} \begin{minipage}{.5\textwidth}
\includegraphics[scale=0.415]{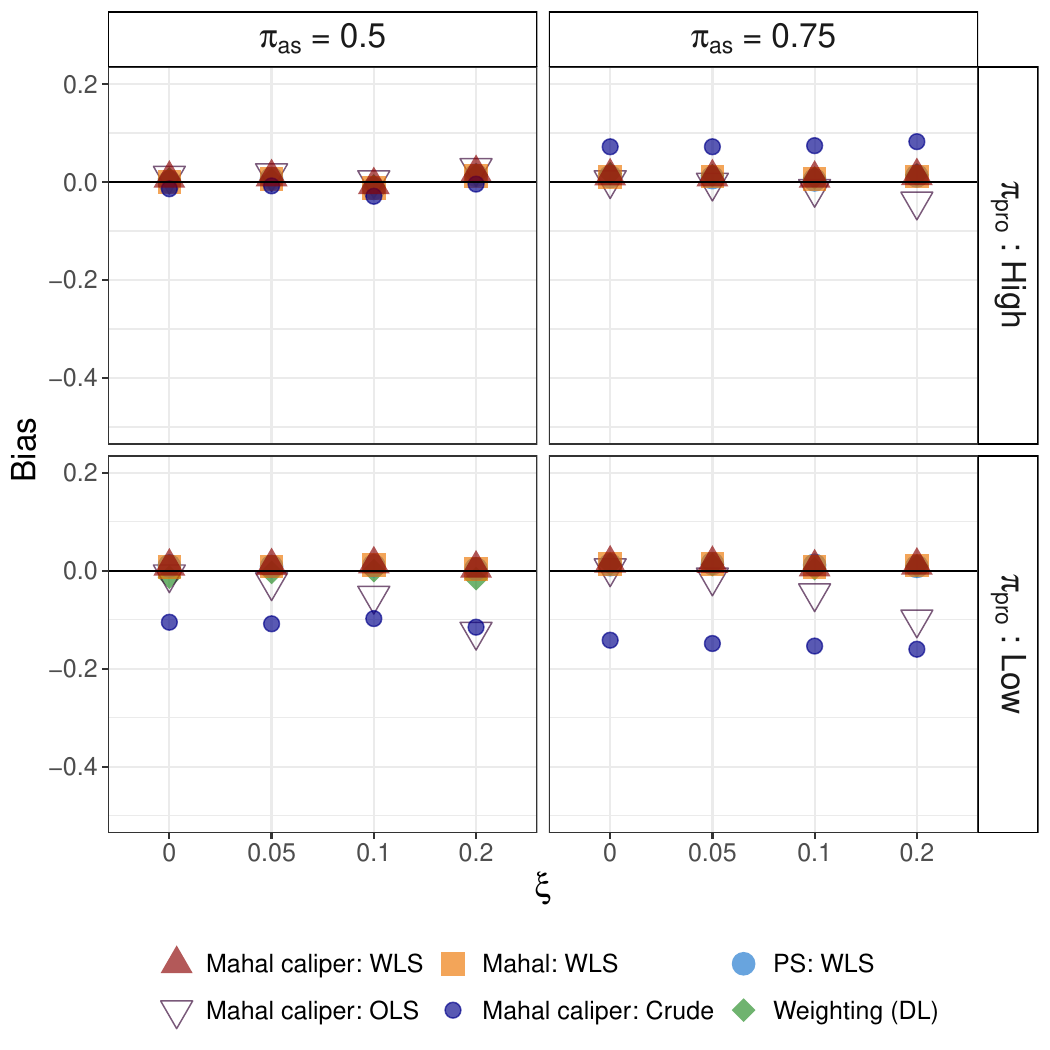}
\end{minipage}%
\begin{minipage}{.475\textwidth}
\centering
\includegraphics[scale=0.415]{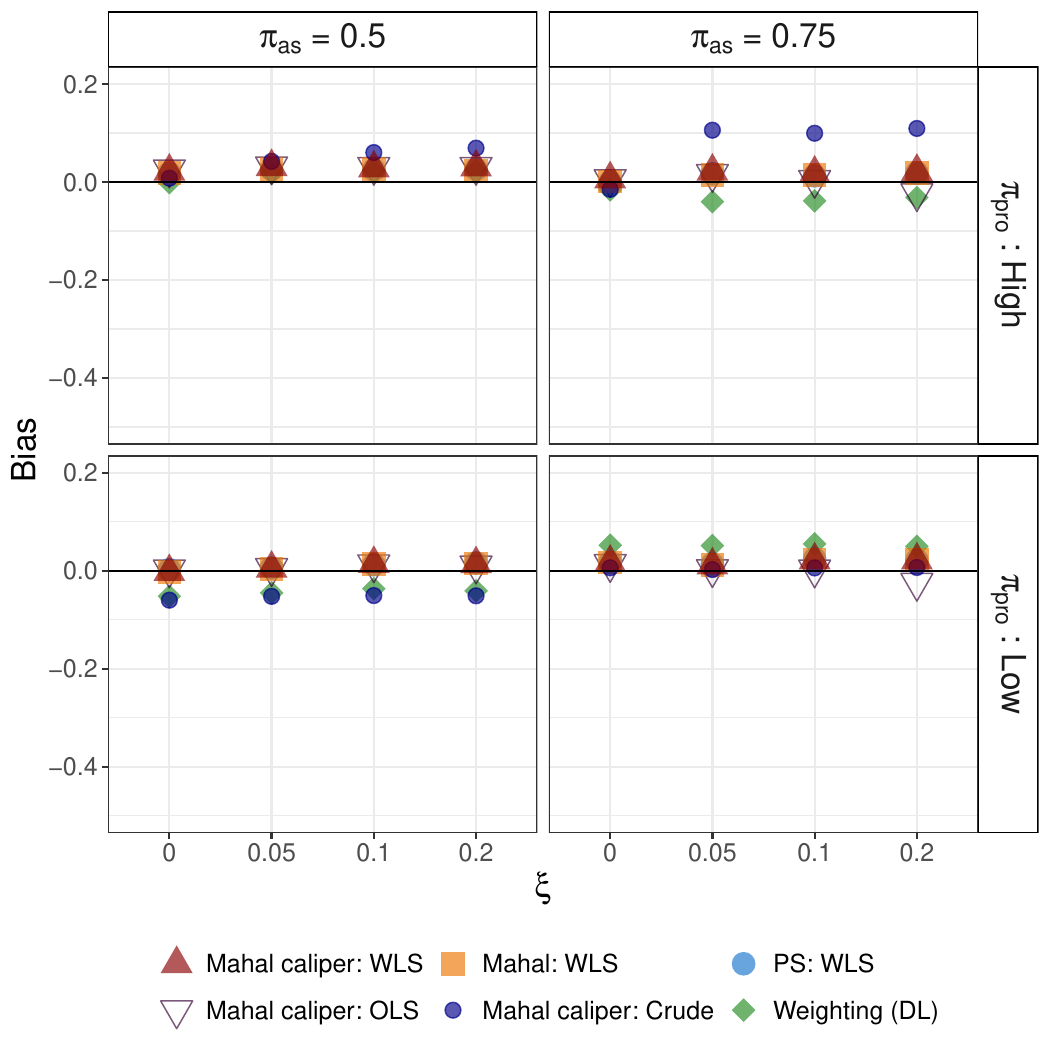}
\end{minipage}

\begin{minipage}{.5\textwidth}
\includegraphics[scale=0.415]{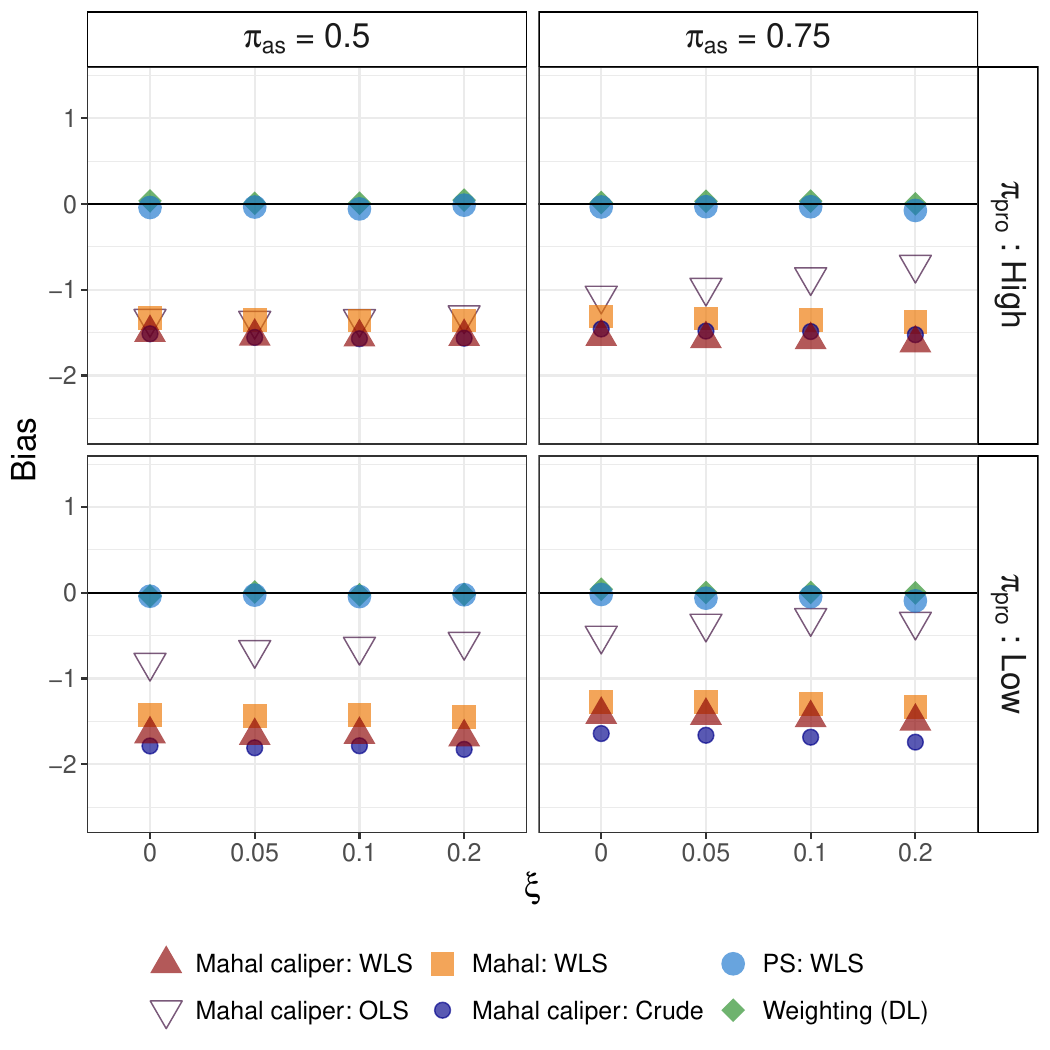}
\end{minipage}%
\begin{minipage}{.475\textwidth}
\centering
\includegraphics[scale=0.415]{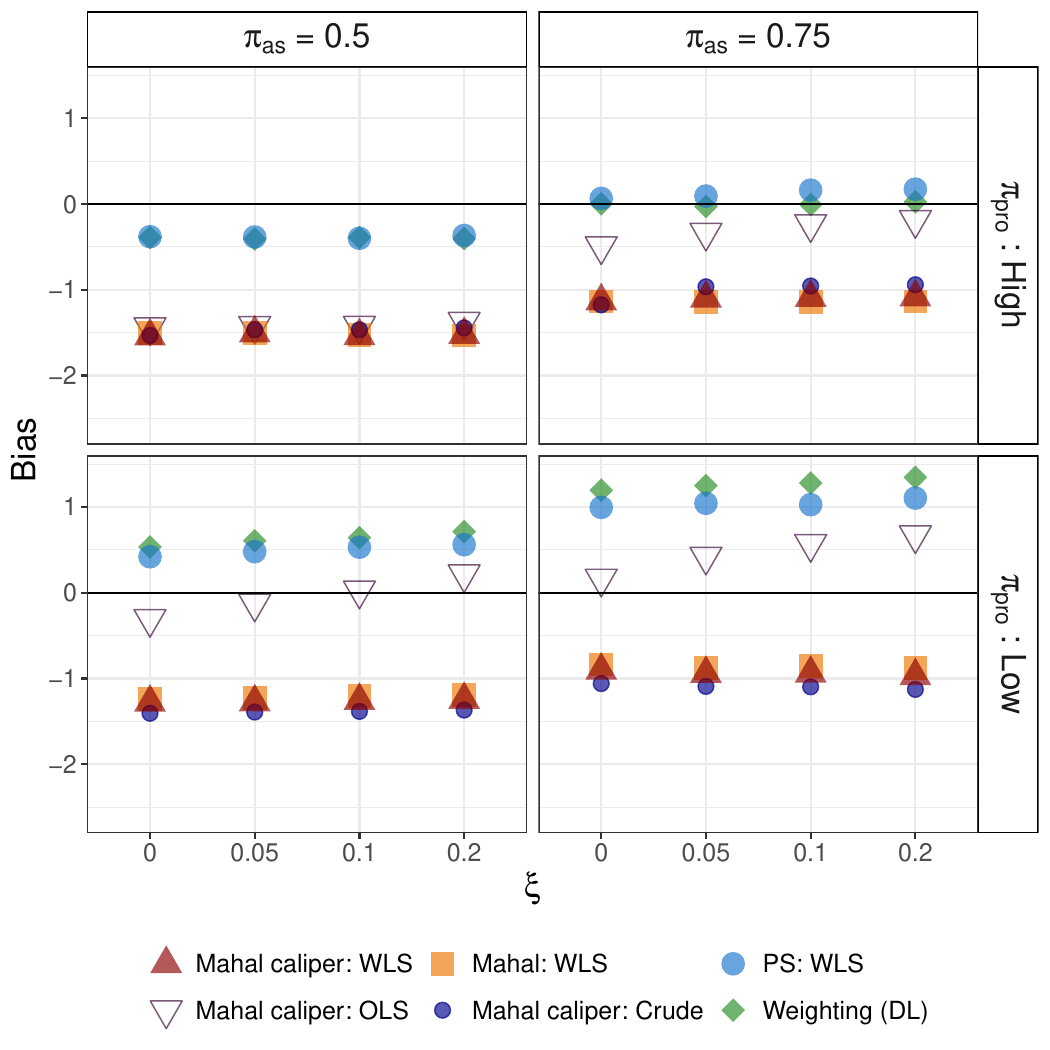}
\end{minipage}
\end{figure}

\begin{figure}
\centering
\caption{\footnotesize{Bias of different estimators, with $k=5$, for several $\xi$ values, and $\xi_{assm}=0$, under correctly specified models (top left), misspecified principal score model (top right), misspecified outcome model (bottom  left), and misspecified principal score model and outcome model (bottom  right). True outcome model outcome included $A$-$\bX_0$ interactions.
Matching was on the Mahalanobis distance without (Mahal) or with a caliper (Mahal caliper), or on $\widehat{\widetilde{\pi}}^1_{as}(\bx_0)$ (PS).
WLS: weighted least squares with interactions;
OLS: ordinary least squares with interactions; 
Crude: Crude mean difference.
Weighting (DL): model-based weighting estimator of \cite{ding2017principal}.
The true SACE parameter ranged between $4.99 - 5.78$ under the first outcome model (with the original covariates), and between $10.9 - 12.01$ under the second outcome model (with the transformed covariates). \label{Fig:AppbiasS3withInterDGMseq}}} \begin{minipage}{.5\textwidth}
\includegraphics[scale=0.415]{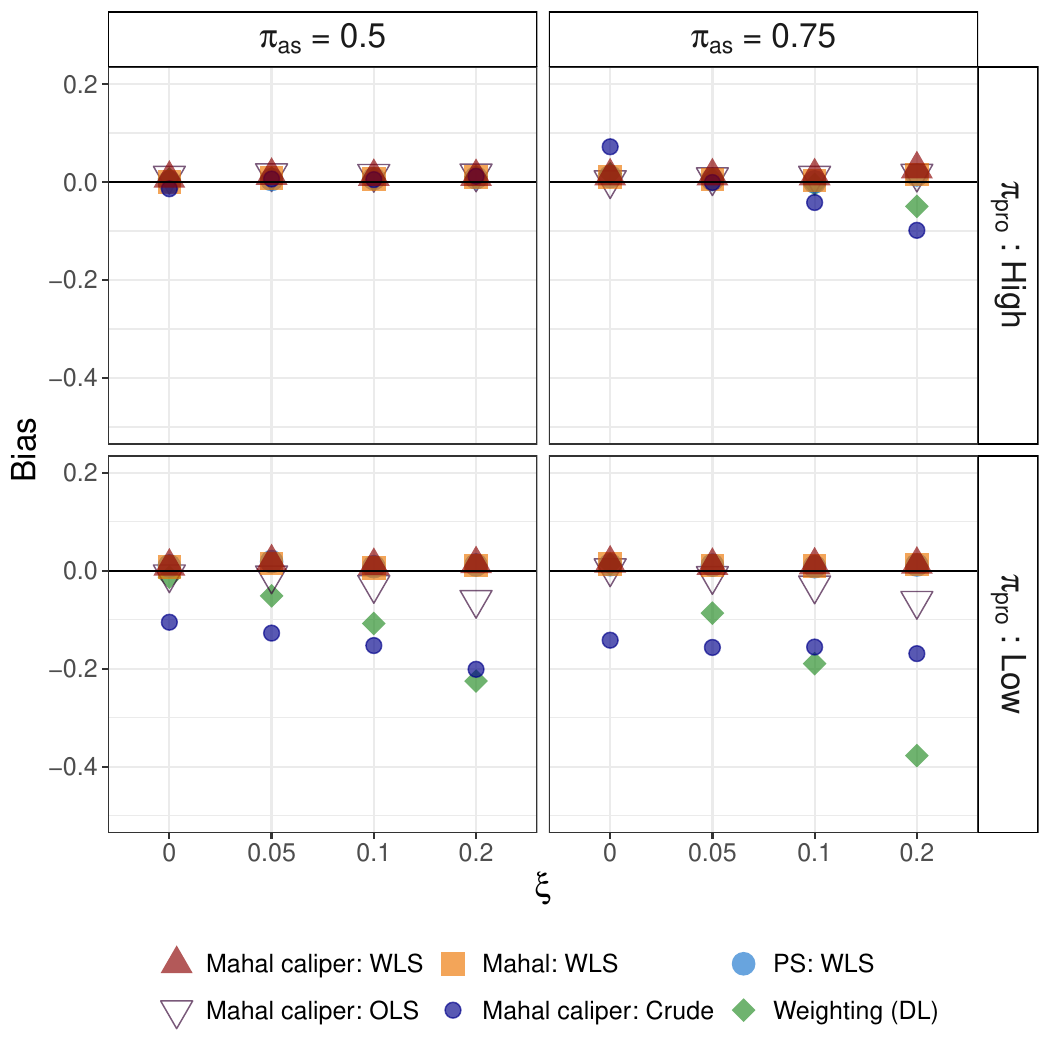}
\end{minipage}%
\begin{minipage}{.475\textwidth}
\centering
\includegraphics[scale=0.415]{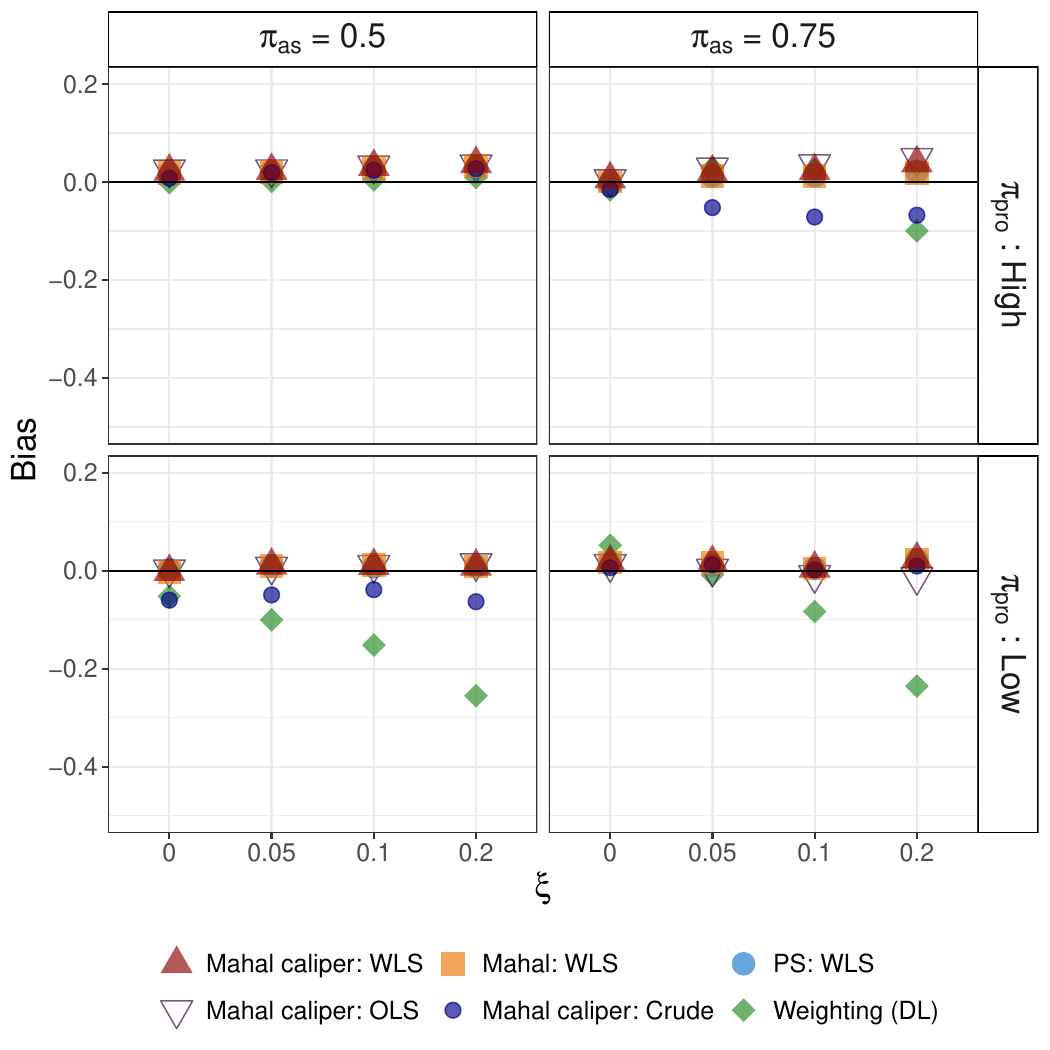}
\end{minipage}

\begin{minipage}{.5\textwidth}
\includegraphics[scale=0.415]{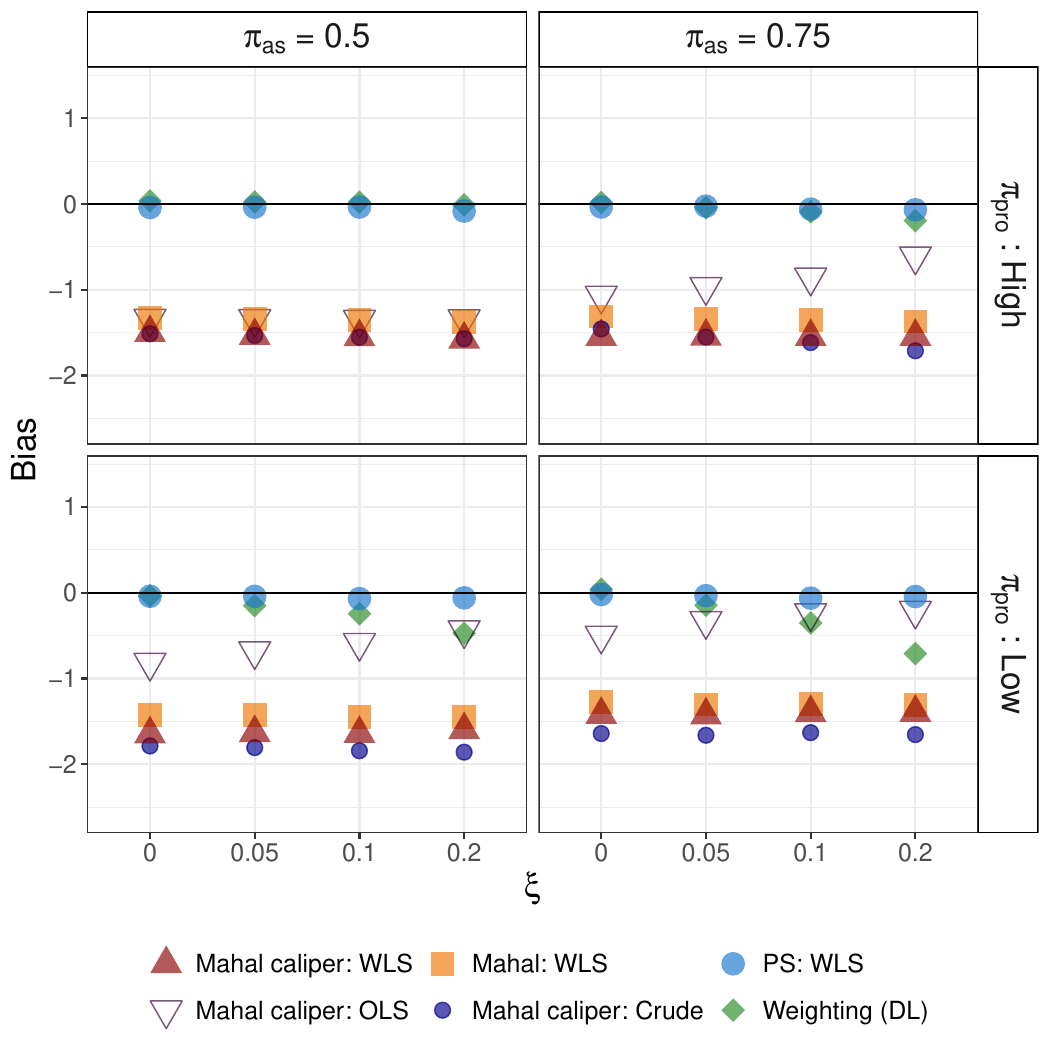}
\end{minipage}%
\begin{minipage}{.475\textwidth}
\centering
\includegraphics[scale=0.415]{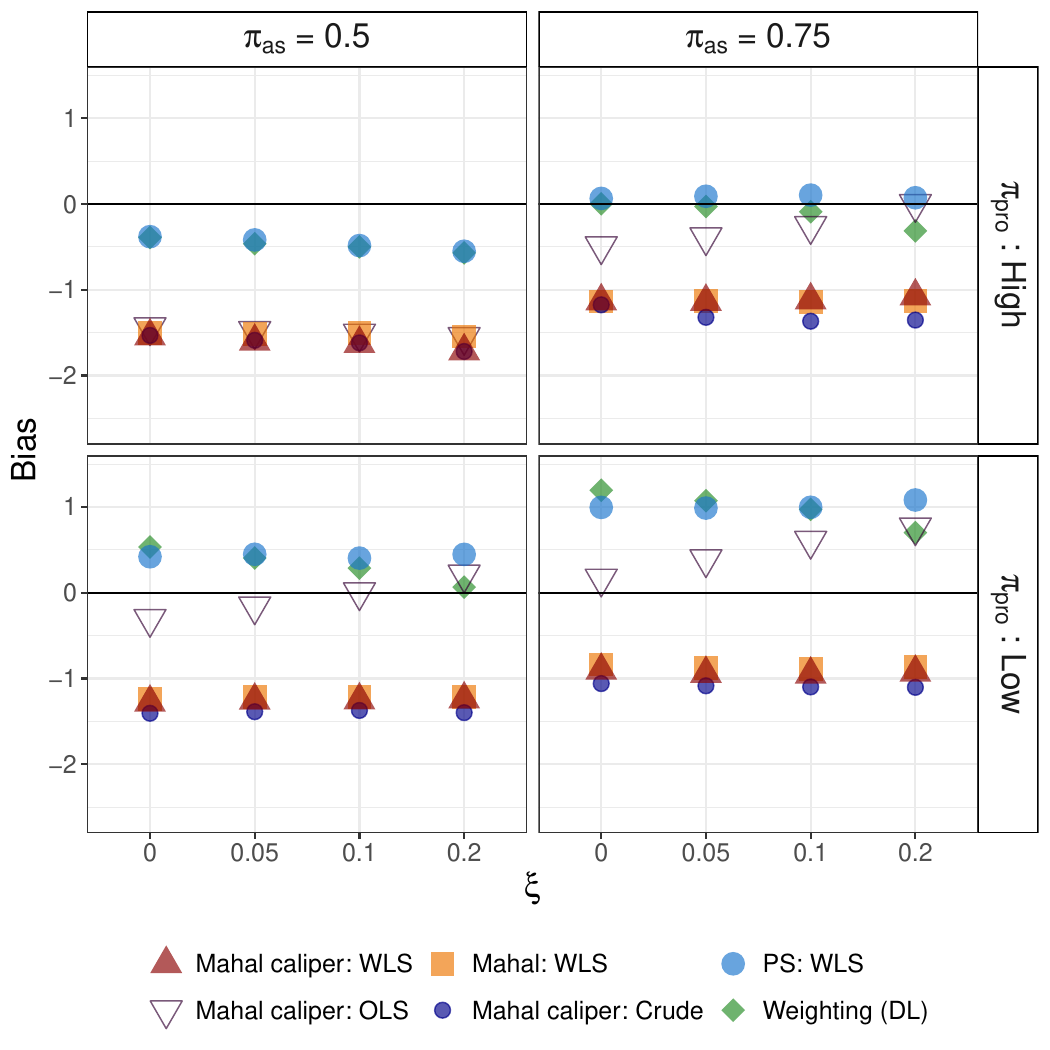}
\end{minipage}
\end{figure}
\begin{figure}
\centering
\caption{\footnotesize{Bias of different estimators, with $k=3, 5, 10$ covariates, when monotonicity holds ($\xi_{assm}=\xi=0$), under correctly specified models (top left), misspecified principal score model (top right), misspecified outcome model (bottom  left), and misspecified principal score model and outcome model (bottom  right). True outcome model outcome did not include $A$-$\bX_0$ interactions. 
Matching was on the Mahalanobis distance without (Mahal) or with a caliper (Mahal caliper), or on $\widehat{\widetilde{\pi}}^1_{as}(\bx_0)$ (PS).
WLS: weighted least squares with interactions;
OLS: ordinary least squares with interactions; 
Crude: Crude mean difference.
Weighting (DL): model-based weighting estimator of \cite{ding2017principal}.
The true SACE parameter is 2 for all $k$. \label{Fig:AppbiasS1woutInterDGMseq}}} \begin{minipage}{.5\textwidth}
\includegraphics[scale=0.415]{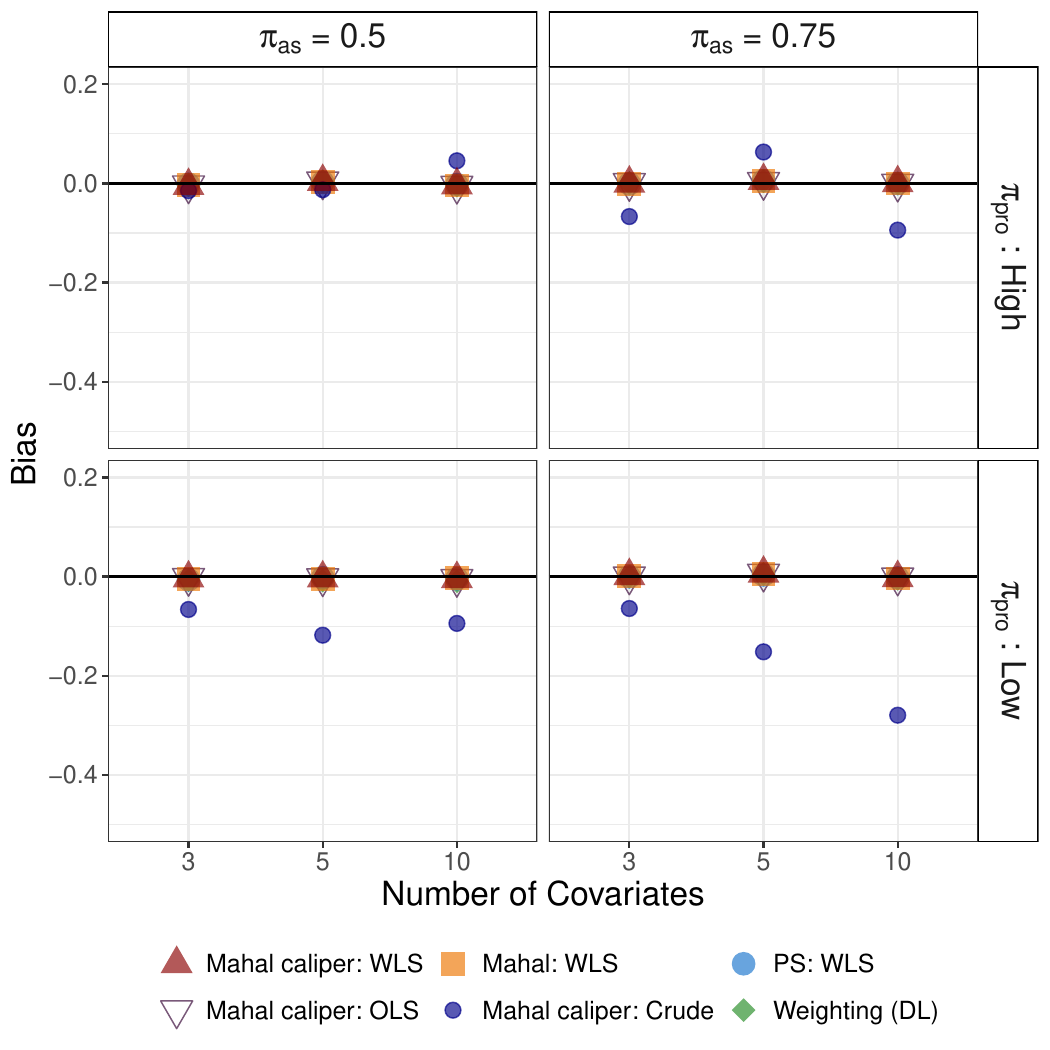}
\end{minipage}%
\begin{minipage}{.475\textwidth}
\centering
\includegraphics[scale=0.415]{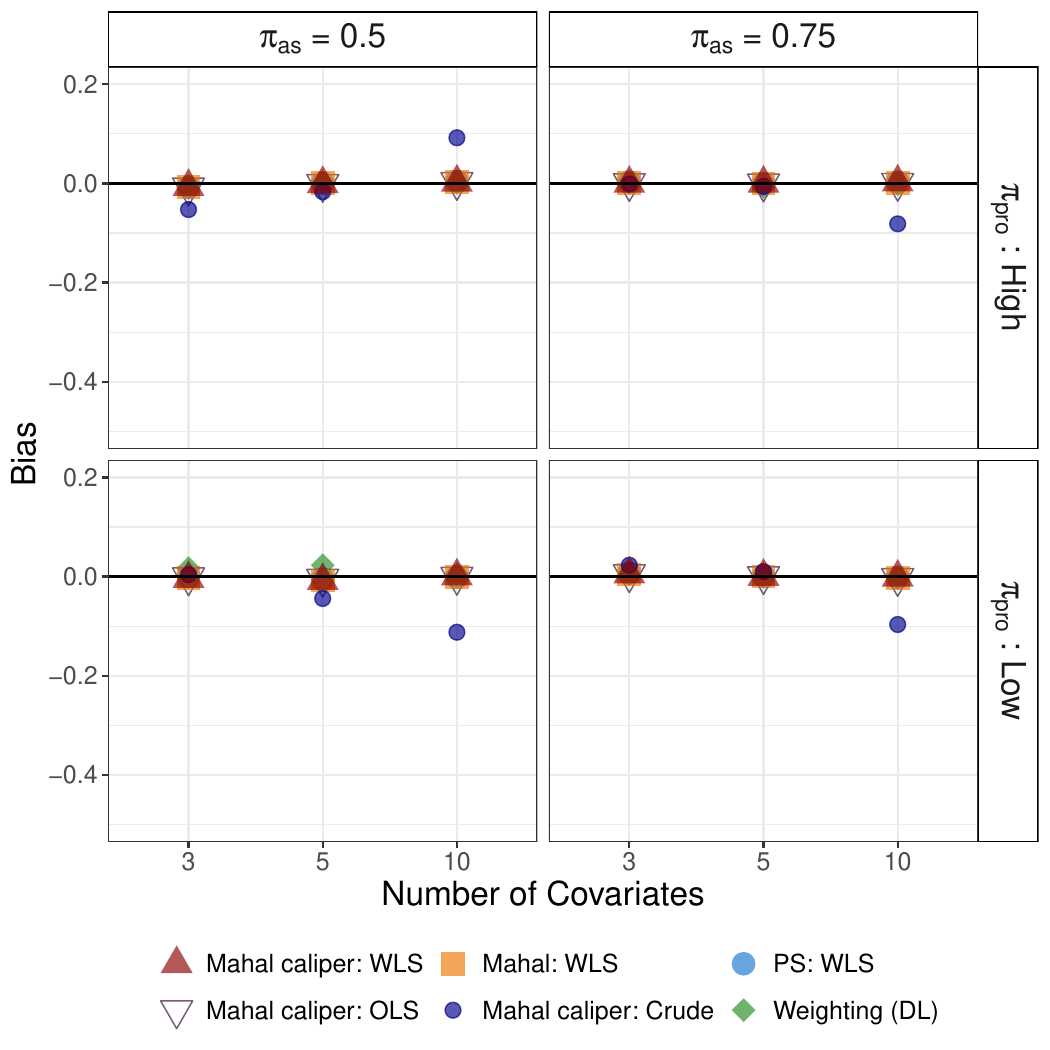}
\end{minipage}

\begin{minipage}{.5\textwidth}
\includegraphics[scale=0.415]{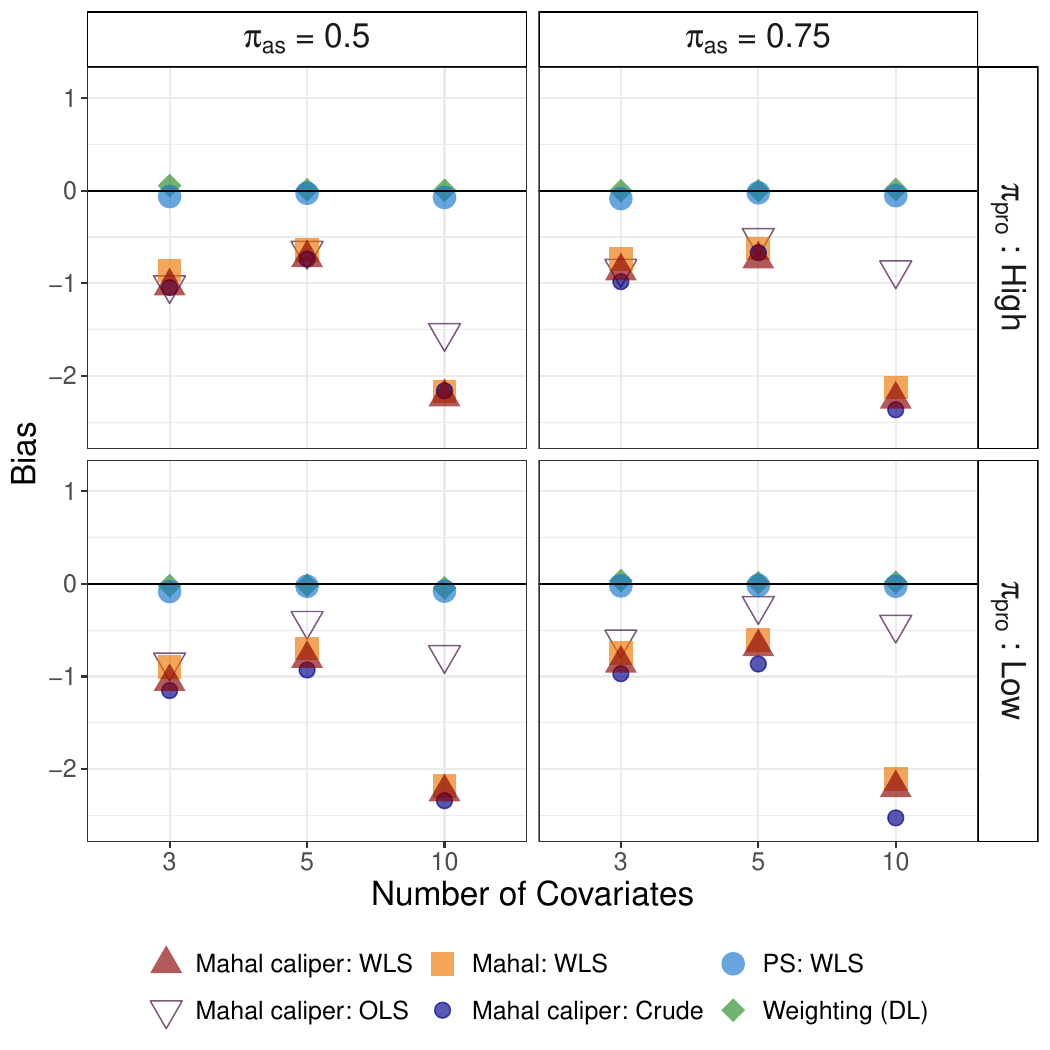}
\end{minipage}%
\begin{minipage}{.475\textwidth}
\centering
\includegraphics[scale=0.415]{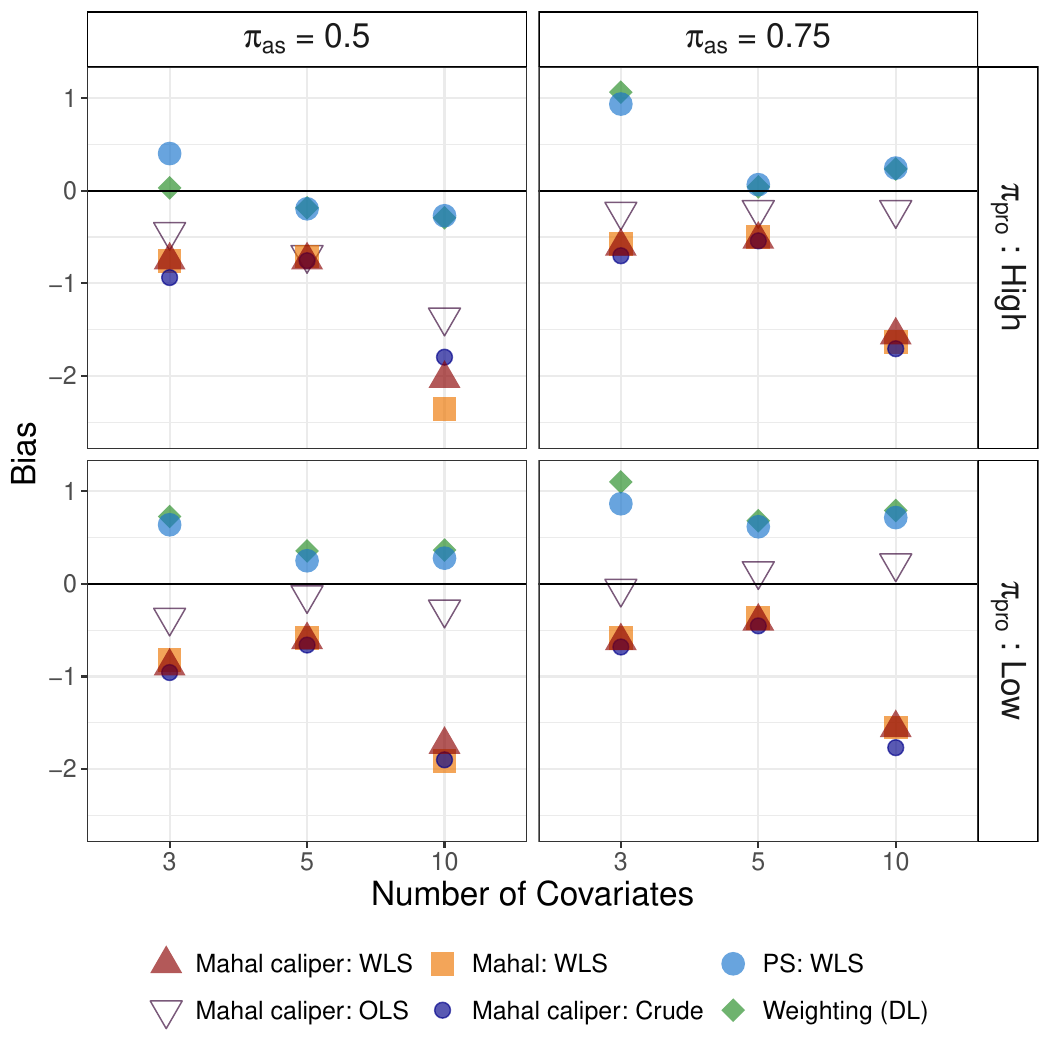}
\end{minipage}
\end{figure}

\begin{figure}
\centering
\caption{\footnotesize{Bias of different estimators, with $k=5$, for several $\xi$ values, and $\xi_{assm}=\xi$, under correctly specified models (top left), misspecified principal score model (top right), misspecified outcome model (bottom  left), and misspecified principal score model and outcome model (bottom  right). True outcome model did not include $A$-$\bX_0$ interactions. 
Matching was on the Mahalanobis distance without (Mahal) or with a caliper (Mahal caliper), or on $\widehat{\widetilde{\pi}}^1_{as}(\bx_0)$ (PS).
WLS: weighted least squares with interactions;
OLS: ordinary least squares with interactions; 
Crude: Crude mean difference.
Weighting (DL): model-based weighting estimator of \cite{ding2017principal}.
The true SACE parameter is 2 for all $\xi$ values. \label{Fig:AppbiasS2woutInterDGMseq}}} \begin{minipage}{.5\textwidth}
\includegraphics[scale=0.415]{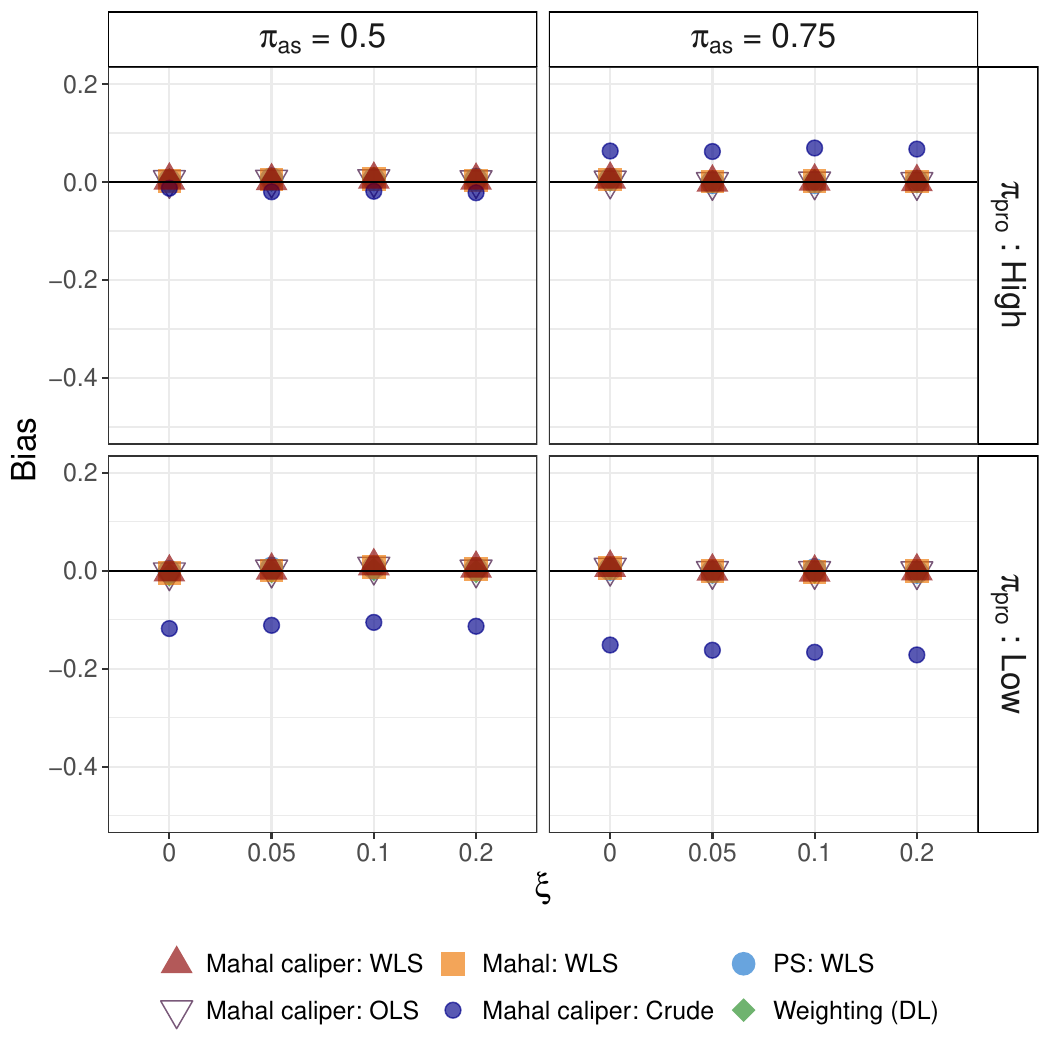}
\end{minipage}%
\begin{minipage}{.475\textwidth}
\centering
\includegraphics[scale=0.415]{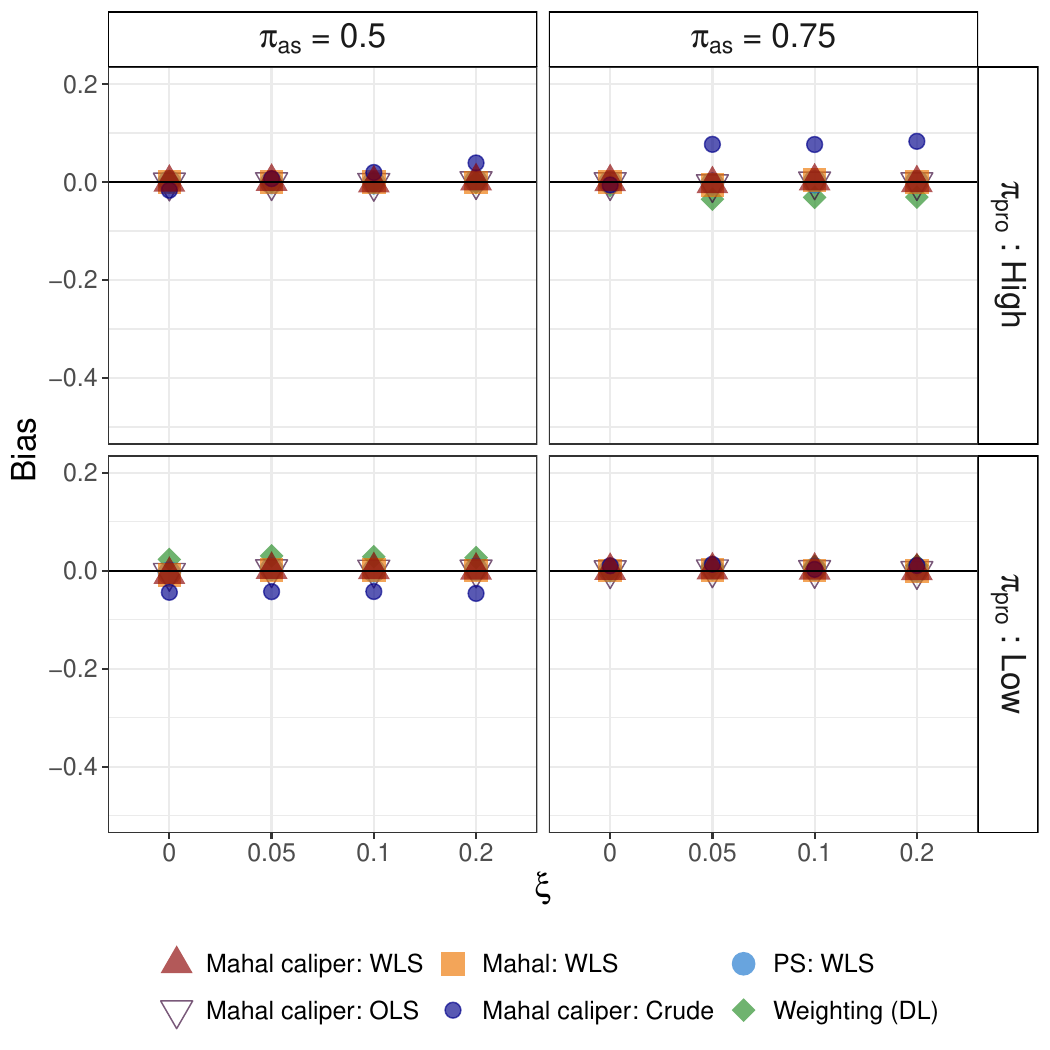}
\end{minipage}

\begin{minipage}{.5\textwidth}
\includegraphics[scale=0.415]{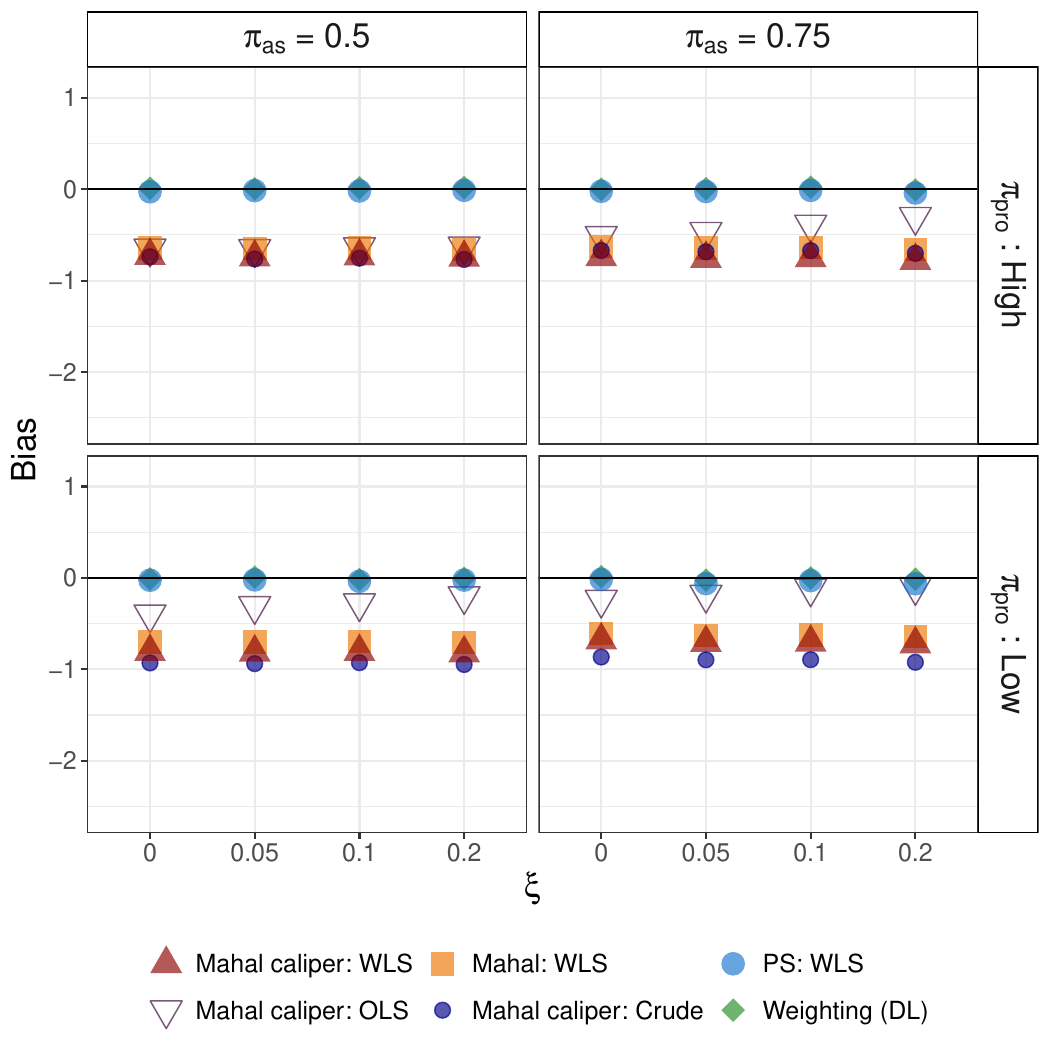}
\end{minipage}%
\begin{minipage}{.475\textwidth}
\centering
\includegraphics[scale=0.415]{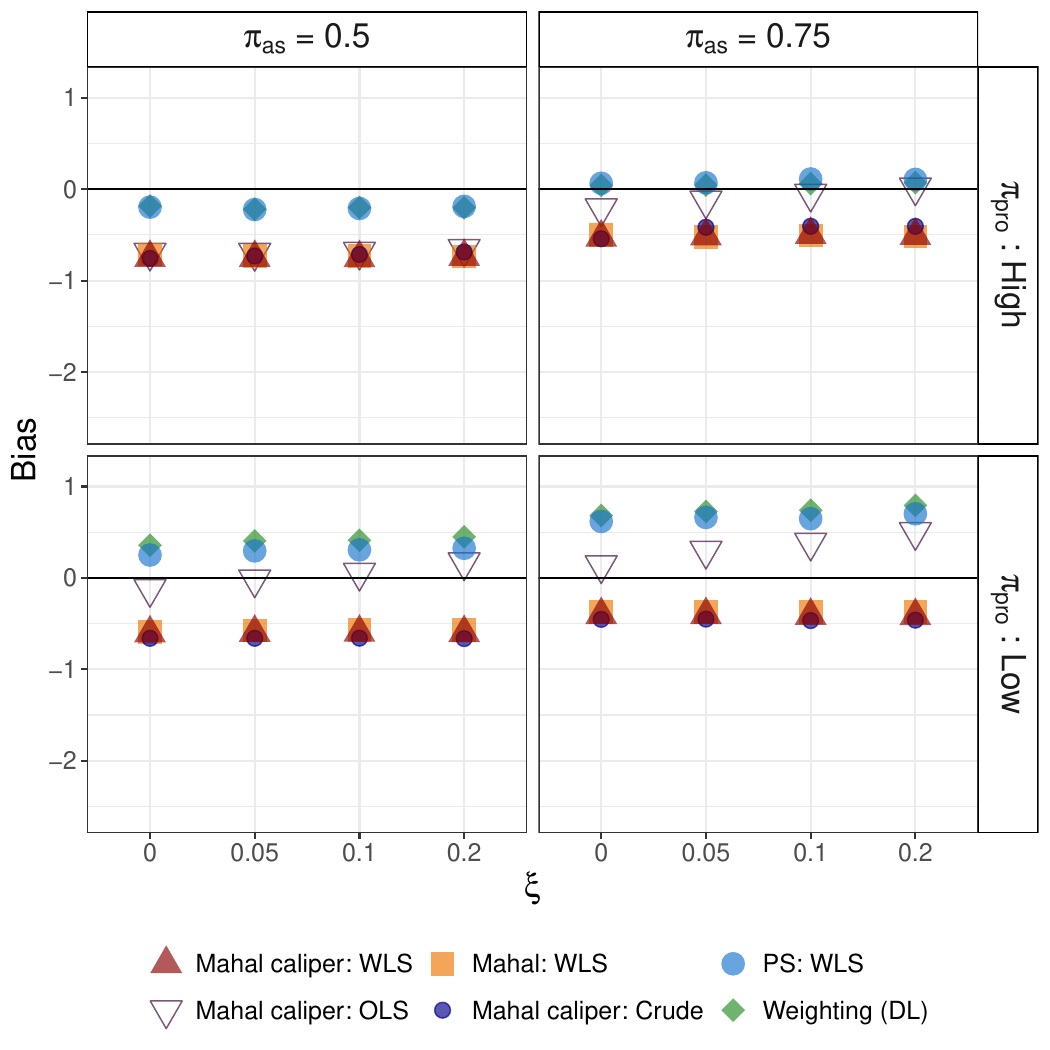}
\end{minipage}
\end{figure}

\begin{figure}
\centering
\caption{\footnotesize{Bias of different estimators, with $k=5$, for several $\xi$ values, and $\xi_{assm}=0$, under correctly specified models (top left), misspecified principal score model (top right), misspecified outcome model (bottom  left), and misspecified principal score model and outcome model (bottom  right). True outcome model outcome did not include $A$-$\bX_0$ interactions.
Matching was on the Mahalanobis distance without (Mahal) or with a caliper (Mahal caliper), or on $\widehat{\widetilde{\pi}}^1_{as}(\bx_0)$ (PS).
WLS: weighted least squares with interactions;
OLS: ordinary least squares with interactions; 
Crude: Crude mean difference.
Weighting (DL): model-based weighting estimator of \cite{ding2017principal}.
The true SACE parameter is 2 for all $\xi$ values.\label{Fig:AppbiasS3woutInterDGMseq}}} \begin{minipage}{.5\textwidth}
\includegraphics[scale=0.415]{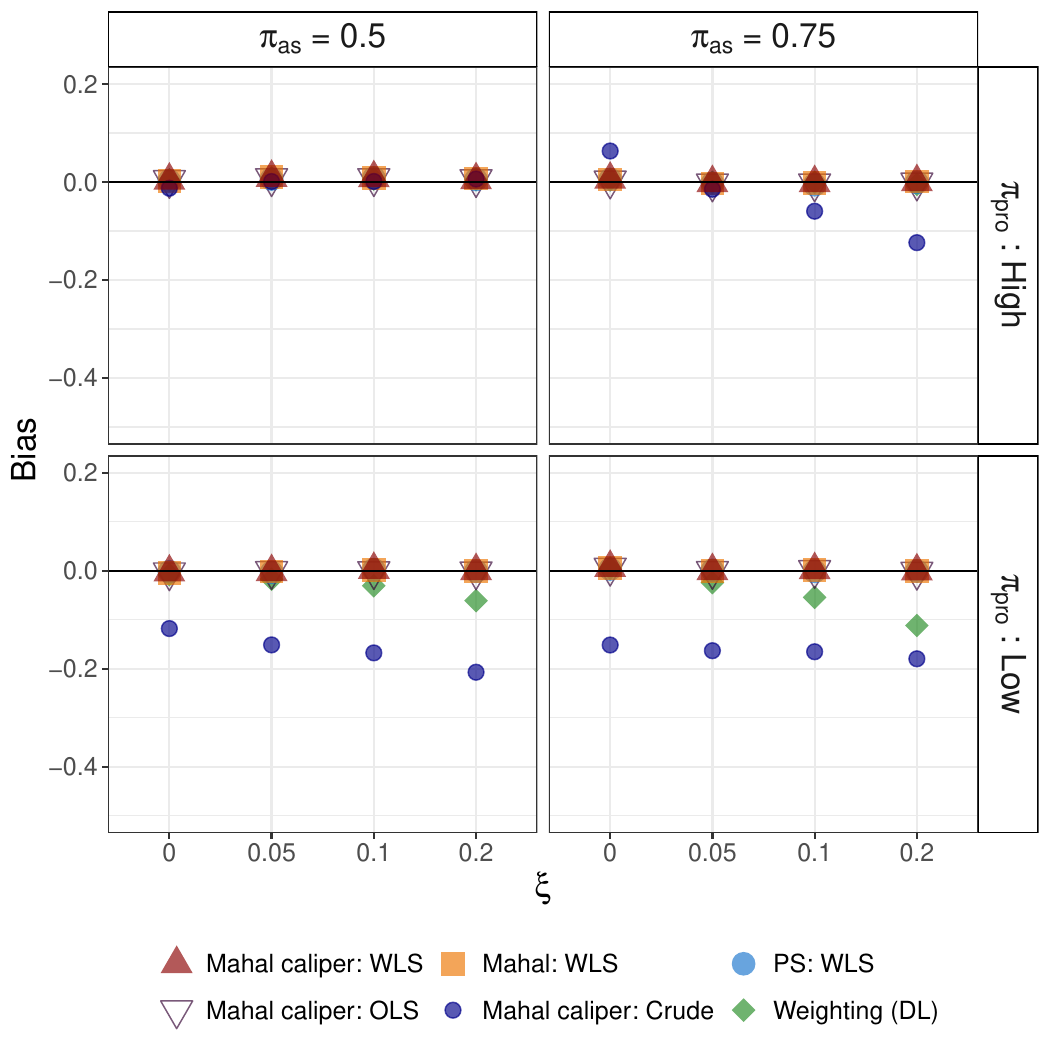}
\end{minipage}%
\begin{minipage}{.475\textwidth}
\centering
\includegraphics[scale=0.415]{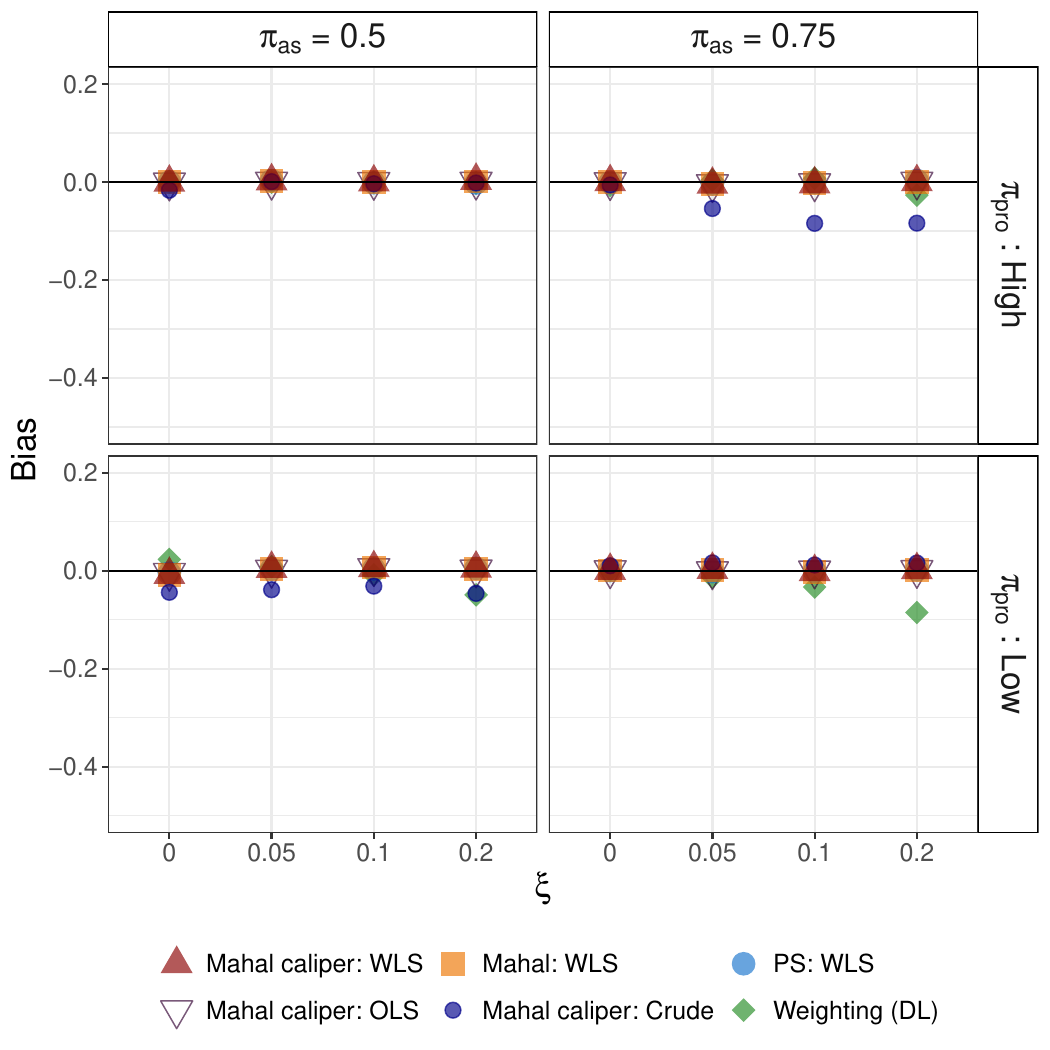}
\end{minipage}

\begin{minipage}{.5\textwidth}
\includegraphics[scale=0.415]{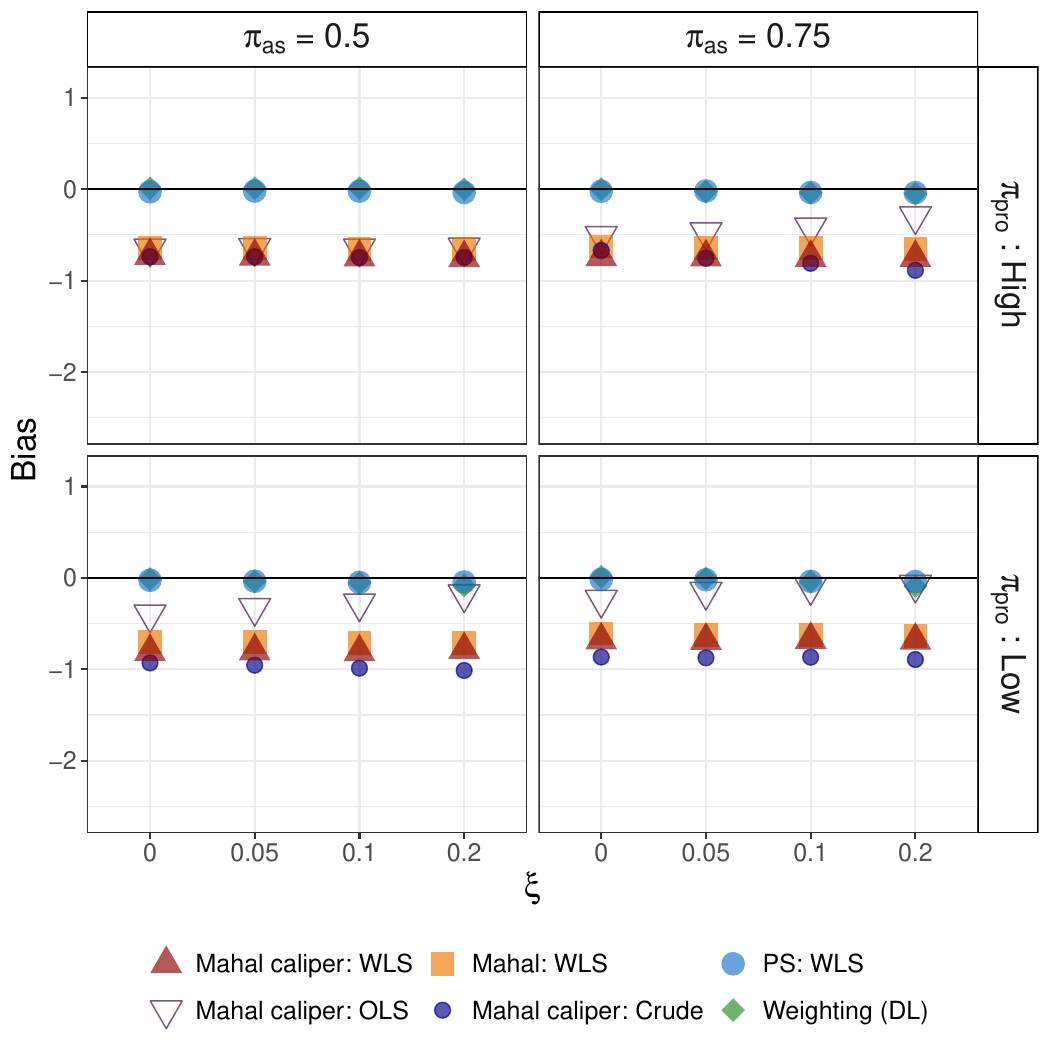}
\end{minipage}%
\begin{minipage}{.475\textwidth}
\centering
\includegraphics[scale=0.415]{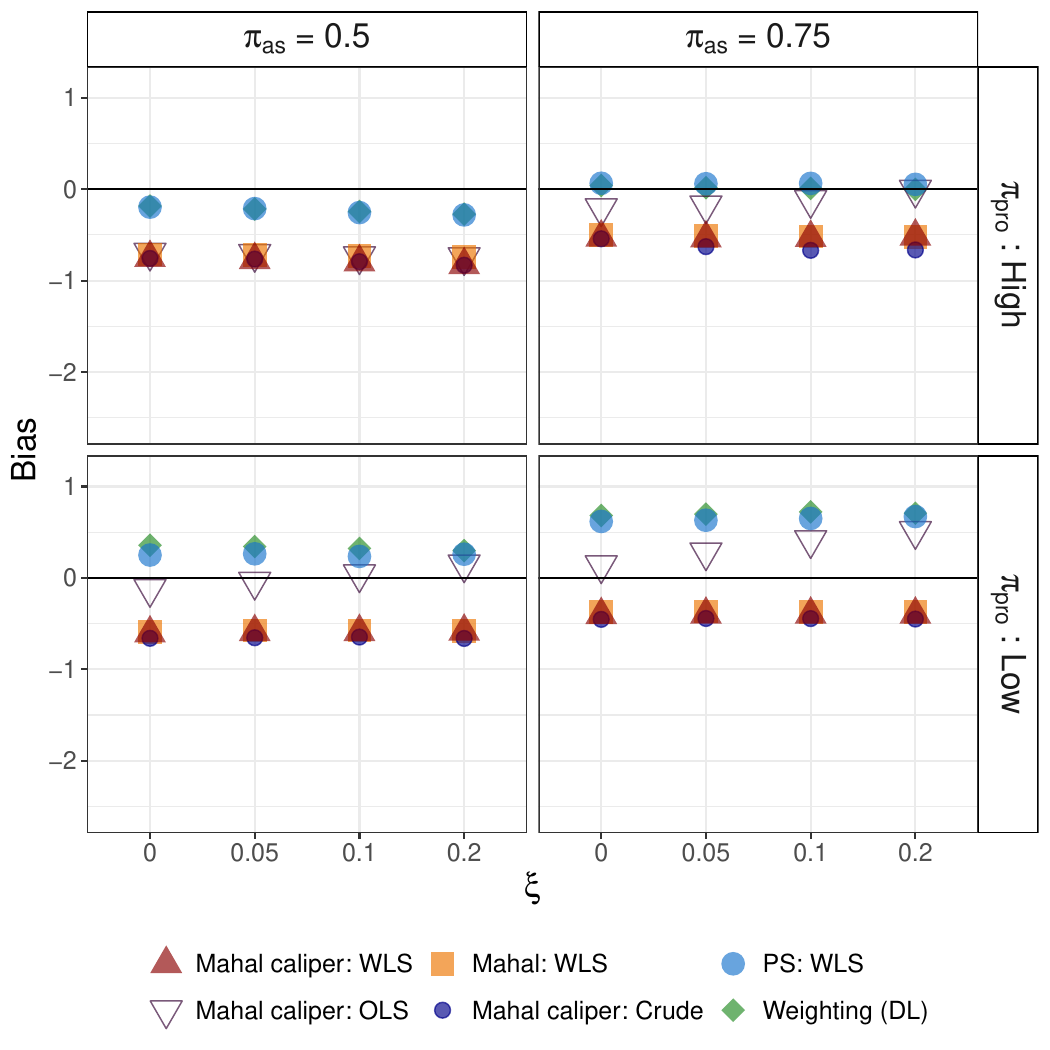}
\end{minipage}
\end{figure}

\begin{figure}
\centering
\caption{\footnotesize{Bias of different estimators, with $k=3, 5, 10$ covariates, under correctly specified (left panel) and misspecified (right panel) principal score model, under a correctly specified  outcome model. The principal strata were generated according to the multinomial logistic regression model. True outcome model included $A$-$\bX_0$ interactions. The true SACE parameter ranged between 3.36 - 3.92 for $k=3$, 4.86 - 5.99 for $k=5$, and 8.29 - 9.48 for $k=10$.
Matching was on the Mahalanobis distance with a caliper.
Matching:Crude - Crude mean difference;
Matching:Regression - weighted least squares with interactions;
Matching:Bias-Corrected - bias-corrected matching estimator \citep{abadie2011bias}.
Weighting - model-based weighting estimator of \cite{ding2017principal}.
\label{Fig:AppbiasS1withInterDGMmulti}}} 
\begin{minipage}{.45\textwidth}
\includegraphics[scale=0.4]{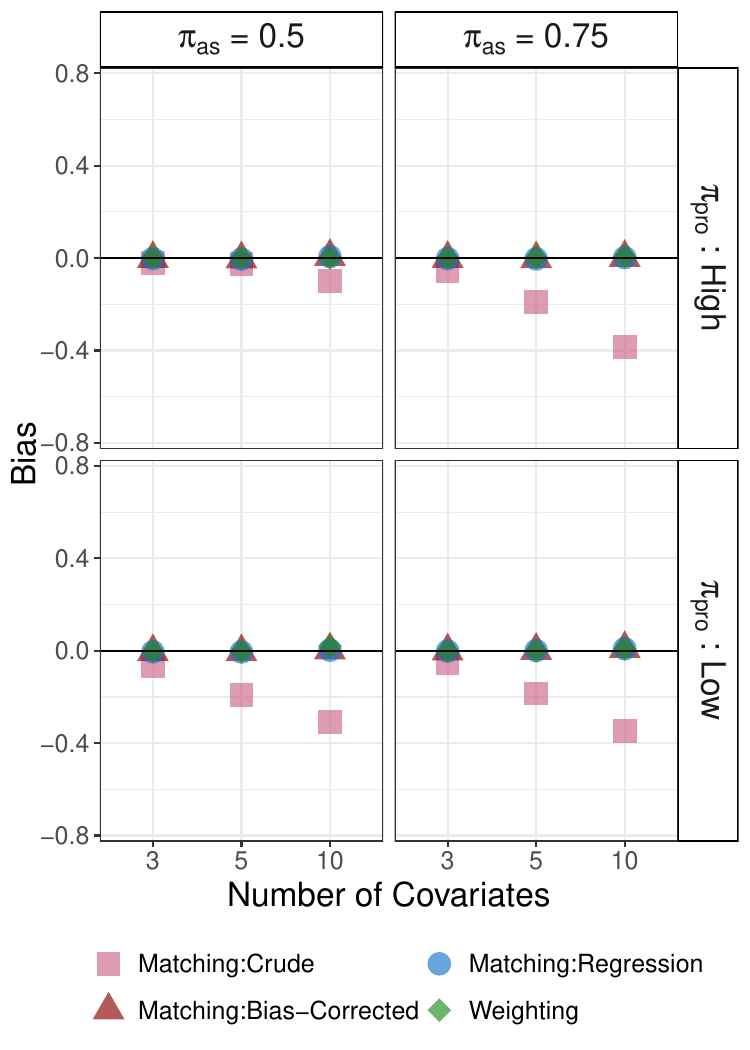}
\end{minipage}%
\begin{minipage}{.45\textwidth}
\centering
\includegraphics[scale=0.4]{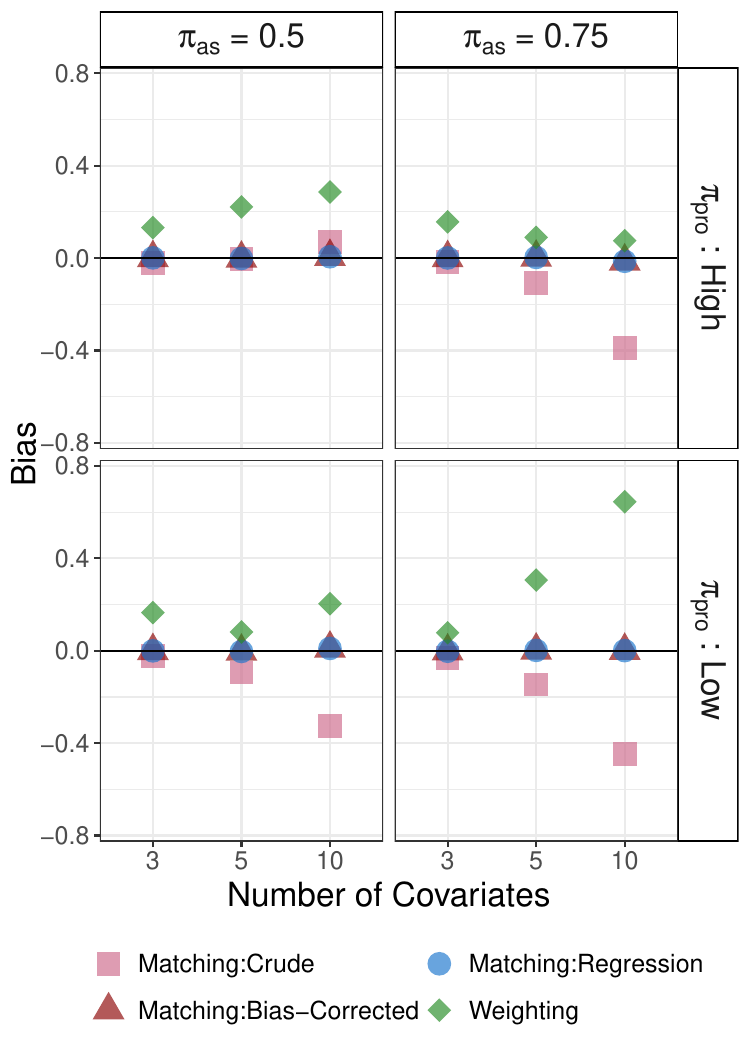}
\end{minipage}
\end{figure}

\clearpage
\newpage

\section{Data analysis}
\label{Sec:AppData}

This section presents additional details and results for the data analysis presented in Section 8 of the main text. The section includes the following information:

\begin{itemize}

\item Table
\ref{Tab:ObsTrtSurv}
provides the number and proportion of participants within the four values of $\{A=a,S=s\}$.

\item Table \ref{Tab:variables description} provides a description of the variables available from the NSW dataset.

\item Table \ref{Tab:App EM PS coeffs LL} gives the estimated  sequential logistic regression coefficients obtained by the EM algorithm applied to the NSW dataset.

\item Table \ref{Tab:balance_several_caliper} presents balance results under several caliper values.

\item Tables \ref{Tab:AppbalanceSeveralMeasures} and \ref{Tab:AppResultsSeveralMeasures} present the covariates balance and the SACE estimates the NSW dataset, using several choices of distance measures.

\item Figures \ref{Fig:SAAppByEstimator}  and \ref{Fig:SAAppByMeasure} depict the sensitivity curves  for different  estimators and distance measures. The results resemble the results from the primary analysis presented in Figure 3 of the main text. 
\end{itemize}
\clearpage
\newpage

\begin{table}
\caption{\label{Tab:ObsTrtSurv}Observed treatment and employment status in the NSW data}
\centering
\fbox{
\begin{tabular}{*{4}{c}}
& $S=0$ & $S=1$ & \em Total \\[0.4em]
\hline
$A=0$ & 129 (18\%) & 296 (41\%) & 425 (59\%) \\
$A=1$ & 67 (9\%) & 230 (32\%) & 297 (41\%) \\
Total & 196 (27\%) & 526 (73\%) & 722 (100\%) 
\end{tabular}}
\end{table}

\begin{table}
\caption{\label{Tab:variables description} Description of the variables (and their abbreviations) in the NSW dataset}
\centering
\begin{tabular}{|l|l|}
\hline
Variable & Description  \\[0.4em] 
\hline 
Age & Age of participant (in years) \\[0.4em] 
Education (educ) & Number of school years  \\[0.4em] 
Black (blk) & Equals one if race was considered to be black and zero otherwise  \\[0.4em] 
Hispanic (hisp) & Equals one if race was considered to be hispanic and zero otherwise \\[0.4em]
Married (mar) & Marital status: one if married, zero otherwise \\[0.4em] 
Nodegree (nodeg) & Equals one if participant did not hold school degrees and zero otherwise \\[0.4em] 
Re74 & Real earnings in thousands in 1974 (in 1982 dollars)   \\[0.4em]
Re75 & Real earnings in thousands in 1975 (in 1982 dollars) \\[0.4em] 
Emp74 & Employment status in 1974: one if employed in 1974, zero otherwise  \\[0.4em]
Emp75 & Employment status in 1975: one if employed in 1975, zero otherwise \\[0.4em] 
Emp78 (S) & Employment status: one if employed in 1978, zero otherwise \\[0.4em] 
Re78 (Y) & Real earnings in 1978 (in 1982 dollars) \\[0.4em] 
\hline
\end{tabular}
\end{table}

\begin{table}
\caption{\label{Tab:App EM PS coeffs LL} Principal score coefficients obtained from the EM algorithm.
Re75 - real earnings in thousands in 1975 (in 1982 dollars); 
Emp75 - employment status in 1975: one if employed in 1975, zero otherwise.}
\centering
\fbox{\begin{tabular}{|rrrrrrrr|}
 \hline
Coefficient & Intercept & Age & Black & Hispanic & Married & Re75 & Emp75  \\[0.05cm] 
 \hline
 \\[0.1cm] 
$\widehat{\bgamma}_{S(0)}$ & 1.91 & -0.02 & -0.81 & 0.29 & -0.22 & 0.06 & 0.10 \\  \\[0.2cm] 
$\widehat{\bgamma}_{S(1)}$ & -4.40 & 0.15 & -2.07 & 1.38 & 3.10 & 0.29 & -4.00 \\ 
\end{tabular}}
\end{table}

\begin{table}
\caption{\label{Tab:balance_several_caliper} Covariates balance in the full, employed and the matched dataset. For each dataset, means (SDs) for continuous covariates and frequencies (proportions) for discrete covariates are presented according to treatment status, along with standardized mean differences (SMDs) and standard deviation ratio between the treated and the untretaed (SDRs).
Matching was carried out using the Mahalanobis distance with a caliper on $\widehat{\widetilde{\pi}}^1_{as}(\bx_0)$. 
Caliper size (c) varies between $0.05$ to $0.5 \cdot SD$ of $\widehat{\widetilde{\pi}}^1_{as}(\bx_0)$.
Re75 - real earnings in thousands in 1975 (in 1982 dollars); 
Emp75 - employment status in 1975: one if employed in 1975, zero otherwise.}
\centering
\fbox{
\scriptsize
\begin{tabular}{l|cccccccc}

& Untreated & Treated & SMD & SDR & Untreated & Treated & SMD & SDR \\ \\
& \multicolumn{3}{c}{\textbf{Full}} & \multicolumn{3}{c}{\textbf{Employed}}  \\ 
 \hline
 \textbf{Continuous} & & & & & & & & \\ 
Age & 24.4 (6.6) & 24.6 (6.7) & 0.03 & 1.01 & 24.1 (6.6) & 24.6 (6.7) & 0.09 & 1.02 \\ 
Education & 10.2 (1.6) & 10.4 (1.8) & 0.12 & 1.12 & 10.2 (1.6) & 10.4 (1.9) & 0.13 & 1.16 \\ 
Re75 & 3 (5.2) & 3.1 (4.9) & 0.01 & 0.94 & 3.4 (5.7) & 3.3 (5.1) & -0.02 & 0.9 \\ 
 \hline
  \textbf{Discrete} & & & & & & & & \\ 
Black & 340 (80\%) & 238 (80\%) & 0 & 1 & 224 (76\%) & 176 (77\%) & 0.02 & 0.99 \\ 
Hispanic & 48 (11\%) & 28 (9\%) & -0.06 & 0.92 & 41 (14\%) & 25 (11\%) & -0.09 & 0.9 \\ 
Married & 67 (16\%) & 50 (17\%) & 0.03 & 1.03 & 45 (15\%) & 44 (19\%) & 0.11 & 1.1 \\ 
Nodegree & 346 (81\%) & 217 (73\%) & -0.21 & 1.14 & 238 (80\%) & 167 (73\%) & -0.2 & 1.12 \\ 
Emp75 & 247 (58\%) & 186 (63\%) & 0.09 & 0.98 & 182 (61\%) & 150 (65\%) & 0.08 & 0.98 \\ \\ 

& \multicolumn{3}{c}{\textbf{Matched ($0.05 \cdot SD$)}} & \multicolumn{3}{c}{\textbf{Matched ($0.1 \cdot SD$)}} \\
 \hline
 \textbf{Continuous} & & & & & & & & \\ 
Age & 23.5 (6) & 23.5 (5.4) & 0 & 0.91 & 23.7 (6.1) & 23.8 (5.6) & 0.01 & 0.91 \\ 
Education & 10.2 (1.6) & 10.4 (1.4) & 0.11 & 0.87 & 10.2 (1.6) & 10.2 (1.6) & -0.01 & 0.96 \\ 
Re75 & 3.5 (5.7) & 2.4 (3.4) & -0.18 & 0.59 & 3.5 (5.8) & 2.6 (4.2) & -0.15 & 0.72 \\  
   \hline
  \textbf{Discrete} & & & & & & & & \\ 
 Black & 211 (75\%) & 212 (75\%) & 0.01 & 1 & 217 (75\%) & 218 (75\%) & 0.01 & 1 \\ 
  Hispanic & 39 (14\%) & 41 (15\%) & 0.02 & 1.02 & 41 (14\%) & 39 (13\%) & -0.02 & 0.98 \\ 
  Married & 34 (12\%) & 47 (17\%) & 0.14 & 1.14 & 40 (14\%) & 47 (16\%) & 0.07 & 1.07 \\ 
  Nodegree & 226 (80\%) & 226 (80\%) & 0 & 1 & 234 (81\%) & 237 (82\%) & 0.03 & 0.98 \\ 
  Emp75 & 179 (64\%) & 181 (64\%) & 0.01 & 1 & 181 (63\%) & 178 (62\%) & -0.02 & 1.01 \\ \\
  
  & \multicolumn{3}{c}{\textbf{Matched ($0.15 \cdot SD$)}} & \multicolumn{3}{c}{\textbf{Matched ($0.2 \cdot SD$)}}  \\
 \hline
 \textbf{Continuous} & & & & & & & & \\ 
Age & 23.9 (6.5) & 23.9 (5.8) & -0.01 & 0.89 & 24 (6.6) & 24 (6) & 0.01 & 0.91 \\ 
  Education & 10.2 (1.6) & 10.2 (1.5) & 0.03 & 0.95 & 10.2 (1.6) & 10.2 (1.5) & 0.03 & 0.92 \\ 
Re75 & 3.4 (5.7) & 3 (4.9) & -0.09 & 0.85 & 3.4 (5.7) & 3 (5.2) & -0.08 & 0.9 \\  \\ 
   \hline
  \textbf{Discrete} & & & & & & & & \\ 
  Black & 221 (75\%) & 214 (73\%) & -0.06 & 1.03 & 222 (76\%) & 218 (74\%) & -0.03 & 1.02 \\ 
  Hispanic & 41 (14\%) & 41 (14\%) & 0 & 1 & 41 (14\%) & 33 (11\%) & -0.08 & 0.91 \\ 
  Married & 42 (14\%) & 40 (14\%) & -0.02 & 0.98 & 43 (15\%) & 44 (15\%) & 0.01 & 1.01 \\ 
  Nodegree & 237 (81\%) & 241 (82\%) & 0.03 & 0.97 & 237 (81\%) & 240 (82\%) & 0.03 & 0.98 \\ 
 Emp75 & 182 (62\%) & 174 (59\%) & -0.06 & 1.01 & 182 (62\%) & 176 (60\%) & -0.04 & 1.01 \\ \\
  
  & \multicolumn{3}{c}{\textbf{Matched ($0.25 \cdot SD$)}} & \multicolumn{3}{c}{\textbf{Matched ($0.3 \cdot SD$)}}  \\
 \hline
 \textbf{Continuous} & & & & & & & & \\ 
Age & 24.1 (6.6) & 24.2 (6.4) & 0.02 & 0.98 & 24.1 (6.6) & 23.8 (6.1) & -0.04 & 0.93 \\ 
Education & 10.2 (1.6) & 10.2 (1.6) & 0 & 0.98 & 10.2 (1.6) & 10.2 (1.5) & 0.02 & 0.94 \\ 
Re75 & 3.4 (5.7) & 3 (5.1) & -0.08 & 0.89 & 3.4 (5.7) & 3 (5.1) & -0.06 & 0.88 \\ 
\hline
\textbf{Discrete} & & & & & & & & \\ 
Black & 224 (76\%) & 220 (74\%) & -0.03 & 1.02 & 224 (76\%) & 216 (73\%) & -0.06 & 1.04 \\ 
Hispanic & 41 (14\%) & 34 (11\%) & -0.07 & 0.92 & 41 (14\%) & 39 (13\%) & -0.02 & 0.98 \\ 
Married & 45 (15\%) & 46 (16\%) & 0.01 & 1.01 & 45 (15\%) & 53 (18\%) & 0.08 & 1.07 \\ 
Nodegree & 238 (80\%) & 241 (81\%) & 0.03 & 0.98 & 238 (80\%) & 239 (81\%) & 0.01 & 0.99 \\ 
Emp75 & 182 (61\%) & 178 (60\%) & -0.03 & 1.01 & 182 (61\%) & 179 (60\%) & -0.02 & 1 \\  \\
& \multicolumn{3}{c}{\textbf{Matched ($0.35 \cdot SD$)}} & \multicolumn{3}{c}{\textbf{Matched ($0.4 \cdot SD$)}}  \\ 
 \hline
 \textbf{Continuous} & & & & & & & & \\ 
Age & 24.1 (6.6) & 23.9 (6.1) & -0.03 & 0.93 & 24.1 (6.6) & 23.9 (6.1) & -0.02 & 0.93 \\ 
Education & 10.2 (1.6) & 10.2 (1.5) & 0.02 & 0.93 & 10.2 (1.6) & 10.2 (1.5) & 0 & 0.94 \\ 
Re75 & 3.4 (5.7) & 3.1 (5.1) & -0.06 & 0.88 & 3.4 (5.7) & 3.1 (5.2) & -0.05 & 0.91 \\ 
\hline
\textbf{Discrete} & & & & & & & & \\ 
Black & 224 (76\%) & 218 (74\%) & -0.05 & 1.03 & 224 (76\%) & 221 (75\%) & -0.02 & 1.01 \\ 
Hispanic & 41 (14\%) & 38 (13\%) & -0.03 & 0.97 & 41 (14\%) & 36 (12\%) & -0.05 & 0.95 \\ 
Married & 45 (15\%) & 55 (19\%) & 0.09 & 1.08 & 45 (15\%) & 54 (18\%) & 0.08 & 1.08 \\ 
Nodegree & 238 (80\%) & 240 (81\%) & 0.02 & 0.99 & 238 (80\%) & 241 (81\%) & 0.03 & 0.98 \\ 
Emp75 & 182 (61\%) & 180 (61\%) & -0.01 & 1 & 182 (61\%) & 182 (61\%) & 0 & 1 \\
& & & & & & & & \\ 
 & \multicolumn{3}{c}{\textbf{Matched ($0.45 \cdot SD$)}} & \multicolumn{3}{c}{\textbf{Matched ($0.5 \cdot SD$)}}  \\ 
 \hline
 \textbf{Continuous} & & & & & & & & \\ 
Age & 24.1 (6.6) & 23.9 (6.1) & -0.03 & 0.92 & 24.1 (6.6) & 23.8 (6.1) & -0.04 & 0.92 \\ 
Education & 10.2 (1.6) & 10.2 (1.6) & 0 & 0.96 & 10.2 (1.6) & 10.2 (1.6) & -0.02 & 0.96 \\ 
Re75 & 3.4 (5.7) & 3.1 (5.2) & -0.05 & 0.91 & 3.4 (5.7) & 3.1 (5.2) & -0.05 & 0.91 \\
   \hline
  \textbf{Discrete} & & & & & & & & \\
  Black & 224 (76\%) & 221 (75\%) & -0.02 & 1.01 & 224 (76\%) & 222 (75\%) & -0.02 & 1.01 \\ 
  Hispanic & 41 (14\%) & 37 (12\%) & -0.04 & 0.96 & 41 (14\%) & 36 (12\%) & -0.05 & 0.95 \\ 
  Married & 45 (15\%) & 55 (19\%) & 0.09 & 1.08 & 45 (15\%) & 55 (19\%) & 0.09 & 1.08 \\ 
  Nodegree & 238 (80\%) & 242 (82\%) & 0.03 & 0.97 & 238 (80\%) & 243 (82\%) & 0.04 & 0.97 \\ 
  Emp75 & 182 (61\%) & 181 (61\%) & -0.01 & 1 & 182 (61\%) & 181 (61\%) & -0.01 & 1 \\ 
\end{tabular}}
\end{table}

\begin{table}
\caption{\label{Tab:AppbalanceSeveralMeasures} Covariates balance in matched datasets, using three distance measures: \newline  $\widehat{\widetilde{\pi}}^1_{as}(\bx_0)$, Mahalanobis distance
and Mahalanobis distance with a caliper on $\widehat{\widetilde{\pi}}^1_{as}(\bx_0)$ ($c=0.4 \cdot SD$), \newline
with and without replacement.
For each dataset, means (SDs) for continuous covariates and frequencies (proportions) for discrete covariates are presented according to treatment status, along with standardized mean differences (SMDs) and standard deviation ratio between the treated and the untretaed (SDRs). Re75 - real earnings in thousands in 1975 (in 1982 dollars); 
Emp75 - employment status in 1975: one if employed in 1975, zero otherwise.} 
\centering
\fbox{
\scriptsize
\begin{tabular}{c|c@{\hspace{1\tabcolsep}}c@{\hspace{1\tabcolsep}}c@{\hspace{1\tabcolsep}}c@{\hspace{2\tabcolsep}}c@{\hspace{1\tabcolsep}}c@{\hspace{1\tabcolsep}}c@{\hspace{1\tabcolsep}}c@{\hspace{2\tabcolsep}}c@{\hspace{1\tabcolsep}}c@{\hspace{1\tabcolsep}}c@{\hspace{1\tabcolsep}}c}
\\ [0.025cm]
 & \multicolumn{4}{c}{$\widehat{\widetilde{\pi}}^1_{as}(\bx_0)$} & \multicolumn{4}{c}{Mahalanobis} 
 & \multicolumn{4}{c}{Mahalanobis with $\widehat{\widetilde{\pi}}^1_{as}(\bx_0)$ caliper} \\ [0.1cm]
& Untreated & Treated & SMD & SDR & Untreated & Treated & SMD & SDR & Untreated & Treated & SMD & SDR \\ 
 \hline \\ [0.025cm]
& \multicolumn{12}{c}{\textbf{With replacement}} \\ [0.15cm]
N & 296 & 296 & &  & 296 & 296 & & & 296 & 296 & & \\ [0.15cm] 
\textbf{Continuous} & & & & & & & & & & & & \\[0.1cm]
Age & 24.1 (6.6) & 23.7 (6) & -0.05 & 0.9 & 24.1 (6.6) & 24 (6.2) & -0.01 & 0.94 & 24.1 (6.6) & 23.9 (6.2) & -0.02 & 0.93 \\ 
Education & 10.2 (1.6) & 10.3 (1.8) & 0.1 & 1.08 & 10.2 (1.6) & 10.2 (1.5) & 0 & 0.95 & 10.2 (1.6) & 10.2 (1.5) & 0 & 0.94 \\ 
Re75 & 3.4 (5.7) & 2.5 (3.7) & -0.16 & 0.65 & 3.4 (5.7) & 3.2 (5.3) & -0.03 & 0.93 & 3.4 (5.7) & 3.1 (5.2) & -0.05 & 0.91 \\[0.1cm]
\textbf{Discrete} & & & & & & & & & \\[0.1cm] 
Black & 224 (76\%) & 216 (73\%) & -0.06 & 1.04 & 224 (76\%) & 228 (77\%) & 0.03 & 0.98 & 224 (76\%) & 221 (75\%) & -0.02 & 1.01 \\ 
Hispanic & 41 (14\%) & 47 (16\%) & 0.06 & 1.06 & 41 (14\%) & 42 (14\%) & 0.01 & 1.01 & 41 (14\%) & 36 (12\%) & -0.05 & 0.95 \\ 
Married & 45 (15\%) & 46 (16\%) & 0.01 & 1.01 & 45 (15\%) & 52 (18\%) & 0.07 & 1.06 & 45 (15\%) & 54 (18\%) & 0.08 & 1.08 \\ 
Nodegree & 238 (80\%) & 223 (75\%) & -0.13 & 1.09 & 238 (80\%) & 241 (81\%) & 0.03 & 0.98 & 238 (80\%) & 241 (81\%) & 0.03 & 0.98 \\
Emp75 & 182 (61\%) & 174 (59\%) & -0.06 & 1.01 & 182 (61\%) & 178 (60\%) & -0.03 & 1.01 & 182 (61\%) & 182 (61\%) & 0 & 1 \\  &\\[0.1cm]
\hline \\ [0.025cm]

& \multicolumn{12}{c}{\textbf{Without replacement}} \\ [0.1cm]
\textbf{Continuous} & & & & & & & & & & & & \\[0.1cm]
N & 230 & 230 & &  & 230 & 230 & & & 223 & 223 & &\\ 
Age & 25.3 (6.9) & 24.6 (6.7) & -0.1 & 0.98 & 25.3 (6.9) & 24.6 (6.7) & -0.1 & 0.98 & 25 (6.6) & 24.1 (6.1) & -0.14 & 0.93 \\
Education & 10.5 (1.6) & 10.4 (1.9) & -0.06 & 1.2 & 10.5 (1.6) & 10.4 (1.9) & -0.06 & 1.2 & 10.5 (1.6) & 10.4 (1.8) & -0.07 & 1.13 \\ 
Re75 & 4.1 (6.3) & 3.3 (5.1) & -0.12 & 0.82 & 4.1 (6.3) & 3.3 (5.1) & -0.12 & 0.82 & 4.2 (6.3) & 3.3 (5.1) & -0.14 & 0.81  \\[0.1cm]
\textbf{Discrete} & & & & & & & & & \\[0.1cm] 
Black & 170 (74\%) & 176 (77\%) & 0.06 & 0.97 & 170 (74\%) & 176 (77\%) & 0.06 & 0.97 & 163 (73\%) & 171 (77\%) & 0.08 & 0.95 \\ 
Hispanic & 29 (13\%) & 25 (11\%) & -0.05 & 0.94 & 29 (13\%) & 25 (11\%) & -0.05 & 0.94 & 29 (13\%) & 25 (11\%) & -0.05 & 0.94 \\ 
Married & 45 (20\%) & 44 (19\%) & -0.01 & 0.99 & 45 (20\%) & 44 (19\%) & -0.01 & 0.99 & 45 (20\%) & 40 (18\%) & -0.06 & 0.96 \\ 
Nodegree & 172 (75\%) & 167 (73\%) & -0.05 & 1.03 & 172 (75\%) & 167 (73\%) & -0.05 & 1.03 & 165 (74\%) & 163 (73\%) & -0.02 & 1.01 \\ 
Emp75 & 152 (66\%) & 150 (65\%) & -0.02 & 1.01 & 152 (66\%) & 150 (65\%) & -0.02 & 1.01 & 151 (68\%) & 148 (66\%) & -0.03 & 1.01 \\[0.1cm]
\end{tabular}}
\end{table}

\begin{table}
\caption{\label{Tab:AppResultsSeveralMeasures} Estimated SACE (CI95\%) using matching under different distance measures and different estimators.} 
\centering
\fbox{
\scriptsize
\begin{tabular}{lcccc}
Distance measure/Estimator & Crude & Regression & Regression interactions & BC  \\[0.1cm] 
\hline
\\[0.05cm]
 Mahalanobis & 312 (-729, 1353) & 319 (-1026, 1663) & 232 (-1080, 1545) & 329 (-1146, 1804) \\ 
\\[0.25cm]
$\widehat{\widetilde{\pi}}^1_{as}(\bx_0)$ & 506 (-694, 1706) & 642 (-1257, 2542) & 467 (-1439, 2373) & 231 (-1048, 1510)  \\ [0.25cm] 
Mahalanobis with \\ $\widehat{\widetilde{\pi}}^1_{as}(\bx_0)$ caliper & 59 (-977, 1094) & 114 (-1226, 1453) & 55 (-1269, 1379) & 68 (-1376, 1512) \\ \\[0.1cm]
\end{tabular}}
\end{table}

\begin{figure}
\centering
\small\caption{Sensitivity analyses for the SACE estimation in the NSW data, using several estimators. The left panel presents the SACE estimate as a function of $\alpha_1$, without assuming PPI. The right panel presents SACE estimate as a function of $\xi$ and $\alpha_0$ without assuming  monotonicity.
\label{Fig:SAAppByEstimator}
}
\begin{minipage}{.5\textwidth}
\includegraphics[scale=0.32]{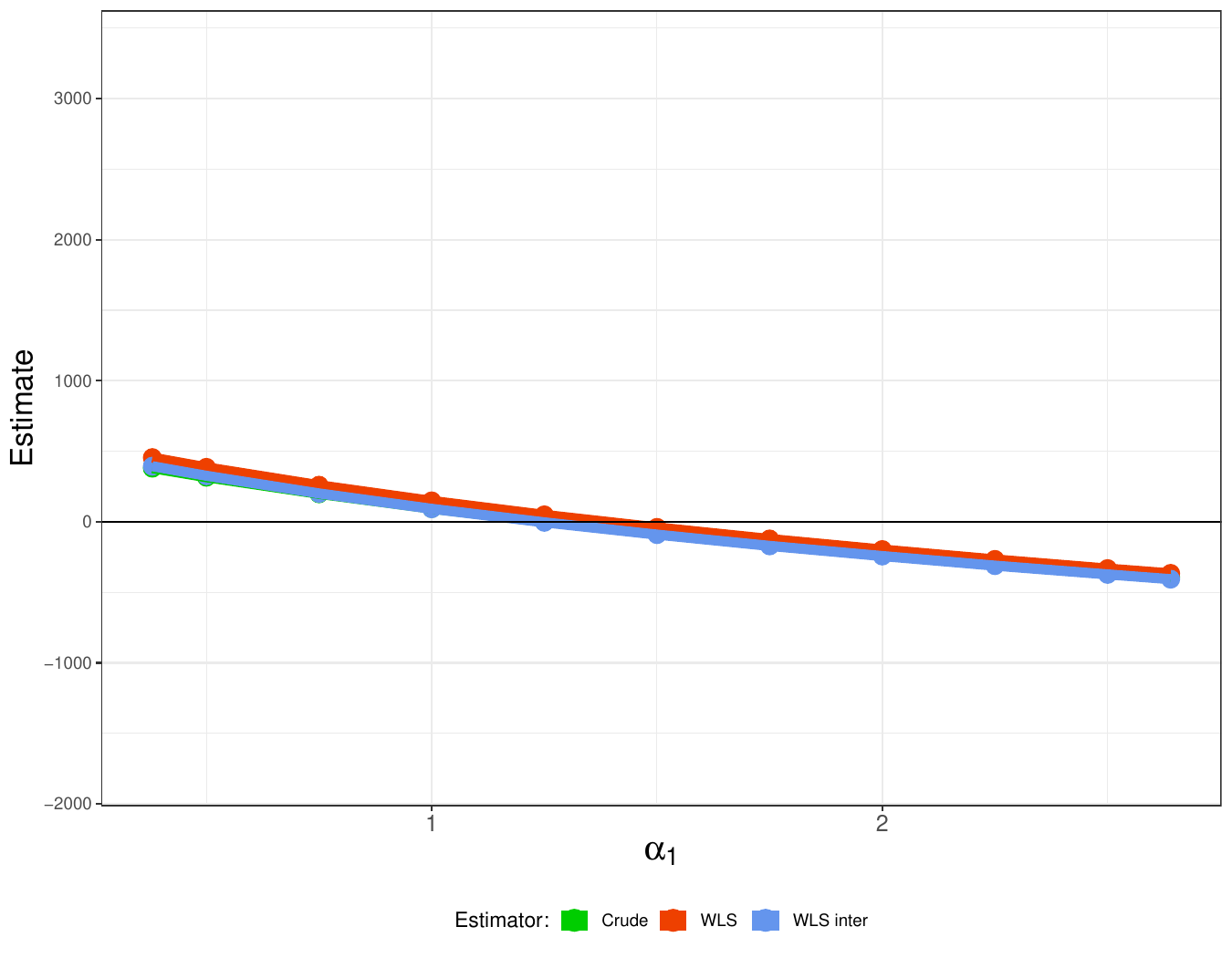}
\end{minipage}%
\begin{minipage}{.5\textwidth}
\centering
\includegraphics[scale=0.32]{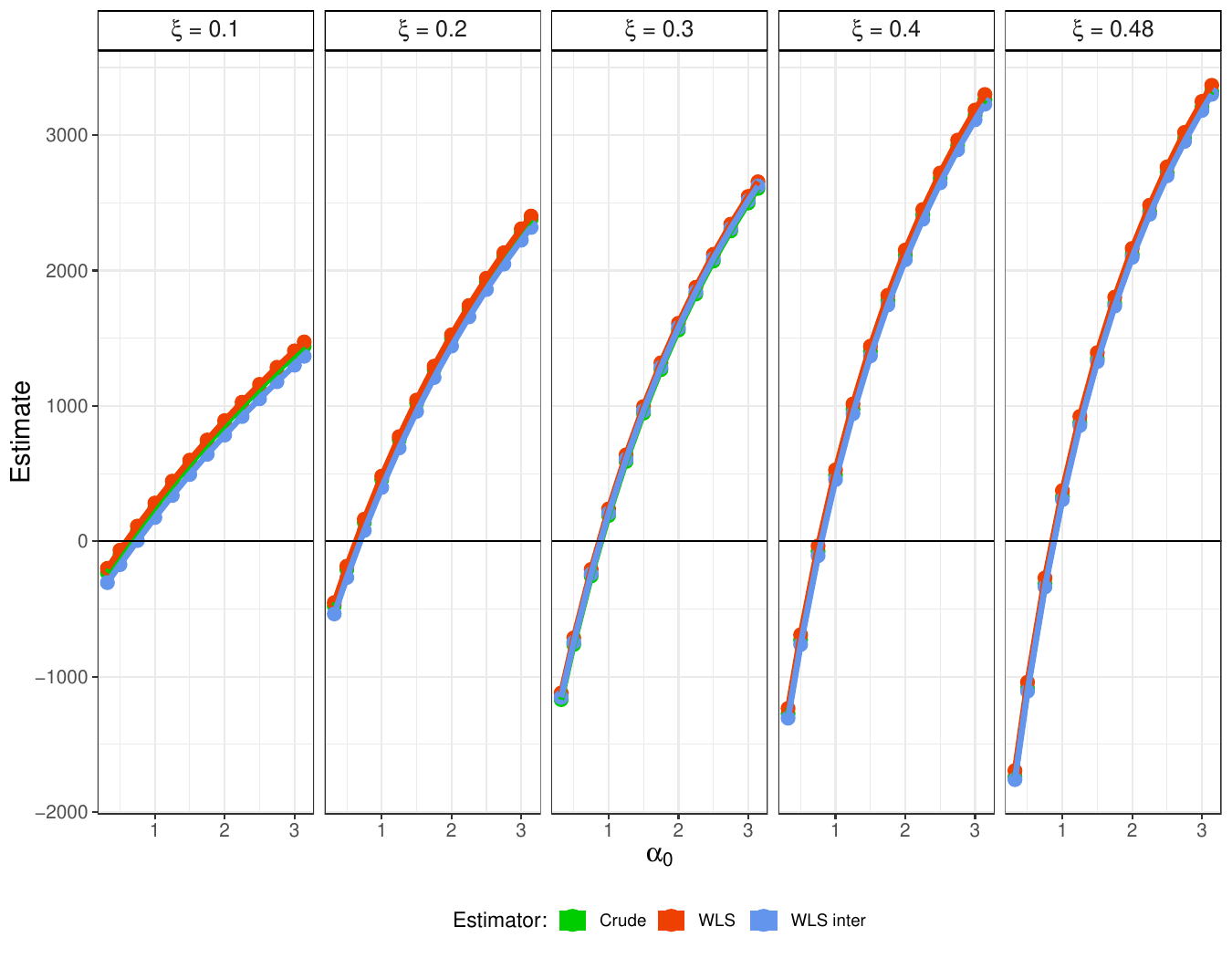}
\end{minipage}
\end{figure}

\begin{figure}
\centering
\small\caption{Sensitivity analyses for the SACE estimation in the NSW data, using several distance measures. The left panel presents the SACE estimate as a function of $\alpha_1$, without assuming PPI. The right panel presents SACE estimate as a function of $\xi$ and $\alpha_0$ without assuming  monotonicity.
\label{Fig:SAAppByMeasure}
}
\begin{minipage}{.5\textwidth}
\includegraphics[scale=0.32]{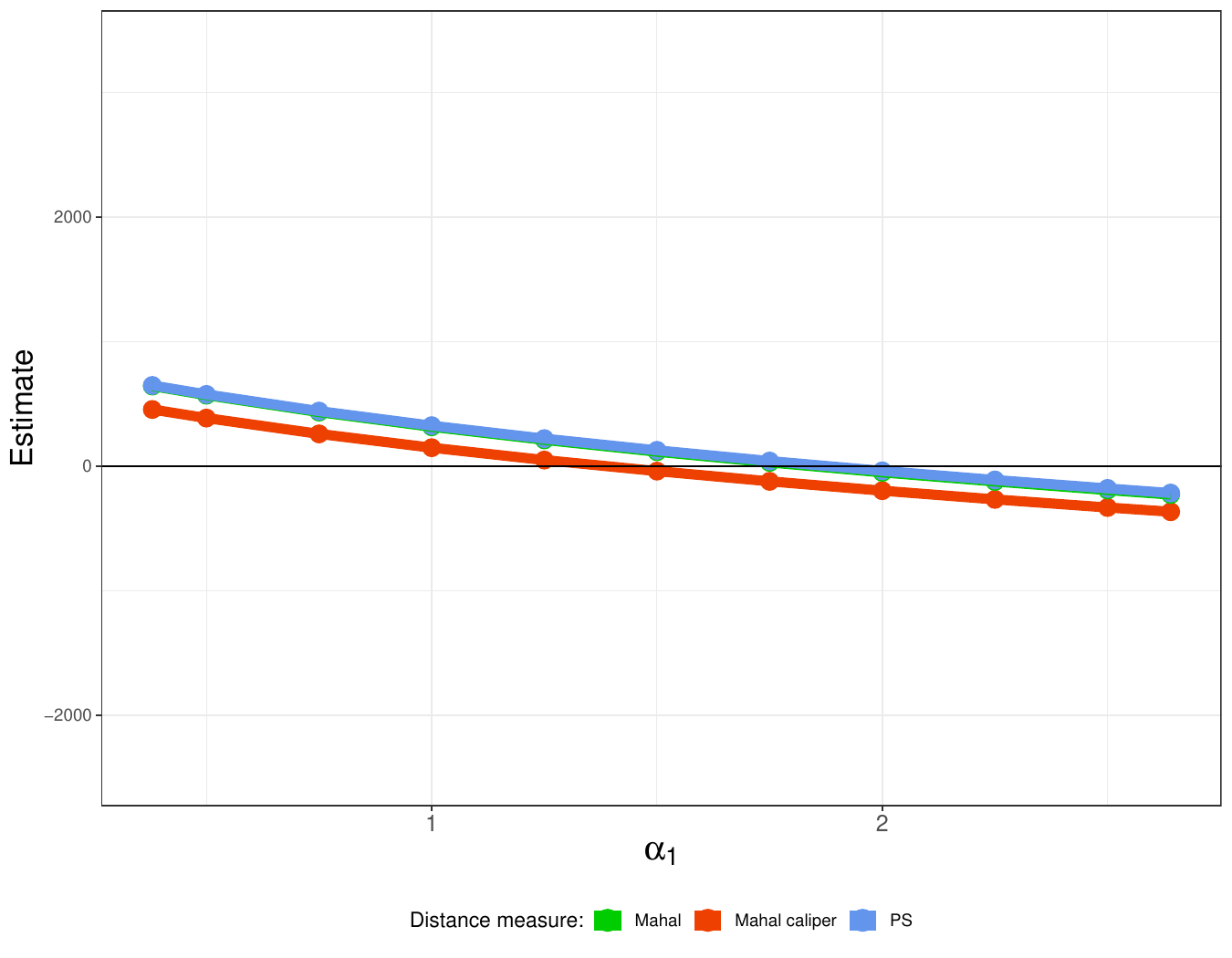}
\end{minipage}%
\begin{minipage}{.5\textwidth}
\centering
\includegraphics[scale=0.32]{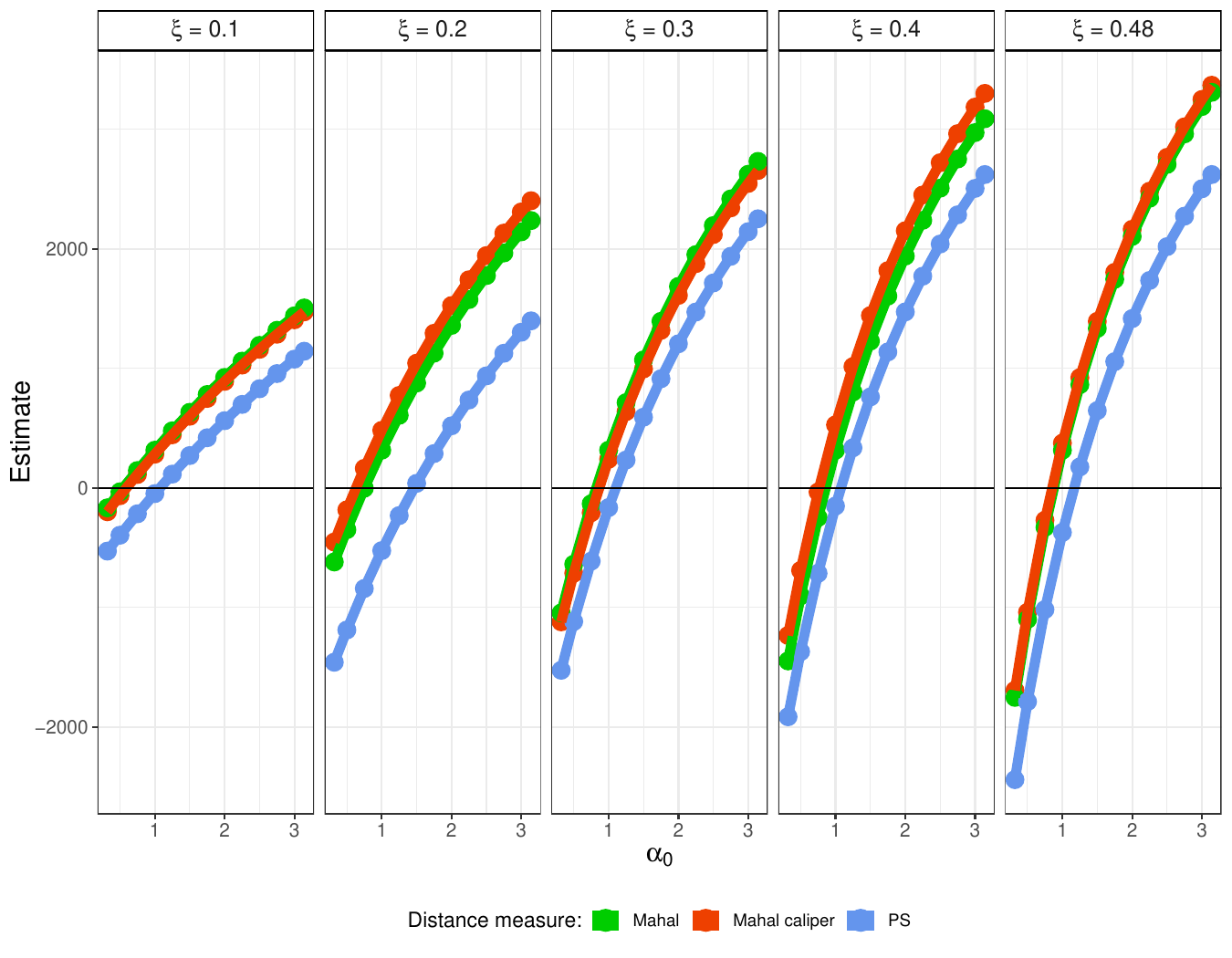}
\end{minipage}
\end{figure}

\end{document}